\documentclass[journal]{IEEEtran}

\usepackage{amssymb}
\usepackage{amsmath}
\usepackage{amsfonts}
\usepackage{multirow}
\usepackage{graphicx}
\usepackage{textcomp}
\usepackage{amsthm}

\usepackage{setspace}
\usepackage{subfigure}
\usepackage{algorithmic}
\usepackage[ruled,vlined,boxed,linesnumbered]{algorithm2e}
\usepackage{color}

\ifCLASSINFOpdf
  % \usepackage[pdftex]{graphicx}
  % declare the path(s) where your graphic files are
  % \graphicspath{{../pdf/}{../jpeg/}}
  % and their extensions so you won't have to specify these with
  % every instance of \includegraphics
  % \DeclareGraphicsExtensions{.pdf,.jpeg,.png}
\else
  % or other class option (dvipsone, dvipdf, if not using dvips). graphicx
  % will default to the driver specified in the system graphics.cfg if no
  % driver is specified.
  % \usepackage[dvips]{graphicx}
  % declare the path(s) where your graphic files are
  % \graphicspath{{../eps/}}
  % and their extensions so you won't have to specify these with
  % every instance of \includegraphics
  % \DeclareGraphicsExtensions{.eps}
\fi

\newtheorem{thm}{Theorem}
\newtheorem{lem}{Lemma}

\newtheorem{defn}{Definition}

\newtheorem{exmp}{Example}
\newtheorem{rem}{Remark}

\hyphenation{op-tical net-works semi-conduc-tor}

\begin{document}

\title{Composite Community-Aware Diversified Influence Maximization with Efficient Approximation}

\author{Jianxiong Guo,~\IEEEmembership{Member,~IEEE},
	Qiufen Ni,
	Weili Wu,~\IEEEmembership{Senor Member,~IEEE},
	and Ding-Zhu Du
	\thanks{Jianxiong Guo is with the Advanced Institute of Natural Sciences, Beijing Normal University, Zhuhai 519087, China, and also with the Guangdong Key Lab of AI and Multi-Modal Data Processing, BNU-HKBU United International College, Zhuhai 519087, China. (E-mail: jianxiongguo@bnu.edu.cn)
	
	Qiufen Ni is with the School of Computers, Guangdong University of Technology, Guangzhou 510006, China. (E-mail: niqiufen@gdut.edu.cn)
		
	Weili Wu and Ding-Zhu Du are with the Department of Computer Science, Erik Jonsson School of Engineering and Computer Science, The University of Texas at Dallas, Richardson, TX 75080, USA. (E-mail: weiliwu@utdallas.edu; dzdu@utdallas.edu)
	
	\textit{(Corresponding author: Qiufen Ni)}
	}% <-this 
	\thanks{Manuscript received xxxx; revised xxxx.}}

\markboth{Journal of \LaTeX\ Class Files,~Vol.~xx, No.~xx, September~2022}%
{Shell \MakeLowercase{\textit{et al.}}: Bare Demo of IEEEtran.cls for IEEE Journals}

\maketitle

\begin{abstract}
	Influence Maximization (IM) is a famous topic in mobile networks and social computing, which aims at finding a small subset of users to maximize the influence spread through online information cascade. Recently, some careful researchers paid attention to diversity of information dissemination, especially community-aware diversity, and formulated the diversified IM problem. The diversity is ubiquitous in a lot of real-world applications, but they are all based on a given community structure. In social networks, we can form heterogeneous community structures for the same group of users according to different metrics. Therefore, how to quantify the diversity based on multiple community structures is an interesting question. In this paper, we propose the Composite Community-Aware Diversified IM (CC-DIM) problem, which aims at selecting a seed set to maximize the influence spread and the composite diversity over all possible community structures under consideration.
	
	To address the NP-hardness of CC-DIM problem, we adopt the technique of reverse influence sampling and design a random Generalized Reverse Reachable (G-RR) set to estimate the objective function. The composition of a random G-RR set is much more complex than the RR set used for the IM problem, which will lead to inefficiency of traditional sampling-based approximation algorithms. Because of this, we further propose a two-stage algorithm, Generalized HIST (G-HIST). It can not only return a $(1-1/e-\varepsilon)$ approximate solution with at least $(1-\delta)$ probability, but also improve the efficiency of sampling and ease the difficulty of searching by significantly reducing the average size of G-RR sets. Finally, we evaluate our G-HIST on real datasets against existing algorithms. The experimental results show the effectiveness of our proposed algorithm and its superiority over other baseline algorithms.
\end{abstract}

\begin{IEEEkeywords}
	Influence maximization, Social networks, Composite diversity, Reverse sampling, Approximation algorithm.
\end{IEEEkeywords}

\IEEEpeerreviewmaketitle

\section{Introduction}
\IEEEPARstart{O}{nline} social networks (OSNs) are connected by hundreds of millions of mobile devices through social media, which has become a popular platform for people to express their views, for companies to promote their products, and for governments to spread their policies. With the rapid development of mobile Internet, there are more than 1.52 billion users active daily on Facebook and 321 million users actively monthly on Twitter, which stimulates the study of the Influence Maximization (IM) problem. It selects a small subset of influential users (seed nodes) in social networks, convinces them to adopt one thing (product, service, or opinion), and utilizes the ``word-of-mouth'' effect to activate other users in social networks through online information cascade. Kempe \textit{et al.} \cite{kempe2003maximizing} formally defined the IM problem as a combinatorial optimization problem, which aims to select a size-$k$ seed set such that the expected number of activated nodes can be maximized in the social network. Subsequently, a series of variant optimization problems based on the IM problem for different real-world applications came into being, such as topic-aware IM \cite{chen2015online} \cite{tian2020deep}, time-aware IM \cite{tong2020time} \cite{guo2021adaptive}, location-aware IM \cite{li2014efficient} \cite{chen2020efficient}, and target-aware IM \cite{guo2019targeted} \cite{cai2020target}.

Both the IM problem and their variant optimization problems were only concerned about how to maximize the total number of activated nodes across the network. They did not care who was the activated node, thus the diversity of activated nodes did not attract enough attention in current research. For example, to consider community-aware diversity, users on social media from different communities usually represent different kinds of people, and their classification metrics can be age, gender, occupation, income, etc. For an organization to advertise its ideas or products, it usually hopes to have diverse followers from different communities, so as to increase its influence more effectively. Besides, the diversity of recommendations is also an important criterion to measure the quality of recommandation systems \cite{yin2019social} \cite{zhang2020personalized}. Thus, the diversity could benefit us as you should not put all your eggs in one basket. To best of our knowledge, only five literatures \cite{tang2014diversified} \cite{zhang2019diversifying} \cite{li2020community} \cite{wang2021efficient} \cite{zhang2021grain} considered the diversity in the IM problem. They all took the community-aware diversity as a specific example and based on the initial work \cite{tang2014diversified} that tried to maximize the expected value of the influence spread and influence diversity.

In the Community-Aware Diversified IM problem, it needs to partition a given social graph into communities in advance, and achieve the diversity in this community structure. A community structure is usually formulated according to a certain metric, which can effectively measure the distance between two nodes. If the matric used to partition a social graph is users' occupations, then the diversity here will be occupation-oriented. However, in real-world applications, there is often more than one metric. Sometimes, we want to consider the diversity based on several metrics at the same time. Let us first look at the following example.
\begin{exmp}\label{exmp1}
	For the government to spread their policies, it can select some influential celebrities to publicize, and spread the influence across the network. At this time, the government should not only allow as many listeners as possible to receive the message, but also consider the diversity, including age, occupation, income, social class, etc.
\end{exmp}
\noindent
Shown as Example \ref{exmp1}, a kind of community structure is obviously not enough, and the previous diversified IM problem cannot cover this scenario based on multiple metrics.

Therefore, we propose a Composite Community-Aware Diversified Influence Maximization (CC-DIM) problem in this paper, which perfectly matches multiple metrics of the community partition. For the CC-DIM problem, we first formulate different community structures according to the metrics we consider, and then select a size-$k$ seed set such that the weighted sum of the expected number of activated nodes and the expected diversity of activated nodes, where the expected diversity is the average over all possible community structures based on different metrics. Then, we prove that the CC-DIM problem is NP-hard, but its objective function is monotone and submodular. Unfortunately, similar to computing the influence spread in the IM problem \cite{chen2010scalable} \cite{chen2010scalable2}, it is \#P-hard as well to compute our objective function, even more difficult. If using the greedy hill-climbing algorithm with Monte Carlo simulations to estimate the objective function, then the computational cost will be unacceptable. In order to improve its computational efficiency, a lot of methods based on the technique of reverse influence sampling (RIS) \cite{borgs2014maximizing} have been utilized to solve the IM problem. Here, we design a novel sampling method, called random Generalized Reverse Reachable (G-RR) set, to unbiasedly estimate the objective function of CC-DIM problem. However, the processes of sampling and searching based on random G-RR sets are much more complex and time-consuming than that based on random RR sets for the IM problem because of the diversity in multiple community structures. Thus, based on random G-RR sets and probabilistic analysis, we further propose a two-stage algorithm, called Generalized HIST (G-HIST), which includes sentinel set selection and remaining set selection. It selects a small-size sentinel set by a small number of random G-RR sets in the first stage, and then utilizes the sentinel set to significantly reduce the average size of random G-RR sets in the second stage. Through detailed theoretical analysis and experimental verification, we prove that the memory consumption and running time will be greatly improved by the G-HIST because of its compressed sampling, and the approximation guarantee will not be affected.

The contributions can be summarized as follows.
\begin{itemize}
	\item To the best of our knowledge, we are the first to consider the diversity on multiple community structures according to different metrics, propose the CC-DIM problem, and prove its hardness, monotonicity, and submodularity.
	\item To tackle the intractability, we design a random G-RR set and the unbiased estimator of the objective function. Then, we propose G-HIST algorithm to further reduce the memory consumption and running time, which can return a $(1-1/e-\varepsilon)$ approximate solution with at least $(1-\delta)$ probability.
	\item We conduct intensive simulations based on real-world social datasets. By comparing our G-HIST with the-state-of-art baselines, the experimental results validate the effectiveness of our proposed sampling and algorithm in approximate performance and efficiency.
\end{itemize}

\textbf{Organization: }In Section \uppercase\expandafter{\romannumeral2}, we summarize the works related this paper. We then introduce our CC-DIM problem and its basic properties in Section \uppercase\expandafter{\romannumeral3}, and sampling techniques used to estimate in Section \uppercase\expandafter{\romannumeral4}. In Section \uppercase\expandafter{\romannumeral5} and \uppercase\expandafter{\romannumeral6}, we elaborate the G-HIST algorithm and its corresponding theoretical analysis. Experiments and discussions are presented in Section \uppercase\expandafter{\romannumeral7}, and finally, Section \uppercase\expandafter{\romannumeral8} concludes this paper.

\section{Related Works}
\textbf{Influence Maximization: }Kempe \textit{et al.} \cite{kempe2003maximizing} first formulated the IM problem and defined it as a combinatorial optimization problem. They proposed two classic diffusion models, Independent Cascade (IC) model and Linear Threshold (LT) model, and proved that the IM problem is NP-hard and the influence spread is monotone and submodular. Given a seed set, it is \#P-hard to compute the influence spread under the IC model \cite{chen2010scalable} and LT-model \cite{chen2010scalable2}. Thus, the greedy hill-climbing algorithm can return a $(1-1/e-\varepsilon)$ approximate solution with Monte Carlo simulations. Borgs \textit{et al.} \cite{borgs2014maximizing} first proposed the technique of RIS to reduce the running time, but they did not give a feasible algorithm and a strict theoretical argument. Subsequently, a plethora of research works focused on following the RIS to further improve efficiency, such as TIM/TIM+ \cite{tang2014influence}, IMM \cite{tang2015influence}, SSA/D-SSA \cite{nguyen2016stop}, and OPIM-C \cite{tang2018online}. Along this line, it could run in $O(k(n+m)\log n/\varepsilon^2)$ expected time and return a $(1-1/e-\varepsilon)$ approximate solution with at least $1-1/n$ probability. Recently, Guo \textit{et al.} \cite{guo2020influence} \cite{guo2022influence} proposed a Hit-and-Stop (HIST) algorithm to tackle the scalability issue in high influence networks by reducing the average size of random RR sets without losing the approximation guarantee. For our CC-DIM problem, we learn from the idea of HIST and formulate our G-HIST algorithm because the size of random G-RR sets will be much larger than that of random RR sets from existing multiple community structures. However, our G-HIST is not a trivial revision of HIST, and we extend to the variant optimization problems in social networks.

\textbf{Community Detection: }Finding communities is a basic task to study large networks. A great deal of researchers tried to mine the underlying community structure by different techniques, such as hierarchical clustering \cite{girvan2002community}, modularity maximization \cite{chen2014community}, statistical inference methods \cite{karrer2011stochastic}, and graph partitioning \cite{shi2000normalized}. However, we only care about the diversity of activated nodes which are usually divided according to users' attributes. Thus, these methods are not suitable for the diversity, but they can be used in other variant problems based on our proposed framework.

\textbf{Diversified Influence Maximization: }Based on the above-mentioned community structures in social networks, the diversity of influence spread has become the inherent demand of viral marketing, which is a typical application of IM problem. Tang \textit{et al.} \cite{tang2014diversified} first defined the diversified IM problem as a combinatorial optimization problem that aims to maximize the weighted sum of the influence spread and diversity, while designing an algorithm to solve it. Zhang \textit{et al.} \cite{zhang2019diversifying} adopted three commonly used utilities in economics to quantify the diversity over communities. Li \textit{et al.} \cite{li2020community} proposed a metric to measure the community-based diversified influence and designed two tree-based heuristic algorithms to reduce the computational cost. Wang \textit{et al.} \cite{wang2021efficient} utilized the IMM algorithm to effectively address the diversified IM problem by sampling with theoretical guarantee. Zhang \textit{et al.} \cite{zhang2021grain} exploited a Graph Neural Networks (GNNs) based method, GRAIN, to combine the diversified IM and greedy algorithm into a unified framework, which significantly improved the efficiency of data selection. However, there is no existing work that considers the multiplicity of diversity in practical applications, and the computational challenge of such a problem has not been effectively address. Solving these two problems is the main contribution of this paper.

\section{Problem Formulation}
In this section, we define our CC-DIM problem from the basic definitions of diffusion model, community, and IM.
\subsection{Preliminaries}
Let $G=(V,E)$ be a directed graph with a node set $V=\{v_1,v_2,\cdots,v_n\}$ and an edge set $E=\{e_1,e_2,\cdots,e_m\}$. In social networks, each node $v\in V$ represent a user and each edge $(u,v)\in E$ represents the relationship, e.g., friendship, between $u$ and $v$. For each edge $(u,v)\in E$, we say that $u$ is the in-neighbor of $v$ and $v$ is the out-neighbor of $u$. For each node $v\in V$, we denote by $N^-(v)$ the set of its in-neighbors and $N^+(v)$ the set of its out-neighbors.

In the information diffusion, we consider that a user is active if she accepts (is activated by) the information cascade from her in-neighobors or she is selected as a seed. The information cascade can be given by a predefined diffusion model, such as the Independent Cascade (IC) model \cite{kempe2003maximizing}. Given a seed set $S\subseteq V$, the IC model  is a discrete-time stochastic cascade process shown as follows: (1) At timestamp $0$, all nodes in $S$ are activated and other nodes in $V\backslash S$ are inactive, where a node keeps active once it is activated; (2) If a node $u$ is activated at timestamp $t$, it has one chance to activate its inactive out-neighbor $v$ with the probabiltiy $p_{uv}$ at timestamp $t+1$, after which it cannot activate any nodes; and (3) The information diffusion terminates when no more inactive nodes can be activated in the subsequent timestamp.

\subsection{Influence Maximization (IM)}
The traditional IM problem is to find a seed set $S\subseteq V$ such that its influence spread $\sigma(S)$, can be maximized, which is the expected number of active nodes after the diffusion terminates. To mathematically define the influence spread, we first introduce a concept called ``realization''. A realization $g=(V,E_g)$, $E_g\subseteq E$, is a subgraph sampled according to the diffusion model. For example, in IC model, each edge $(u,v)\in E$ will be independently contained in $E_g$ with the probability $p_{uv}$. An edge in $E_g$ is called ``live edge'' in realization $g$. Thus, the probability of realization $g$ sampled from $G$ under the IC model is $\Pr[g]=\prod_{e\in E_g}p_e\prod_{e\in E\backslash E_g}(1-p_e)$. Obviously, there are totally $2^m$ possible realizations. The influence cascade on a realization becomes deterministic instead of stochastic process. As a result, the influence spread across the network can be regarded as the expected spread on all possible realizations. Now, the IM problem can be written in an expectation form and formally defined as follows.

\begin{defn}[Influence Maximziation]
	Givne a social graph $G=(V,E)$, a diffusion model (IC model in this paper), and an budget $k$, the IM problem asks to find a seed set $S^\circ$, with at most $k$ nodes, that can maximize the expected influence spread across the graph, i.e.,
	\begin{align}
		S^\circ&\in\arg\max_{|S|\leq k}\sigma(S)\\
		&=\mathbb{E}_{g\sim\mathcal{G}}[|I_g(S)|]=\sum_{g\in\mathcal{G}}\Pr[g]\cdot |I_g(S)|,
	\end{align}
	where $\mathcal{G}$ is the collection of all possible realizations sampled from a given diffusion model and $I_g(S)$ is the node set that contains all nodes can be reached from a node in $S$ by the live edges in the realization $g$.
\end{defn}

Here, the influence function $\sigma$ is a set function. Given a set function $f : 2^V\rightarrow\mathbb{R}_+$ and any two subsets $S$ and $T$ with $S\subseteq T\subseteq V$, we say it is monotone if $f(S)\leq f(T)$ and submodular if $f(S\cup\{v\})-f(S)\geq f(T\cup\{v\})-f(T)$. The IM problem is NP-hard and the influence function is monotone and submodular under the IC model. Under the size constraint, the greedy hill-climbing algorithm of iteratively choosing the node with maximum marginal gain approximates the optimal solution within a factor of $(1-1/e)$ \cite{nemhauser1978analysis}. However, given a seed set $S$, it is \#P-hard to compute the $\sigma(S)$ under the IC model \cite{chen2010scalable}. Thus, the hill-climbing algorithm can only return a $(1-1/e-\varepsilon)$ approximation within the $\Omega(kmn\cdot poly(1/\varepsilon))$ running time through Monte Carlo simulations.

\subsection{Composite Community-Aware Diversified IM}
Generally, there are many communities in any given graph, and the community structure is an essential characteristic of social networks. in this way, the users can be divided into different groups according to a certain metric, and their communication within the group is dense but sparse between groups. Given a social network $G=(V,E)$, we assume that it has a disjoint community structure $\mathcal{C}(G)$ associated with $G$, where $C(G)=\{C_1,C_2,\cdots,C_r\}$ is a partittion of $V$. That is $V=\cup_{i=1}^{r}C_i$ and for any $i,j\in\{1,2,\cdots,r\}$, we have $C_i\cap C_j=\emptyset$. However, when considering the diversified IM problem, the community structure can be partitioned based on different metrics, such as gender, age, race, interest, poor-rich disparity, and consuming behavior, for a variety of real applications. Thus, in the same optimization goal, community structure can be determined by different metrics. We denote by $Q$ the metric set under our consideration. Each element $q\in Q$ is a specific metric that can be used to partition the graph. In each metric $q\in Q$, we define the community structure based on the metric $q$ as $\mathcal{C}_q(G)=\{C^q_1,C^q_2,\cdots,C^q_{r_q}\}$, where the $r_q$ indicates the graph can be divided into $r_q$ communities under the matric $q$. Now, we can formally define the composite diversified function as follows.

\begin{defn}[Composite Diversified Function]
	Given a graph $G=(V,E)$ and a metric set $Q$, the composite diversified function of seed set $S$ is defined as:
	\begin{align}
		\phi(S)&=\sum_{q\in Q}w_q\cdot\psi(S; \mathcal{C}_q)\label{eq2}\\
		&=\sum_{q\in Q}w_q\cdot\sum_{C^q_j\in\mathcal{C}_q}\psi(S; C^q_j)\\
		&=\sum_{q\in Q}w_q\cdot\sum_{C^q_j\in\mathcal{C}_q}\sum_{g\in\mathcal{G}}\Pr[g]\cdot\psi_g(S; C^g_j),\label{eq5}
	\end{align}
	where $w_q$ is the weight of matric $q$ that can ensure $\sum_{q\in Q}w_q=1$, $\psi(S; \mathcal{C}_q)$ is a monotone and submodular set function with respect to $S$ that can quantify the diversity under the community structure generated by the metric $q$, and $\psi(S; C^q_j)$ is the utility under the community $C^q_j\in\mathcal{C}_q$.
\end{defn}

Here, the weight $w_q$ is the importance of diversity of metric $q$, thus the composite diversified function is a weighted average of the influence spread over all kinds of community structures based on multiple different metrics. In order to make the function $\psi(S; \mathcal{C}_q)$ be monotone and submodular with respect to $S$, we should ensure that the $\psi_g(S; C^q_j)$ be monotone and submodular with respect to $S$ because of its linear invariance. In the simplest way, we can let
\begin{equation}\label{psi}
	\psi_g(S; C^q_j)=h_{q,j}\left(\left|I_g(S)\cap C^q_j\right|\right),
\end{equation}
where $h_{q,j}:\mathbb{Z}_+\rightarrow\mathbb{R}_+$ can be any monotone and concave function with $h_{q,j}(0)=0$ since $\psi$ is monotone and submodular function if $h_{q,j}$ is monotone and concave function \cite{lin2011class}. In this paper, we assume that the diversified utility is linear, thus we have $h(x)=a_{q,j}\cdot x$ where $a_{q,j}>0$ is an adjustable coefficient. When we find the proportion of $|I_g(S)\cap C^q_j|/|C^q_j|$ is low, we can enlarge the coefficient $a_{q,j}$ for diversity promotion. This is an important trick to help us achieve the composite diversity. Then, we are enough to define the Composite Community-Aware Diversified Influence Maximization (CC-DIM) problem as follows.
\begin{defn}[CC-DIM]
	Given a graph $G=(V,E)$, a metric set $Q$, and a budget $k$, the CC-DIM problem aims to select a subset $S$ with $|S|\leq k$ that can maximize the following objective function:
	\begin{equation}\label{of}
		f(S)=(1-\lambda)\frac{\sigma(S)}{\sigma(V)}+\lambda\frac{\phi(S)}{\phi(V)},
	\end{equation}
	where $\lambda\in[0,1]$ is an adjustable parameter that can balance the influene spread and the community-aware diversity.
\end{defn}

Through taking the influence spread and diversity over community structures based on different metrics, we can make the IM problem more flexible to adapt to different applications. If caring more about the influence spread, we can make the parameter $\lambda$ approach to zero, and by adjusting the weight $w_q$, we can specify the importance of different ways of community partition. Furthermore, the CC-DIM problem remains at least the same hardness of addressing the IM problem \cite{kempe2003maximizing} and computing the objective function \cite{chen2010scalable}.
\begin{thm}
	The CC-DIM problem is NP-hard and given a seed set $S$, computing the objective function $f(S)$ defined in Eqn. (\ref{of}) is \#P-hard.
\end{thm}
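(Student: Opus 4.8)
The plan is to prove both hardness results by exhibiting the classical IM problem as a special case of CC-DIM, so that CC-DIM directly inherits the hardness already established for IM. Concretely, I would work with the degenerate parameter choice $\lambda=0$, for which the diversity term in Eqn. (\ref{of}) drops out (assuming, as one may always arrange, a community structure with $\phi(V)>0$ so that the ratio stays finite), leaving $f(S)=\sigma(S)/\sigma(V)$.

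For the NP-hardness claim, the key observation is that $\sigma(V)=n$: when $S=V$ every node is a seed, hence active in every realization, so the denominator is a fixed constant independent of $S$. Consequently, maximizing $f$ subject to $|S|\le k$ is identical to maximizing $\sigma$ subject to $|S|\le k$, up only to the known scaling factor $1/n$. Any IM instance is therefore a CC-DIM instance with $\lambda=0$, and since IM under the IC model is NP-hard \cite{kempe2003maximizing}, CC-DIM is NP-hard as well.

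For the \#P-hardness of computing $f(S)$, I would reuse the same degeneration. With $\lambda=0$ and $\sigma(V)=n$ known in advance, an exact evaluation of $f(S)$ yields $\sigma(S)=n\cdot f(S)$ in constant additional time, giving a polynomial-time Turing reduction from the computation of $\sigma(S)$ to that of $f(S)$. Since computing $\sigma(S)$ under the IC model is \#P-hard \cite{chen2010scalable}, computing $f(S)$ is \#P-hard.

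I do not expect a serious obstacle here, since no new combinatorial gadget is required; the genuinely hard content is already supplied by \cite{kempe2003maximizing} and \cite{chen2010scalable}. The only step demanding care is the well-definedness of the reduction: I must verify that the denominators $\sigma(V)$ and $\phi(V)$ are positive and efficiently computable, so that the $\lambda=0$ instance is a legitimate CC-DIM instance and the scaling constant $1/n$ is available for the back-substitution. Once that check is in place, both reductions are immediate, and the substance of the statement reduces to the observation that CC-DIM strictly generalizes IM along both the optimization and the evaluation axes.
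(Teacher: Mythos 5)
Your proposal is correct and follows essentially the same route as the paper's own proof: set $\lambda=0$ so that $f(S)$ collapses to $\sigma(S)/\sigma(V)$, and inherit NP-hardness of optimization from \cite{kempe2003maximizing} and \#P-hardness of evaluation from \cite{chen2010scalable}. Your additional care about the constant denominator $\sigma(V)=n$ and well-definedness of the reduction only makes explicit what the paper leaves implicit.
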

\begin{proof}
	When given the parameter $\lambda=0$, the CC-DIM problem can be reduced to the IM problem and the objective function $f(S)$ can be reduced to the influence function $\sigma(S)$, thus it inherits the NP-hardness of the IM problem and the \#P-hardness of computing the influence function.
\end{proof}
\begin{thm}
	The objective function $f(S)$ defined in Eqn. (\ref{of}) is monotone and submodular with respect to $S$.
\end{thm}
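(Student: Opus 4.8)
The plan is to build the result entirely out of the two closure properties of monotone submodular functions: nonnegative linear combinations preserve both monotonicity and submodularity. Reading the objective top-down, $f(S)$ is a nonnegative combination of $\sigma(S)$ and $\phi(S)$ with coefficients $(1-\lambda)/\sigma(V)\geq 0$ and $\lambda/\phi(V)\geq 0$; by Eqn.~(\ref{eq5}), $\phi(S)$ is itself a nonnegative combination $\sum_{q\in Q}w_q\sum_{C^q_j\in\mathcal{C}_q}\sum_{g\in\mathcal{G}}\Pr[g]\,\psi_g(S;C^q_j)$ with all weights $w_q,\Pr[g]\geq 0$. Since $\sigma(S)$ is already known to be monotone and submodular under the IC model, the whole theorem reduces to a single claim: each per-realization term $\psi_g(S;C^q_j)$ is monotone and submodular in $S$.

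First I would fix a realization $g$ and a community $C^q_j$ and treat the inner count $|I_g(S)\cap C^q_j|$. For each target $w\in C^q_j$, let $R^-_g(w)$ be the set of nodes from which $w$ is reachable along the live edges of $g$; then $w\in I_g(S)$ holds exactly when $S\cap R^-_g(w)\neq\emptyset$, so $|I_g(S)\cap C^q_j|=\sum_{w\in C^q_j}\mathbf{1}\{S\cap R^-_g(w)\neq\emptyset\}$. Each summand is a coverage indicator in $S$, which is monotone and submodular, and the sum over $w$ inherits both properties. Composing with $h_{q,j}$, which is monotone, concave, and satisfies $h_{q,j}(0)=0$, keeps the function monotone and submodular by the composition rule of \cite{lin2011class} (and trivially so in the assumed linear case $h_{q,j}(x)=a_{q,j}x$ with $a_{q,j}>0$). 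This establishes the reduced claim for $\psi_g(S;C^q_j)$.

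With the atomic claim in hand, I would assemble upward: $\psi(S;C^q_j)$, $\psi(S;\mathcal{C}_q)$, $\phi(S)$, and finally $f(S)$ each inherit monotonicity and submodularity by closure under nonnegative combination. I expect the one genuine obstacle to be the verification that restricting the reachable set $I_g(S)$ to a fixed community $C^q_j$ still yields a submodular coverage function; the remainder is bookkeeping. A subtle point worth stating explicitly is that submodularity must be argued at the level of $\psi_g$ for each realization and then carried through the nonnegative weights $\Pr[g]$, rather than asserted directly for the expectation $\psi$.
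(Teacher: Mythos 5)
Your proposal is correct and takes essentially the same route as the paper: both reduce everything to showing that the per-realization, per-community term $\psi_g(S;C^q_j)$ is monotone and submodular, invoke the concave-composition property of \cite{lin2011class} for $h_{q,j}$, and then assemble $\psi(S;\mathcal{C}_q)$, $\phi(S)$, and $f(S)$ by closure under nonnegative linear combinations. The only difference is that you explicitly justify the step the paper dismisses as ``easy to see''---that $|I_g(S)\cap C^q_j|$ is monotone and submodular---via the coverage-indicator decomposition $|I_g(S)\cap C^q_j|=\sum_{w\in C^q_j}\mathbf{1}\{S\cap R^-_g(w)\neq\emptyset\}$, which is a welcome bit of added rigor rather than a different approach.
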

\begin{proof}
	First, the influence function $\sigma(S)$ is monotone and submodular with respect to $S$ under the IC model \cite{kempe2003maximizing} since the $|I_g(S)|$ is monotone and submodular in any realization $g$. Then, it is easy to see that given any realization $g$ and node set $C_j^q$, the $|I_g(S)\cap C_j^q|$ is monotone and submodular as well. In Eqn. (\ref{psi}), we assume the function $h_{q,j}$ is monotone and concave, thus we have $\psi_g(S; C_j^q)$ is monotone and submodular with respect to $S$ because of the composition property proven in \cite{lin2011class}. Naturally, the $\psi(S; \mathcal{C}_q)$ is monotone and submodular because it is a linear combination over all communities $C_j^q\in\mathcal{C}_q(G)$ and realizations $g\in\mathcal{G}$. Thus, we have that the $\phi(S)$ and $f(S)$ are monotone and submodular.
\end{proof}

\section{Sampling Technique}
In the last section, we can argue that the objective function $f(S)$ is monotone and submodular, thus the greedy hill-climbing algorithm can return a solution with $(1-1/e)$ approximation ratio \cite{nemhauser1978analysis}. However, it is \#P-hard to compute this objective function, which leads to the high computational cost to implement it by Monte Carlo simulations. In order to reduce the time complexity and ensure a valid approximation guarantee, we will exploit the the technique of reverse influence sampling (RIS) \cite{borgs2014maximizing} in our problem. This technique is based on the concept of the random reverse reachable (RR) set. In the traditional IM problem, a random RR set $R$ can be generated in three steps: (1) Uniformly select a node $v$ from $V$; (2) Randomly sample a realization $g$ from $\mathcal{G}$ under a given diffusion model; and (3) Collect the nodes that can reach $v$ through a live path in realization $g$. Here, for each node $u\in V$, the probability that it is contained in $R$ randomly generated by the $v$ equals the probability that $u$ can activate $v$. Thus, we have the following lemma.
\begin{lem}[\cite{borgs2014maximizing}]
	Let $S\subseteq V$ be a seed set and $R$ be a random RR set under a given diffusion model, then we have
	\begin{equation}
		\sigma(S)=n\cdot\mathbb{E}_R[\mathbb{I}(S\cap R)],
	\end{equation}
	where $\mathbb{I}(S\cap R)=1$ if $S\cap R\neq\emptyset$, else $\mathbb{I}(S\cap R)=0$.
\end{lem}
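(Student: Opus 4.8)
The plan is to exploit the \emph{reachability duality} inherent in a fixed realization: in a deterministic subgraph $g$, a seed set $S$ reaches a node $v$ along live edges if and only if the set of nodes that can reach $v$ along live edges (the reverse-reachable set, which I will denote $R_g(v)$) meets $S$. First I would fix a realization $g$ and a target node $v$. By the definition of $I_g(S)$ as the set of nodes reachable from $S$ through live edges, the membership $v \in I_g(S)$ holds exactly when some node of $S$ lies on a live path terminating at $v$, i.e. when $S \cap R_g(v) \neq \emptyset$. This yields the pointwise identity $\mathbb{I}\bigl(v \in I_g(S)\bigr) = \mathbb{I}\bigl(S \cap R_g(v) \neq \emptyset\bigr)$, which is the crux of the whole argument.

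Next I would unfold both sides of the claimed equality against this identity. Starting from the definition, write $|I_g(S)| = \sum_{v \in V} \mathbb{I}(v \in I_g(S))$ and take the expectation over realizations, so that $\sigma(S) = \sum_{g \in \mathcal{G}} \Pr[g] \sum_{v \in V} \mathbb{I}(v \in I_g(S))$. On the other side, the random RR set $R$ is produced by drawing $v$ uniformly from $V$ (each with probability $1/n$) and independently sampling $g$ from $\mathcal{G}$, after which $R = R_g(v)$; hence $\mathbb{E}_R[\mathbb{I}(S \cap R)] = \frac{1}{n}\sum_{v \in V} \sum_{g \in \mathcal{G}} \Pr[g]\, \mathbb{I}(S \cap R_g(v) \neq \emptyset)$. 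Substituting the pointwise identity and interchanging the two finite summations (justified by linearity over finite index sets) collapses this expression to $\frac{1}{n}\sum_{g \in \mathcal{G}}\Pr[g]\sum_{v \in V}\mathbb{I}(v\in I_g(S)) = \frac{1}{n}\sigma(S)$, and rearranging gives the stated $\sigma(S) = n\cdot\mathbb{E}_R[\mathbb{I}(S\cap R)]$.

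The hard part will be making the forward/backward reachability equivalence fully rigorous rather than merely asserting it, since it rests on the fact that a directed live path from $S$ to $v$ is the same object whether traced forward from a seed or backward from $v$. I would anchor this in the path-based characterization already used to define $I_g(S)$ in the excerpt, and emphasize that the two sampling descriptions — the forward expectation over $\mathcal{G}$ and the uniform-$v$-then-$g$ construction of $R$ — range over the \emph{same} realization space $\mathcal{G}$ with the \emph{same} weights $\Pr[g]$, so no change of measure is incurred and the two summations may be equated term by term. Everything else, namely the decomposition of $|I_g(S)|$ into per-node indicators and the swap of finite sums, is routine and need not be belabored.
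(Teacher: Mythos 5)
Your proof is correct. The paper does not actually prove this lemma---it is imported from Borgs \textit{et al.} \cite{borgs2014maximizing} with only the one-line justification that the probability of $u$ lying in an RR set rooted at $v$ equals the probability that $u$ can activate $v$---and your argument (the forward/backward reachability duality $\mathbb{I}(v\in I_g(S))=\mathbb{I}(S\cap R_g(v)\neq\emptyset)$ within a fixed realization, followed by linearity of expectation over nodes and realizations with no change of measure) is precisely the standard argument behind that remark, the same duality the paper itself reuses in community-restricted form when proving its Theorem 3.
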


\begin{figure*}[!t]
	\begin{align}
		\hat{f}(S; \tilde{\mathcal{R}})&=\frac{1-\lambda}{\sigma(V)}\cdot\hat{\sigma}(S; \tilde{\mathcal{R}})+\frac{\lambda}{\phi(V)}\cdot\hat{\phi}(S; \tilde{\mathcal{R}})\label{eq15}\\
		&=\frac{1-\lambda}{\sigma(V)\theta}\sum_{i=1}^{\theta}\sum_{q\in Q}\frac{1}{|Q|}\sum_{C^q_j\in\mathcal{C}_q}|C^q_j|\cdot\mathbb{I}(S\cap R(v^i_{q,j},g^i))+\frac{\lambda }{\phi(V)\theta}\sum_{i=1}^{\theta}\sum_{q\in Q}w_q\sum_{C^q_j\in\mathcal{C}_q}a_{q,j}\cdot|C^q_j|\cdot\mathbb{I}(S\cap R(v^i_{q,j},g^i))\label{eq16}\\
		&=\frac{1}{\theta}\cdot\underbrace{\sum_{i=1}^{\theta}\sum_{q\in Q}\sum_{C^q_j\in\mathcal{C}_q}\left(\frac{1-\lambda}{\sigma(V)|Q|}+\frac{\lambda w_q a_{q,j}}{\phi(V)}\right)\cdot|C^q_j|\cdot\mathbb{I}(S\cap R(v^i_{q,j},g^i))}_{\Omega(S; \tilde{\mathcal{R}})}.\label{eq17}
	\end{align}
	\hrulefill
	\begin{equation}\label{eq19}
		\Delta_{\hat{f}}(u|S; \tilde{\mathcal{R}})=\frac{1}{\theta}\cdot\sum_{i=1}^{\theta}\sum_{q\in Q}\sum_{C^q_j\in\mathcal{C}_q}\left(\frac{1-\lambda}{\sigma(V)|Q|}+\frac{\lambda w_q a_{q,j}}{\phi(V)}\right)\cdot|C^q_j|\cdot\left[\mathbb{I}((S\cup\{u\})\cap R(v^i_{q,j},g^i))-\mathbb{I}(S\cap R(v^i_{q,j},g^i))\right],
	\end{equation}
	\hrulefill
\end{figure*}

However, this sampling can not be directly applied to address our objective function because we need to consider the diversity under different metrics. To estimate the $\phi(S)$, we need to design a sampling method to compute the value of $\psi(S; C^q_j)$ for each $C_j^q\in\mathcal{C}_q$ and $q\in Q$ as a subroutine. Here, we have the following based on Eqn. (\ref{psi}),
\begin{equation}
	\psi(S; C^q_j)=a_{q,j}\cdot\mathbb{E}_{g\sim\mathcal{G}}[|I_g(S)\cap C^q_j|].
\end{equation}
Denoted by $R(v,g)$ a RR set generated from the node $v$ under the realization $g$, we have
\begin{equation}
	\psi(S; C^q_j)=a_{q,j}\cdot|C_j^q|\cdot\mathbb{E}_{v\sim C^q_j,g\sim\mathcal{G}}[\mathbb{I}(S\cap R(v,g))].
\end{equation}
Then, the composite diversified function defined in Eqn. (\ref{eq5}) can be rewritten as
\begin{equation}\label{eq11}
	\phi(S)=\sum_{g\in\mathcal{G}}\Pr[g]\cdot\sum_{q\in Q}w_q\sum_{C^q_j\in\mathcal{C}_q}\psi_g(S; C^q_j).
\end{equation}

Motivated by Eqn. (\ref{eq11}), all communities $C_j^q$ for $C_j^q\in\mathcal{C}_q$ and $q\in Q$ can share the same realization $g$ in computing the expectation. Thereby, we can formally define the concept of the random Generalized Reverse Reachable (G-RR) set. In our CC-DIM problem, a random G-RR set $\tilde{R}$ can be generated in the following three steps.
\begin{itemize}
	\item Uniformly select a node $v_{q,j}$ from $C_j^q$ for each community $C_j^q\in\mathcal{C}_q$ and $q\in Q$, totally $\sum_{q\in Q}r_q$ nodes.
	\item Randomly sample a realization $g$ from $\mathcal{G}$ under a given diffusion model.
	\item For each node $v_{g,j}$ sampled from $C_j^q$, collect the nodes that can reach it through a live path in $g$. This is a random RR set $R(v_{q,j},g)$, thus we have totally $\sum_{q\in Q}r_q$ random RR set under the same realization.
\end{itemize}
From above, a random G-RR set $\tilde{R}$ is a collection of random RR sets but they are under the same realization, which can be defined as
\begin{equation}\label{eq12}
	\tilde{R}=\{R(v_{q,j},g) : C_j^q\in\mathcal{C}_q \text{ and } q\in Q\}.
\end{equation}
Then, we can build the relationship between the composite deversified function and random G-RR set.

\begin{thm}\label{thm3}
	Let $S\subseteq V$ be a seed set and $\tilde{R}$ be a random G-RR set under a given diffusion model, the function $\sigma(S)$ and $\phi(S)$ can be estimated as follows:
	\begin{small}
	\begin{align}
		&\sigma(S)=\mathbb{E}_{\tilde{R}}\left[\sum_{q\in Q}\frac{1}{|Q|}\sum_{C^q_j\in\mathcal{C}_q}|C^q_j|\cdot\mathbb{I}(S\cap R(v_{q,j},g))\right],\label{eq13}\\
		&\phi(S)=\mathbb{E}_{\tilde{R}}\left[\sum_{q\in Q}w_q\sum_{C^q_j\in\mathcal{C}_q}a_{q,j}\cdot|C^q_j|\cdot\mathbb{I}(S\cap R(v_{q,j},g))\right].\label{eq14}
	\end{align}
	\end{small}
\end{thm}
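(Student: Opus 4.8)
The plan is to prove the two identities (\ref{eq13}) and (\ref{eq14}) separately, using only linearity of expectation together with the \emph{marginal} distribution of each component RR set inside a G-RR set. The crucial observation is that, although the RR sets $R(v_{q,j},g)$ making up a single G-RR set $\tilde{R}$ all share one realization $g$ and are therefore mutually correlated, the expectation of each indicator $\mathbb{I}(S\cap R(v_{q,j},g))$ depends only on the marginal law of its own pair $(v_{q,j},g)$. Since linearity of expectation requires no independence, I can interchange $\mathbb{E}_{\tilde{R}}$ with the finite double sum over $q\in Q$ and $C^q_j\in\mathcal{C}_q$ in both claims, reducing each to an identity for a single term.

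First I would establish the marginal fact. By the three-step generation procedure for $\tilde{R}$, the node $v_{q,j}$ is drawn uniformly from $C^q_j$ while $g$ is drawn from $\mathcal{G}$ independently of all node selections; hence the marginal law of $(v_{q,j},g)$ coincides with independently sampling $v\sim C^q_j$ uniformly and $g\sim\mathcal{G}$, so that
\begin{equation*}
	\mathbb{E}_{\tilde{R}}\left[\mathbb{I}(S\cap R(v_{q,j},g))\right]=\mathbb{E}_{v\sim C^q_j,\,g\sim\mathcal{G}}\left[\mathbb{I}(S\cap R(v,g))\right].
\end{equation*}
For (\ref{eq14}), I would substitute this into the expression already derived for $\psi(S;C^q_j)=a_{q,j}\,|C^q_j|\,\mathbb{E}_{v\sim C^q_j,g\sim\mathcal{G}}[\mathbb{I}(S\cap R(v,g))]$, and then reassemble the weighted sum $\phi(S)=\sum_{q\in Q}w_q\sum_{C^q_j\in\mathcal{C}_q}\psi(S;C^q_j)$, which matches the right-hand side of (\ref{eq14}) term by term.

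For (\ref{eq13}), the extra ingredient is that, for each fixed metric $q$, the communities $\{C^q_1,\dots,C^q_{r_q}\}$ form a partition of $V$, so a uniform draw of a node from $V$ decomposes into first choosing a community $C^q_j$ with probability $|C^q_j|/n$ and then a node uniformly within it. Combining this law-of-total-probability decomposition with Lemma~1, namely $\sigma(S)=n\cdot\mathbb{E}_R[\mathbb{I}(S\cap R)]$, yields $\sum_{C^q_j\in\mathcal{C}_q}|C^q_j|\,\mathbb{E}_{v\sim C^q_j,g\sim\mathcal{G}}[\mathbb{I}(S\cap R(v,g))]=\sigma(S)$ for every single $q$. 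Averaging over the $|Q|$ metrics, each weighted by $1/|Q|$, then leaves $\sigma(S)$ unchanged and gives the claimed identity.

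The main obstacle to guard against is the temptation to worry about the correlation induced by the shared realization $g$: one might fear that reusing $g$ across all $\sum_{q\in Q}r_q$ component RR sets biases the estimator. I expect the key step to be arguing cleanly that this correlation is immaterial for the first moment, affecting only the variance of the estimator and not its expectation, so that linearity of expectation alone suffices. Once that point is made, everything else is routine bookkeeping with the partition structure and the previously established single-RR-set identity.
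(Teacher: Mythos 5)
Your proposal is correct and takes essentially the same approach as the paper: both proofs reduce, via linearity of expectation (which is exactly why the correlation induced by the shared realization $g$ is harmless for first moments), to the per-community marginal identity $\mathbb{E}[\mathbb{I}(S\cap R(v_{q,j},g))]=\mathbb{E}_{v\sim C^q_j,\,g\sim\mathcal{G}}[\mathbb{I}(S\cap R(v,g))]$, and then use the partition structure of each $\mathcal{C}_q$ together with averaging over the $|Q|$ metrics for Eqn.~(\ref{eq13}), and the weighted reassembly of $\psi(S;C^q_j)=a_{q,j}|C^q_j|\cdot\mathbb{E}_{v\sim C^q_j,g\sim\mathcal{G}}[\mathbb{I}(S\cap R(v,g))]$ for Eqn.~(\ref{eq14}). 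The only difference is direction of presentation: the paper assembles the per-community identities upward into a single expectation over the shared realization, whereas you decompose the G-RR expectation downward into marginals (invoking Lemma~1 plus a law-of-total-probability split of the uniform node draw, which is the same computation as the paper's community-restricted identity).
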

\begin{proof}
	For a seed set $S$, we have $\mathbb{E}_{g\sim\mathcal{G}}[|I_g(S)\cap C_j^q|]=|C_j^q|\cdot\mathbb{E}_{v\sim C^q_j,g\sim\mathcal{G}}[\mathbb{I}(S\cap R(v,g))]$. It indicates that 
	\begin{align}
		\sigma(S)&=\sum_{C^q_j\in\mathcal{C}_q}\mathbb{E}_{g\sim\mathcal{G}}[|I_g(S)\cap C_j^q|]\nonumber\\
		&=\sum_{C^q_j\in\mathcal{C}_q}|C_j^q|\cdot\mathbb{E}_{v\sim C^q_j,g\sim\mathcal{G}}[\mathbb{I}(S\cap R(v,g))]\nonumber\\
		&=\sum_{q\in Q}\frac{1}{|Q|}\sum_{C^q_j\in\mathcal{C}_q}|C_j^q|\cdot\mathbb{E}_{v\sim C^q_j,g\sim\mathcal{G}}[\mathbb{I}(S\cap R(v,g))]\nonumber\\
		&=\mathbb{E}_{g\sim\mathcal{G}}\left[\sum_{q\in Q}\frac{1}{|Q|}\sum_{C^q_j\in\mathcal{C}_q}|C_j^q|\cdot\mathbb{E}_{v\sim C^q_j}[\mathbb{I}(S\cap R(v,g))]\right]\nonumber\\
		&=\text{Eqn. (\ref{eq13})}.\nonumber
	\end{align}
	Similarly, we have
	\begin{align}
		&\phi(S)=\sum_{q\in Q}w_q\sum_{C^q_j\in\mathcal{C}_q}a_{q,j}\cdot\mathbb{E}_{g\sim\mathcal{G}}[|I_g(S)\cap C^q_j|]\nonumber\\
		&=\sum_{q\in Q}w_q\sum_{C^q_j\in\mathcal{C}_q}a_{q,j}\cdot|C_j^q|\cdot\mathbb{E}_{g\sim\mathcal{G},v\sim C^q_j}[\mathbb{I}(S\cap R(v,g))]\nonumber\\
		&=\mathbb{E}_{g\sim\mathcal{G}}\left[\sum_{q\in Q}w_q\sum_{C^q_j\in\mathcal{C}_q}a_{q,j}\cdot|C_j^q|\cdot\mathbb{E}_{v\sim C^q_j}[\mathbb{I}(S\cap R(v,g))]\right]\nonumber\\
		&=\text{Eqn. (\ref{eq14})}.\nonumber
	\end{align}
	This theorem can be proved.
\end{proof}

Let $\tilde{\mathcal{R}}=\{\tilde{R}_1,\tilde{R}_2,\cdots,\tilde{R}_\theta\}$ be a collection of random G-RR sets that contains $\theta$ random G-RR sets, where in each $\tilde{R}_i\in\tilde{\mathcal{R}}$, we denoted it by $\tilde{R}_i=\{R(v^i_{q,j},g^i) : C_j^q\in\mathcal{C}_q \text{ and } q\in Q\}$ the $i$-th random G-RR set in the collection $\tilde{\mathcal{R}}$. Then, according to Eqn. (\ref{eq13}) and Eqn. (\ref{eq14}) in Theorem \ref{thm3}, the unbiased estimation of our objective function $f(S)$ can be formulated from Eqn. (\ref{eq15}) to Eqn. (\ref{eq17}). It is easy to know that the $\hat{f}(S; \tilde{\mathcal{R}})$ is an unbiased estimation function of $f(S)$ under the collection of G-RR sets $\tilde{\mathcal{R}}$, then $\hat{\sigma}(S; \tilde{\mathcal{R}})$ and $\hat{\phi}(S, \tilde{\mathcal{R}})$ can be defined in a similar way shown in Eqn. (\ref{eq16}). The $\sigma(V)$ and $\phi(V)$ are constants with $\sigma(V)=n$ and $\phi(V)=\sum_{q\in Q}w_q\sum_{C_j^q\in\mathcal{C}_q}a_{q,j}\cdot|C^q_j|$. Thus, the Eqn. (\ref{eq17}) can be further simplified.

\begin{thm}\label{thm4}
	Given a collection of G-RR sets $\tilde{\mathcal{R}}$, the function $\hat{f}(S; \tilde{\mathcal{R}})$ is monotone and submodular with respect to $S$.
\end{thm}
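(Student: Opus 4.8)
The plan is to show that $\hat{f}(S;\tilde{\mathcal{R}})$ inherits monotonicity and submodularity from the structure exhibited in Eqn. (\ref{eq17}), where it is written as a nonnegative linear combination (with the factor $1/\theta$ and coefficients $\left(\frac{1-\lambda}{\sigma(V)|Q|}+\frac{\lambda w_q a_{q,j}}{\phi(V)}\right)\cdot|C^q_j|$) of the indicator terms $\mathbb{I}(S\cap R(v^i_{q,j},g^i))$. Since both monotonicity and submodularity are preserved under nonnegative linear combinations, it suffices to establish these two properties for each single building block $g_R(S):=\mathbb{I}(S\cap R)$ associated with a fixed RR set $R=R(v^i_{q,j},g^i)$, and then to verify that every coefficient appearing in Eqn. (\ref{eq17}) is indeed nonnegative.

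First I would fix an arbitrary RR set $R\subseteq V$ and argue that the coverage indicator $g_R(S)=\mathbb{I}(S\cap R)$ is monotone and submodular. For monotonicity, take $S\subseteq T\subseteq V$: if $S\cap R\neq\emptyset$ then $T\cap R\neq\emptyset$ as well, so $g_R(S)\leq g_R(T)$. For submodularity I would check the marginal-gain condition directly: for any $S\subseteq T$ and any $u\in V$, the marginal gain $g_R(S\cup\{u\})-g_R(S)$ equals $1$ exactly when $u\in R$ and $S\cap R=\emptyset$, and is $0$ otherwise; likewise $g_R(T\cup\{u\})-g_R(T)$ equals $1$ only when $u\in R$ and $T\cap R=\emptyset$. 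Since $S\cap R=\emptyset$ is implied by $T\cap R=\emptyset$ (as $S\subseteq T$), whenever the gain at $T$ is $1$ the gain at $S$ is also $1$; hence $g_R(S\cup\{u\})-g_R(S)\geq g_R(T\cup\{u\})-g_R(T)$, which is exactly the submodularity inequality. This matches the expanded marginal-gain expression already recorded in Eqn. (\ref{eq19}).

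Next I would confirm that the scalar coefficients are nonnegative. Each term carries the weight $\frac{1}{\theta}\left(\frac{1-\lambda}{\sigma(V)|Q|}+\frac{\lambda w_q a_{q,j}}{\phi(V)}\right)|C^q_j|$; here $\lambda\in[0,1]$, the weights satisfy $w_q>0$ with $\sum_{q\in Q}w_q=1$, the coefficients $a_{q,j}>0$ by assumption, the normalizers $\sigma(V)=n>0$ and $\phi(V)>0$, and each $|C^q_j|\geq 0$, so the whole coefficient is nonnegative. Therefore $\hat{f}(S;\tilde{\mathcal{R}})$ is a nonnegative combination of monotone submodular functions, and invoking the standard closure of these two properties under such combinations (the same composition principle cited from \cite{lin2011class} and used in the proof of Theorem 2) yields that $\hat{f}(S;\tilde{\mathcal{R}})$ is monotone and submodular.

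I do not expect a genuine obstacle here, since the result is structurally identical to the exact-function case of Theorem 2 with the expectation replaced by an empirical average over the sampled collection $\tilde{\mathcal{R}}$. The only point requiring a little care is the base case: rather than appealing to the concavity-of-$h_{q,j}$ argument used for $\psi_g$, the estimator uses the raw indicator $\mathbb{I}(S\cap R)$, so I must establish submodularity of the set-coverage indicator from first principles as above instead of borrowing it wholesale. Once that single lemma-style fact is in place, the closure-under-nonnegative-combination step is routine and completes the proof.
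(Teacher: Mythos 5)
Your proposal is correct and follows essentially the same route as the paper: both proofs reduce to showing that each coverage indicator $\mathbb{I}(S\cap R(v^i_{q,j},g^i))$ is monotone and has decreasing marginal gains (the gain is $1$ precisely when $u\in R$ and the current set misses $R$, and missing $R$ is inherited downward from $T$ to $S\subseteq T$), then extend to $\hat{f}$ by nonnegative linear combination via Eqn.~(\ref{eq17})/(\ref{eq19}). Your version is marginally tidier in that you explicitly verify the nonnegativity of the coefficients, a step the paper leaves implicit when it declares the indicator-level inequality "equivalent" to the submodularity of $\hat{f}$, but the substance is identical.
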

\begin{proof}
	For convenience, we define the marginal gain of $u$ on $S$ as $\Delta_{\hat{f}}(u|S; \tilde{\mathcal{R}})=\hat{f}(S\cup\{u\}; \tilde{\mathcal{R}})-\hat{f}(S; \tilde{\mathcal{R}})$, which can be rewritten as Eqn. (\ref{eq19}) based on the expression of Eqn. (\ref{eq17}). First, it is monotone because $\mathbb{I}(S\cap R(v^i_{q,j},g^i))=1$ implies $\mathbb{I}((S\cup\{u\})\cap R(v^i_{q,j},g^i))=1$, thus we have $\Delta_{\hat{f}}(u|S; \tilde{\mathcal{R}})\geq 0$. Next, we show that it is submodular. Given any $S_1\subseteq S_1\subseteq V$ with $u\notin S_2$, to show $\Delta_{\hat{f}}(u|S_1; \tilde{\mathcal{R}})\geq\Delta_{\hat{f}}(u|S_2; \tilde{\mathcal{R}})$, it is equivalent to prove $\mathbb{I}((S_1\cup\{u\})\cap R(v^i_{q,j},g^i))-\mathbb{I}(S_1\cap R(v^i_{q,j},g^i))\geq\mathbb{I}((S_2\cup\{u\})\cap R(v^i_{q,j},g^i))-\mathbb{I}(S_2\cap R(v^i_{q,j},g^i))$ according to Eqn. (\ref{eq19}). Here, we need to build the connection that we have $\mathbb{I}((S_1\cup\{u\})\cap R(v^i_{q,j},g^i))-\mathbb{I}(S_1\cap R(v^i_{q,j},g^i))=1$ if $\mathbb{I}((S_2\cup\{u\})\cap R(v^i_{q,j},g^i))-\mathbb{I}(S_2\cap R(v^i_{q,j},g^i))=1$, which implies $\mathbb{I}((S_2\cup\{u\})\cap R(v^i_{q,j},g^i))=1$ and $\mathbb{I}(S_2\cap R(v^i_{q,j},g^i))=0$. Obviously, the $\mathbb{I}(S_2\cap R(v^i_{q,j},g^i))=0$ indicates that $S_2\cap R(v^i_{q,j},g^i)=\emptyset$, naturally we have $S_1\cap R(v^i_{q,j},g^i)=\emptyset$ because of $S_1\subseteq S_2$. Then, the $\mathbb{I}((S_2\cup\{u\})\cap R(v^i_{q,j},g^i))=1$ is enough to infer $\{u\}\cap R(v^i_{q,j},g^i)\neq\emptyset$, thus we have $\mathbb{I}((S_1\cup\{u\})\cap R(v^i_{q,j},g^i))=1$. Thereby, we have $\mathbb{I}((S_1\cup\{u\})\cap R(v^i_{q,j},g^i))-\mathbb{I}(S_1\cap R(v^i_{q,j},g^i))=1$ and $\Delta_{\hat{f}}(u|S_1; \tilde{\mathcal{R}})\geq\Delta_{\hat{f}}(u|S_2; \tilde{\mathcal{R}})$.
\end{proof}

\section{Approximation Algorithm}
Based on Theorem \ref{thm3} and Theorem \ref{thm4}, the original problem can be transformed to a weighted maximum coverage problem, whose objective function $\hat{f}(S; \tilde{\mathcal{R}})$ is monotone and submodular given a collection of random G-RR sets $\tilde{\mathcal{R}}$. It is much more convenient and efficient than directly solving the original problem. In this section, we first introduce some preliminary knowledge, and then we begin to design our algorithm and conduct the theoretical analysis step by step.

\begin{algorithm}[h] 
	\caption{{MaxCoverage-Greedy $(\tilde{\mathcal{R}},k)$}}
	\label{a1}
	Initialize: $S_0\leftarrow\emptyset$\;
	\For{$a=1$ to $k$}{
		$v'_a\in\arg\max_{v\in V\backslash S_{a-1}}\{\Delta_\Omega(v|S_{a-1}; \tilde{\mathcal{R}})\}$\;
		$S_a\leftarrow S_{a-1}\cup\{v'_a\}$\;
	}
	\Return $S_k$
\end{algorithm}

\subsection{Preliminary Analysis}
Given a seed set $S$ and a collection of random G-RR sets $\tilde{\mathcal{R}}$, we define the generalized coverage as $\Omega(S; \tilde{\mathcal{R}})=\theta\cdot f(S; \tilde{\mathcal{R}})$ shown as Eqn. (\ref{eq17}). Given a collection of random G-RR sets $\tilde{\mathcal{R}}$, we can apply the MC-Greedy algorithm shown as Algorithm \ref{a1}. It iteratively selects the node $v'_a$ with the maximum marginal coverage $\Delta_\Omega(v|S_{a-1}; \tilde{\mathcal{R}})=\Omega(S_{a-1}\cup\{v\}; \tilde{\mathcal{R}})-\Omega(S_{a-1}; \tilde{\mathcal{R}})$, and returns a set $S_k$ as the final solution. Let $S^*_k$ be the solution returned by MC-Greedy process shown as Algorithm \ref{a1}, $\hat{S}^\circ_k$ be the optimal size-k set that achieves the maximum weighted coverage $\Omega$, and $S^\circ_k$ be the optimal solution of the original objective function $f$. The above MC-Greedy algorithm can guarantee
\begin{equation}\label{eq20}
	\Omega(S^*_k;  \tilde{\mathcal{R}})\geq(1-1/e)\Omega(\hat{S}^\circ_k;  \tilde{\mathcal{R}})\geq(1-1/e)\Omega(S^\circ_k; \tilde{\mathcal{R}}),
\end{equation}
because the $\Omega(S; \tilde{\mathcal{R}})$ is monotone and submodular with respect to $S$. Then, we have the following concentration bound adapted to the martingale analysis in \cite{tang2015influence} \cite{tang2018online}.
\begin{lem}\label{lem2}
	For any $\varepsilon>0$, given a seed set $S$ and a collection of random G-RR sets $\tilde{\mathcal{R}}$, we have
	\begin{align}
		&\Pr\left[\Omega(S; \tilde{\mathcal{R}})\leq(1-\xi)\theta f(S)\right]\leq\exp\left(-\frac{\xi^2\theta f(S)}{2}\right),\label{eq21}\\
		&\Pr\left[\Omega(S; \tilde{\mathcal{R}})\geq(1+\xi)\theta f(S)\right]\leq\exp\left(-\frac{\xi^2\theta f(S)}{2+\frac{2}{3}\xi}\right).\label{eq22}
	\end{align}
\end{lem}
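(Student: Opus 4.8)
The plan is to exhibit $\Omega(S;\tilde{\mathcal{R}})$ as a sum of $\theta$ independent, identically distributed, bounded random variables and then apply a standard multiplicative Chernoff inequality. Reading off Eqn.~(\ref{eq17}), I would write $\Omega(S;\tilde{\mathcal{R}})=\sum_{i=1}^{\theta}X_i$, where $X_i$ collects the inner double sum over $q\in Q$ and $C^q_j\in\mathcal{C}_q$ associated with the $i$-th G-RR set $\tilde{R}_i$. Since the G-RR sets are drawn independently (each from a fresh realization and fresh uniform choices of $v^i_{q,j}$), the $X_i$ are i.i.d. By the unbiasedness of $\hat{f}$ recorded in Eqns.~(\ref{eq15})--(\ref{eq17}), together with Theorem~\ref{thm3}, each satisfies $\mathbb{E}[X_i]=f(S)$, so $\mathbb{E}[\Omega(S;\tilde{\mathcal{R}})]=\theta f(S)$, which is exactly the centering quantity appearing in both tail bounds.

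The decisive step is to show $X_i\in[0,1]$ almost surely, as this is what licenses a multiplicative Chernoff bound. All coefficients in Eqn.~(\ref{eq17}) are nonnegative and every indicator lies in $\{0,1\}$, so $X_i\geq 0$, and $X_i$ attains its maximum when all indicators equal $1$. In that case $X_i$ reduces to $\sum_{q,j}\bigl(\tfrac{1-\lambda}{\sigma(V)|Q|}+\tfrac{\lambda w_q a_{q,j}}{\phi(V)}\bigr)|C^q_j|$, which I would evaluate by splitting into its two additive pieces. Using that each $\mathcal{C}_q$ partitions $V$ (hence $\sum_{C^q_j}|C^q_j|=n$ for every $q$) together with the normalization $\sigma(V)=n$, the first piece collapses to $1-\lambda$; using $\phi(V)=\sum_{q\in Q}w_q\sum_{C^q_j}a_{q,j}|C^q_j|$, the second collapses to $\lambda$. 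Their sum is exactly $1$, so $X_i\in[0,1]$.

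With $\Omega(S;\tilde{\mathcal{R}})$ now a sum of independent $[0,1]$-valued variables of total mean $\theta f(S)$, Eqns.~(\ref{eq21}) and (\ref{eq22}) are the standard lower- and upper-tail concentration bounds for bounded random variables, in the martingale form used in \cite{tang2015influence} \cite{tang2018online}. For the lower tail I would apply the usual multiplicative Chernoff bound, which yields the denominator $2$. For the upper tail I would use a Bernstein-type bound together with the variance estimate $\mathrm{Var}(X_i)\leq\mathbb{E}[X_i]$, valid because $X_i\in[0,1]$; substituting $\sum_i\mathrm{Var}(X_i)\leq\theta f(S)$ and the bound $X_i\leq 1$ into the Bernstein inequality produces exactly the denominator $2+\tfrac{2}{3}\xi$. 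The main obstacle is precisely the boundedness claim $X_i\in[0,1]$: it is the only nonroutine ingredient and hinges entirely on the two normalizing denominators $\sigma(V)$ and $\phi(V)$ being chosen so that the total weight carried by a single G-RR set sums to one. Once that identity is verified, everything else reproduces the classical RR-set concentration argument, with the martingale formulation ensuring the bounds persist even when $\theta$ is chosen adaptively in the later algorithm.
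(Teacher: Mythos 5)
Your proof is correct and takes essentially the approach the paper intends: the paper offers no explicit proof of this lemma, merely citing the martingale concentration analysis of \cite{tang2015influence} and \cite{tang2018online}, and your argument supplies precisely the details that citation presupposes (i.i.d.\ per-sample contributions with mean $f(S)$, followed by standard Chernoff/Bernstein bounds for $[0,1]$-valued summands). In particular, your verification that each $X_i$ lies in $[0,1]$ --- using $\sigma(V)=n$, the partition identity $\sum_{C^q_j\in\mathcal{C}_q}|C^q_j|=n$ for every $q$, and $\phi(V)=\sum_{q\in Q}w_q\sum_{C^q_j\in\mathcal{C}_q}a_{q,j}|C^q_j|$ so that the total coefficient mass is $(1-\lambda)+\lambda=1$ --- is exactly the one problem-specific ingredient needed to transfer those bounds to the weighted G-RR setting, and you get it right.
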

Until now, the most excellent algorithm based on the RIS technique to solve the IM problem is OPIM-C \cite{tang2018online}, where they are optimistic about the selected seed set by the greedy algorithm. Motivated by the idea of OPIM-C, in our CC-DIM problem, we first sample a collection of random G-RR sets $\tilde{\mathcal{R}}_1$ to select a size-$k$ seed set $S^*_k$ in greedy manner as Algorithm \ref{a1} and derive an upper bound $\overline{f}(S^\circ_k)$ of $f(S^\circ_k)$. Second, we sample another collection of random G-RR sets $\tilde{\mathcal{R}}_2$ with $|\tilde{\mathcal{R}}_2|=|\tilde{\mathcal{R}}_1|$ to derive a lower bound $\underline{f}(S^*_k)$ of $f(S^*_k)$. The algorithm will stop until we have
\begin{equation}\label{eq23}
	\underline{f}(S^*_k)/\overline{f}(S^\circ_k)\geq(1-1/e-\varepsilon).
\end{equation}
The tigher these bounds are, the fewer the number of random G-RR sets will be, thus greatly reducing the running time. Based on Lemma 4.2 in \cite{tang2018online}, we can derive the lower bound $\underline{f}(S^*_k)$ under the $\tilde{\mathcal{R}}_2$ with $|\tilde{\mathcal{R}}_2|=\theta_2$ as follows:
\begin{equation}\label{eq24}
	\underline{f}(S^*_k)=\left[\left(\sqrt{\Omega(S^*_k; \tilde{\mathcal{R}}_2)+\frac{2\eta_l}{9}}-\sqrt{\frac{\eta_l}{2}}\right)^2-\frac{\eta_l}{18}\right]\cdot\frac{1}{\theta_2}
\end{equation}
where we have $\eta_l=\ln(1/\delta_l)$ and $\Pr[f(S^*_k)>\underline{f}(S^*_k)]\geq 1-\delta_l$. In a similar way, we can derive the upper bound $\overline{f}(S^\circ_k)$ under the $\tilde{\mathcal{R}}_1$ with $|\tilde{\mathcal{R}}_1|=\theta_1$ as follows:
\begin{equation}\label{eq25}
	\overline{f}(S^\circ_k)=\left(\sqrt{\overline{\Omega}(S^\circ_k; \tilde{\mathcal{R}}_1)+\frac{\eta_u}{2}}+\sqrt{\frac{\eta_u}{2}}\right)^2\cdot\frac{1}{\theta_1}
\end{equation}
where we have $\eta_u=\ln(1/\delta_u)$ and  $\Pr[f(S^\circ_k)<\overline{f}(S^\circ_k)]\geq 1-\delta_u$. Here, the $\overline{\Omega}(S^\circ_k; \tilde{\mathcal{R}}_1)$ is an upper bound of generalized coverage $\Omega(S^\circ_k; \tilde{\mathcal{R}}_1)$, it satisfies $\overline{\Omega}(S^\circ_k; \tilde{\mathcal{R}}_1)\leq\Omega(S^*_k; \tilde{\mathcal{R}}_1)/(1-1/e)$ based on Eqn. (\ref{eq20}). To make it tighter, we can construct the upper bound $\overline{\Omega}(S^\circ_k; \tilde{\mathcal{R}}_1)$ during running the greedy process because of its submodularity. Let $S^*_a$ with $1\leq a\leq k$ be the set of nodes that are selected in the first $a$ iteration in the MC-Greedy algorithm, then an tigher upper bound $\overline{\Omega}(S^\circ_k; \tilde{\mathcal{R}}_1)=$
\begin{equation*}
	\min_{0\leq a\leq k}\left\{\Omega(S^*_a; \tilde{\mathcal{R}}_1)+\sum_{v\in maxMC(S^*_a,k; \tilde{\mathcal{R}}_1)}\Delta_\Omega(v|S^*_a; \tilde{\mathcal{R}}_1)\right\},
\end{equation*}
where $maxMC(S^*_a,k; \tilde{\mathcal{R}}_1)$ be the set of $k$ nodes with the largest marginal coverage in $\tilde{\mathcal{R}}_1$ with respect to $S^*_a$.

\subsection{Generalized HIST Algorithm}
Given a collection of random G-RR sets $\tilde{\mathcal{R}}$, we can quickly get a sub-optimal solution $S^*_k$ by optimizing function $\hat{f}(S; \tilde{\mathcal{R}})$ through the MC-Greedy algorithm. However, how many random G-RR sets do we need to ensure the approximation ratio shown as Eqn. (\ref{eq23}) is unknown. Thus, in this section, we would like to sample enough number of random G-RR sets to achieve an accurate estimation of our objective function and guarantee the approximation. Different from the IM problem, to generate a random RR set, it only needs to uniformly sample a node from the graph. However, to generate a random G-RR set, it needs to uniformly sample a node from every possible community and there are totally $\sum_{q\in Q}r_q$ communities. Thus, a specific challenge in the sampling for our CC-DIM problem is that the size of a random G-RR set is much larger than that of a random RR set. This not only leads to excessive memory usage, but also the running time is significantly increased because of the difficult generation and coverage computing process. To overcome this challenge, a feasible strategy is to reduce the average size of a random G-RR set.

\begin{algorithm}[!t] 
	\caption{{G-HIST $(G, k, \varepsilon, \delta)$}}
	\label{a2}
	Initialize: $\varepsilon_1=\varepsilon_2=\varepsilon/2$, $\delta_1=\delta_2=\delta/2$\;
	$S^*_b=\text{SentinelSet}(G,Q,k,\varepsilon_1,\delta_1)$\;
	$S^*_{k-b}=\text{RemainingSet}(G,Q,k,S^*_b,\varepsilon,\varepsilon_2,\delta_2)$\;
	\Return $S^*_b\cup S^*_{k-b}$
\end{algorithm}

Recently, there is a method to reduce the memory consumption and running time by reducing the average size of random RR sets in the IM problem, called Hit-and-Stop (HIST) \cite{guo2020influence} \cite{guo2022influence}. Motivated by the HIST thought, our solution to the CC-DIM problem is named as Generalized HIST (G-HIST), which can also be divided into the following two stages.
\begin{itemize}
	\item Sentinel set selection: At this stage, we first generate a small number of random G-RR sets, and use it to select a size-$b$ node set $S^*_b$ by the MC-Greedy algorithm, which can guarantee $f(S^*_b)\geq(1-(1-1/k)^b-\varepsilon_1)\cdot f(S^\circ_k)$ with a high probability.
	\item Remaining set selection: At this stage, we need to generate enough number of random G-RR sets to select the remaining size-$(k-b)$ node set $S^*_{k-b}$. But in the generation of a random G-RR set $\tilde{R}$, in each $R(v_{q,j},g)\in\tilde{R}$, the sampling can be terminated if hitting some node in the sentinel set $S^*_b$. Therefore, the cost of generating a random G-RR set can be significantly reduced. Then, it returns $S^*_b\cup S^*_{k-b}$ as the final result and guarantee $f(S^*_b\cup S^*_{k-b})\geq(1-1/e-\varepsilon_1-\varepsilon_2)\cdot f(S^\circ_k)$.
\end{itemize}

From a high level perspective, in the stage of sentinel set selection, at the beginning of the MC-Greedy shown as Algorithm \ref{a1}, the partial solution $S^*_{a-1}$ has a small number of nodes, thus the value of the marginal gain $\Delta_\Omega(v|S^*_{a-1}; \tilde{\mathcal{R}})$ should be very large. Thus, the required number of random G-RR sets to select the node with the maximum marginal gain will be small, and it is easy to provide a $(1-(1-1/k)^b-\varepsilon_1)$ approximate solution. With the foundation of the first stage, the sampling and searching process of the second stage can be accelerated. Then, in the stage of remaining set selection, we will need a greater number of random G-RR sets to select nodes in a greedy manner because the value of the marginal gain is relatively small. The average size of random G-RR sets can be significantly pruned based on the partial solution $S^*_b$ given by the first stage, thus the computational cost is reduced without losing the approximation ratio, where the final result can give a $(1-1/e-\varepsilon_1-\varepsilon_2)$. The G-HIST algorithm can be shown in Algorithm \ref{a2}. Shown as Algorithm \ref{a2}, let $\varepsilon_1=\varepsilon_2=\varepsilon/2$ and $\delta_1=\delta_2=\delta/2$, it can return a $(1-1/e-\varepsilon)$ approximate solution with at least $1-\delta$ probability.

\subsubsection{Sentinel set selection}
A natural question is how to determine the size of sentinel set $S^*_b$. If the size $b$ is too small, it will reduce the hit rate at the second stage, thus weakening the speed-up effect. If the size $b$ is too large, this problem has almost been solved, thus worsening the memory consumption and running time. In other words, the size $b$ should be carefully determined that is able to balance the cost of sampling at the first stage and the speed-up at the second stage. The process of sentinel set selection can be shown in Algorithm \ref{a3}.

\begin{algorithm}[!t] 
	\caption{{SentinelSet $(G, Q, k, \varepsilon_1, \delta_1)$}}
	\label{a3}
	Set $\theta_1=3\cdot\ln(1/\delta_1)$ and $\theta_{max}$ according to Eqn. (\ref{eq27})\;
	Generate a collection of random G-RR sets $\tilde{\mathcal{R}}_1$ with $|\tilde{\mathcal{R}}_1|=\theta$\;
	$i_{max}\leftarrow\lceil\log_2(\theta_{max}/\theta)\rceil$\;
	\Repeat{training stop}{
		server select active users $\mathcal{A}$ uniformly at random, then broadcast $\boldsymbol{\omega}$, $\boldsymbol{\theta}$ and $\hat{p}(y)$ to $\mathcal{A}$;\	

	}
	
	\For{$i=1$ to $i_{max}$}{
		$S^*_k\leftarrow\text{MaxCoverage-Greedy}(\tilde{\mathcal{R}}_1,k)$\;
		Compute the roughly lower bound $\underline{f}'(S^*_a)$ by Eqn. (\ref{eq24}) on $\tilde{\mathcal{R}}_1$ and $S^*_k$, where $1\leq a\leq k$\;
		Get $\overline{f}(S^\circ_k)$ by Eqn. (\ref{eq25}) on $\tilde{\mathcal{R}}_1$, $\delta_u=\delta_1/(3 i_{max})$\;
		Let $b$ be the maximum number such that $\underline{f}'(S^*_a)/\overline{f}(S^\circ_k)\geq(1-(1-1/k)^a-\varepsilon_1)$\;
		Generate a collection of random G-RR sets $\tilde{\mathcal{R}}_2$ with $|\tilde{\mathcal{R}}_2|=|\tilde{\mathcal{R}}_1|$ by calling G-RR Set-Sentinel $(G,Q,S^*_b)$ shown as Algorithm \ref{a4}\;
		Get $\underline{f}(S^*_b)$ by Eqn. (\ref{eq24}) on $\tilde{\mathcal{R}}_2$, $\delta_l=\delta_1/(6i_{max})$\;
		\If{$\underline{f}(S^*_b)/\overline{f}(S^\circ_k)\geq(1-(1-1/k)^b-\varepsilon_1)$}{
			\Return $S^*_b$\;
		}
		Enlarge $\tilde{\mathcal{R}}_2$ until $|\tilde{\mathcal{R}}_2|=4\cdot|\tilde{\mathcal{R}}_1|$ and re-compute $\underline{f}(S^*_b)$ by Eqn. (\ref{eq24}) on $\tilde{\mathcal{R}}_2$\;
		\If{$\underline{f}(S^*_b)/\overline{f}(S^\circ_k)\geq(1-(1-1/k)^b-\varepsilon_1)$}{
			\Return $S^*_b$\;
		}
		Double the size of $\tilde{\mathcal{R}}_1$\;
	}
	\Return $S^*_b$\;
\end{algorithm}

Shown as Algorithm \ref{a3}, we first give a collection of random G-RR set $\tilde{\mathcal{R}}_1$, and use it to generate a size-$k$ seed set $S^*_k$ by the MC-Greedy algorithm shown in Algorithm \ref{a2}. In this process, we simultaneously obtain the partial solution $S^*_a$ with $1\leq a\leq k$, which can be applied to compute the upper bound $\overline{f}(S^\circ_k)$ by Eqn. (\ref{eq25}). However, based on Eqn. (\ref{eq24}), we need another collection of random G-RR set $\tilde{\mathcal{R}}_2$, which is independently sampled, to compute the lower bound $\underline{f}(S^*_a)$. Let us ignore this point for the time being, where we still apply $\tilde{\mathcal{R}}_1$ to roughly compute the lower bound, denoted by $\underline{f}'(S^*_a)$ to discriminate, for all $1\leq a\leq k$. Then, in line 8 of Algorithm \ref{a3}, we select the maximum $a$, denoted by $b$, such that $\underline{f}'(S^*_a)/\overline{f}(S^\circ_k)\geq(1-(1-1/k)^a-\varepsilon_1)$. Since the roughly lower bound $\underline{f}'(S^*_b)$ may be not accurate, we generate another collection of random G-RR set $\tilde{\mathcal{R}}_2$ and use it to compute the lower bound $\underline{f}(S^*_b)$ by Eqn. (\ref{eq24}). Whereby, we can check whether the $S^*_b$ is at least $(1-(1-1/k)^b-\varepsilon_1)$ approximation. If yes, return the $S^*_b$ directly; If no, make the lower bound tighter through enlarging $\tilde{\mathcal{R}}_2$ until $|\tilde{\mathcal{R}}_2|=4\cdot|\tilde{\mathcal{R}}_1|$ and use it to compute the $\underline{f}(S^*_b)$ again. If the $S^*_b$ can provide the approximation, return it directly; If not, this implies the $S^*_b$ is not a good solution with a high probability. Thus, we double the collection $\tilde{\mathcal{R}}_1$ and repeat the above process to re-select a node set until satisfying the approximation ratio or reaching the maximum number of iterations $i_{max}$.

\begin{algorithm}[!t] 
	\caption{{G-RR Set-Sentinel $(G, Q, S^*_b)$}}
	\label{a4}
	Initialize $\tilde{R}$ as a map, and $(q,j)$ is the key for $q\in Q$ and $C^q_j\in\mathcal{C}_q(G)$\;
	Sample a realization $g$ from $\mathcal{G}$ randomly\;
	\ForEach{$q\in Q$}{
		\ForEach{$C^q_j\in\mathcal{C}_q(G)$}{
			Select a node $v_{q,j}$ from $C^q_j$ uniformly\;
			\If{$v_{q,j}\in S^*_b$}{
				$\tilde{R}[(q,j)]\leftarrow\square$\;
				Continue\;
			}
			Initalize a set $R\leftarrow\emptyset$ and a queue $H\leftarrow\emptyset$\;
			$R\leftarrow R\cup\{v_{q,j}\}$\;
			$H\leftarrow H\cup\{v_{q,j}\}$; Mark $v_{q,j}$ as activated\;
			$Flag\leftarrow False$\;
			\While{$H$ is not empty}{
				Let $u$ be the top node of $H$, pop $u$ from $H$\;
				\ForEach{in-neighbor $w$ of $u$ in $g$}{
					\If{$w$ is inactivated}{
						\If{$w\in S^*_b$}{
							$Flag\leftarrow True$\;
							Break\;
						}
						$R\leftarrow R\cup\{w\}$\;
						$H\leftarrow H\cup\{w\}$; Mark $w$ as activated\;
					}
				}
				\If{Flag}{
					Break\;
				}
			}
			\If{Flag}{
				$\tilde{R}[(q,j)]\leftarrow\square$\;
			} \Else {
				$\tilde{R}[(q,j)]\leftarrow R$\;
			}
		}
	}
	\Return $\tilde{R}$\;
\end{algorithm}

It is worth noting that in line 9 of Algorithm \ref{a3}, the only purpose of the collection of G-RR set $\tilde{\mathcal{R}}_2$ is to compute the lower bound $\underline{f}(S^*_b)$ for a fixed node set $S^*_b$. Given a sentinel set $S^*_b$, the sampling process of a random G-RR set can be optimized, which is shown in Algorithm \ref{a4}. Here, the sampling of collection and searching process of subsequent coverage computation can be significantly improved, which will be widely used in the next stage. Shown as Algorithm \ref{a4}, we elaborate the process of generating a random G-RR set with the help of sentinel set $S^*_b$. First, we initialize a map $\tilde{R}$ (a data structure to represent a G-RR set), where the value of $\tilde{R}[(q,j)]$ contains a random RR set generated from community $C^j_q$, representing the same meaning as $R(v_{q,j},g)$ in Eqn. (\ref{eq12}). For each community $C_j^q\in\mathcal{C}_q$ and $q\in Q$, under the realization $g$, we first select a node $v_{q,j}$ from $C_j^q$ uniformly. If the $v_{q,j}$ hit the sentinel set $S^*_b$, we set $\tilde{R}[(q,j)]$ by $\square$ (a placeholder, which means that the RR set  $\tilde{R}[(q,j)]$ has been covered), then terminate the current iteration and enter the next community. If the $v_{q,j}$ does not hit the $S^*_b$, we add $v_{q,j}$ into the set $R$ and queue $H$, and start a traversal from $v_{q,j}$ following the reverse direction of its edges in the while loop from line 13 to line 23. Here, we use a flag in line 12 to indicate whether the sampling hits the $S^*_b$. If yes, the flag will become true, and we set $\tilde{R}[(q,j)]$ by $\square$; If no, the flag will keep false, and we set $\tilde{R}[(q,j)]$ by the sampled RR set $R$. Given a collection $\tilde{\mathcal{R}}_2$, when computing the value of $\underline{f}(S^*_b)$, we need to compute the value of $\Omega(S^*_b,\tilde{\mathcal{R}}_2)$ as Eqn. (\ref{eq17}). Thus, we have $\mathbb{I}(S^*_b\cap R(v^i_{q,j},g^i))=1$ if and only if $\tilde{R}_i[(q,j)]=\square$, which is much easier to compute than before.

\begin{algorithm}[!t] 
	\caption{{RemainingSet $(G, Q, k, S^*_b,\varepsilon, \varepsilon_2, \delta_2)$}}
	\label{a5}
	Set $\theta_1=3\cdot\ln(1/\delta_2)$ and $\theta_{max}$ according to Eqn. (\ref{eq27})\;
	Generate two collections of random G-RR sets $\tilde{\mathcal{R}}_1$ and $\tilde{\mathcal{R}}_2$ with $|\tilde{\mathcal{R}}_1|=|\tilde{\mathcal{R}}_2|=\theta$ by calling G-RR Set-Sentinel $(G,Q,S^*_b)$ shown as Algorithm \ref{a4}\;
	$i_{max}\leftarrow\lceil\log_2(\theta_{max}/\theta_1)\rceil$\;
	\For{$i=1$ to $i_{max}$}{
		Select a size-$(k-b)$ node set $S^*_{k-b}$ from $V\backslash S^*_b$ based on $\tilde{\mathcal{R}}_1$ by the MC-Greedy algorithm\;
		$S^*_k\leftarrow S^*_b\cup S^*_{k-b}$\;
		Get $\overline{f}(S^\circ_k)$ by Eqn. (\ref{eq25}) on $\tilde{\mathcal{R}}_1$, $\delta_u=\delta_2/(3i_{max})$\;
		Get $\underline{f}(S^*_k)$ by Eqn. (\ref{eq24}) on $\tilde{\mathcal{R}}_2$, $\delta_l=\delta_2/(3i_{max})$\;
		\If{$\underline{f}(S^*_b)/\overline{f}(S^\circ_k)\geq(1-1/e-\varepsilon)$}{
			\Return $S^*_{k-b}$\;
		}
		Double the size of $\tilde{\mathcal{R}}_1$ and $\tilde{\mathcal{R}}_2$ by Algorithm \ref{a4}\;
	}
	\Return $S^*_{k-b}$\;
\end{algorithm}

Next, how many random G-RR sets are enough in the collection $\tilde{\mathcal{R}}_1$ to generate a sentinel set $S^*_b$ with a good approximation? Similar to Lemma 6 in HIST \cite{guo2020influence}, we can give the following theorem.
\begin{thm}\label{thm5}
	Let $\tilde{\mathcal{R}}_1$ be the collection of random G-RR sets and $S^*_b$ be a size-$b$ node set selected by Algorithm \ref{a1} based on $\tilde{\mathcal{R}}_1$. Given any $\varepsilon'$ and $\delta'$, if the size of $\tilde{\mathcal{R}}_1$ satisfies $|\tilde{\mathcal{R}}_1|\geq$
	\begin{small}
	\begin{equation}\label{eq26}
		\frac{2\cdot\left[\left(1-(\frac{1}{k})^b\right)\sqrt{\ln\frac{2}{\delta'}}+\sqrt{\left(1-(\frac{1}{k})^b\right)\left(\ln\binom{n}{b}+\ln\frac{2}{\delta'}\right)}\right]^2}{\varepsilon'^2\cdot f(S^\circ_k)},
	\end{equation}
	\end{small}then we have $f(S^*_b)\geq(1-(1-1/k)^b-\varepsilon')\cdot f(S^\circ_k)$ with at least $1-\delta'$ probability.
\end{thm}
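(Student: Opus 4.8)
The plan is to adapt the two-tailed concentration argument of IMM/OPIM to the partial ($b$-step) greedy guarantee, using Lemma \ref{lem2} as the only probabilistic ingredient. Throughout write $OPT = f(S^\circ_k)$ and let $\mu_b = 1-(1-1/k)^b$ be the ratio attained by $b$ greedy steps against budget $k$. Since $\Omega(\cdot;\tilde{\mathcal{R}}_1) = \theta\,\hat f(\cdot;\tilde{\mathcal{R}}_1)$ is monotone and submodular by Theorem \ref{thm4}, running MaxCoverage-Greedy (Algorithm \ref{a1}) for $b$ iterations gives the standard partial greedy inequality $\Omega(S^*_b;\tilde{\mathcal{R}}_1) \geq \mu_b\,\Omega(\hat S^\circ_k;\tilde{\mathcal{R}}_1) \geq \mu_b\,\Omega(S^\circ_k;\tilde{\mathcal{R}}_1)$, the last step holding because $\hat S^\circ_k$ maximizes empirical coverage over all size-$k$ sets. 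This is the sole place the greedy structure enters, and it is what injects the factor $\mu_b$ into the final bound.

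Next I would bound the failure probability $\Pr[f(S^*_b) < (\mu_b-\varepsilon')\,OPT]$ by two auxiliary events, each allotted probability $\delta'/2$ (the origin of the $\ln(2/\delta')$ terms). For the single fixed set $S^\circ_k$, let $\mathcal{E}_1$ be the lower-tail event $\Omega(S^\circ_k;\tilde{\mathcal{R}}_1) \leq (1-\xi_1)\theta\,OPT$; by \eqref{eq21} of Lemma \ref{lem2}, $\Pr[\mathcal{E}_1] \leq \exp(-\xi_1^2\theta\,OPT/2)$. For the data-dependent output, let $\mathcal{E}_2$ be the event that \emph{some} ``bad'' size-$b$ set $S$ (one with $f(S) < (\mu_b-\varepsilon')\,OPT$) is over-covered past the threshold $\mu_b(1-\xi_1)\theta\,OPT$; applying \eqref{eq22} to each fixed such $S$ and union-bounding over the $\binom{n}{b}$ candidate size-$b$ sets controls $\Pr[\mathcal{E}_2]$.

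I would then verify the logical core: on $\overline{\mathcal{E}_1}\cap\overline{\mathcal{E}_2}$ the guarantee holds. Indeed $\overline{\mathcal{E}_1}$ gives $\Omega(S^\circ_k;\tilde{\mathcal{R}}_1) > (1-\xi_1)\theta\,OPT$, so the partial greedy inequality forces $\Omega(S^*_b;\tilde{\mathcal{R}}_1) > \mu_b(1-\xi_1)\theta\,OPT$; were $S^*_b$ itself bad it would be a bad set exceeding the threshold, contradicting $\overline{\mathcal{E}_2}$. Hence $f(S^*_b) \geq (\mu_b-\varepsilon')\,OPT$, and $\Pr[\text{failure}] \leq \Pr[\mathcal{E}_1] + \Pr[\mathcal{E}_2]$. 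The hardest bad set for the upper tail is the one with $f(S)$ approaching $(\mu_b-\varepsilon')OPT$ from below, which pins the required multiplicative deviation in \eqref{eq22} to $\xi_2 = (\varepsilon'-\mu_b\xi_1)/(\mu_b-\varepsilon')$ and puts $(\mu_b-\varepsilon')\,OPT$ in the exponent.

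Finally I would impose $\Pr[\mathcal{E}_1]\leq\delta'/2$ and $\Pr[\mathcal{E}_2]\leq\delta'/2$, yielding two lower bounds on $\theta$ in terms of the free split parameter $\xi_1$, and choose $\xi_1$ to balance them, i.e. to make the $\sqrt{\ln(2/\delta')}$ contribution of the lower tail combine with the $\sqrt{\ln\binom{n}{b}+\ln(2/\delta')}$ contribution of the union-bounded upper tail into the single squared sum of \eqref{eq26}. I expect the main obstacle to be exactly this optimization: carrying the $(2+\tfrac{2}{3}\xi_2)$ denominator of \eqref{eq22} through the algebra while choosing $\xi_1$ so the two terms collapse to $[\mu_b\sqrt{\ln(2/\delta')}+\sqrt{\mu_b(\ln\binom{n}{b}+\ln(2/\delta'))}]^2$; the rest is a routine application of Lemma \ref{lem2} and the submodular greedy bound. (I note that the coefficient $1-(1/k)^b$ printed in \eqref{eq26} should read $\mu_b=1-(1-1/k)^b$ to match the approximation ratio in the statement.)
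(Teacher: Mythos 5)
Your proposal follows essentially the same route as the paper's proof: the partial-greedy bound on coverage (the paper's Lemma~\ref{lem3}), a lower-tail bound on $\Omega(S^\circ_k;\tilde{\mathcal{R}}_1)$ via Eqn.~(\ref{eq21}) (Lemma~\ref{lem4}), an upper-tail bound union-bounded over the $\binom{n}{b}$ size-$b$ sets via Eqn.~(\ref{eq22}) (Lemma~\ref{lem5}, whose algebra indeed implicitly restricts to the ``bad'' sets exactly as you make explicit), and finally setting $\delta'_1=\delta'_2=\delta'/2$ and balancing $\theta_1=\theta_2$ to get Eqn.~(\ref{eq26}). Your closing remark is also confirmed by carrying out the paper's own balancing step, which yields $2\bigl[\mu_b\sqrt{\ln(2/\delta')}+\sqrt{\mu_b\,(\ln\binom{n}{b}+\ln(2/\delta'))}\bigr]^2/(\varepsilon'^2 f(S^\circ_k))$ with $\mu_b=1-(1-1/k)^b$, so the factor $(1-(1/k)^b)$ printed in Eqn.~(\ref{eq26}) is indeed a typo.
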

\noindent
Based on Theorem \ref{thm5}, we need to give a lower bound of the $f(S^\circ_k)$ to get a $\theta_{max}$. Here, we define $f_{min}=$
\begin{equation*}
	\sup_{S_k\subseteq V}\{f(S_k)\}=(1-\lambda)\cdot\frac{k}{n}+\lambda\cdot\frac{\min_{q\in Q, C^q_j\in\mathcal{C}_q}\{a_{q,j}\}\cdot k}{\sum_{q\in Q}w_q\sum_{C_j^q\in\mathcal{C}_q}a_{q,j}\cdot|C^q_j|}.
\end{equation*} 
By replacing $f(S^\circ_k)$ with $f_{min}$, $\ln\binom{n}{b}$ with $\ln\binom{n}{k}$, $1-(1/k)^b$ with $1$, and setting $\varepsilon'=\varepsilon_1$ and $\delta'=\delta_1/3$, the maximum number of random G-RR sets in the stage of sentinel set selection is
\begin{equation}\label{eq27}
	\theta_{max}=\frac{2\cdot\left(\sqrt{\ln\frac{6}{\delta_1}}+\sqrt{\ln\binom{n}{k}+\ln\frac{6}{\delta_1}}\right)^2}{\varepsilon_1^2\cdot f_{min}}.
\end{equation}
Thus, if the size of the collection $\tilde{\mathcal{R}}_1$ is larger than $\theta_{max}$, the node set $S^*_b$ selected based on $\tilde{\mathcal{R}}_1$ satisfies $(1-(1-1/k)^b-\varepsilon_1)$ approximation with at least $1-\delta_1/3$ probability.

Shown as Algorithm \ref{a3}, it has at most $i_{max}$ iterations. In the last iteration, if the $\underline{f}(S^*_b)/\overline{f}(S^\circ_k)$ is still unqualified, it will return $S^*_b$ as the final result, where the failure probability, i.e. $f(S^*_b)<(1-(1-1/k)^b-\varepsilon_1)\cdot f(S^\circ_b)$, is at most $\delta_1/3$. In each of the first $i_{max}-1$ iterations, the failure probability of the upper bound in line 7 is $\delta_1/(3i_{max})$ and the failure probabilities of the lower bound in line 11 and 14 are $\delta_1/(6i_{max})$ respectively. By the union bound, the total failure probability of the first $i_{max}-1$ iterations is at most $2\delta_1/3$, then the sentinel set returned by Algorithm \ref{a3} satisfies the desired approximation guarantee with at least $1-\delta_1$ probability.

\subsubsection{Remaining set selection}
After obtaining the sentinel set $S^*_b$ at the first stage, we make full use of it to accelerate the generation of random G-RR sets and get the remaining $k-b$ seed nodes. The process of remaining set selection can be shown in Algorithm \ref{a5}. Here, we first sample two collections of random RR set $\tilde{\mathcal{R}}_1$ and $\tilde{\mathcal{R}}_2$ by invoking Algorithm \ref{a4}. Based on $\tilde{\mathcal{R}}_1$, we can select a size-$(k-b)$ node set $S^*_{k-b}$ from $V\backslash S^*_b$ by the MC-Greedy algorithm.
\begin{rem}\label{rem1}
	In line 5 of Algorithm \ref{a5}, given a collection $\tilde{\mathcal{R}}_1$, we can apply the MC-Greedy algorithm to iteratively select the optimal node. But there is a difference here since it is a greedy strategy based on $S^*_{b}$. Thus, according to Algorithm \ref{a1}, we make a little change. We initialize $S_0\leftarrow S^*_{b}$, and iteratively select from $a=1$ to $k-b$, finally return $S_{k-b}$. Thus, when computing the value of $\Omega(S_{a-1}, \tilde{\mathcal{R}}_1)$ as Eqn. (\ref{eq17}) in the MC-Greedy process, we have $\mathbb{I}(S_{a-1}\cap R(v^i_{q,j},g^i))=1$ if and only if $S_{a-1}\cap\tilde{R}_i[(q,j)]\neq\emptyset$ or $\tilde{R}_i[(q,j)]=\square$. This is also the core mystery of our G-HIST algorithm.
\end{rem}
\noindent
After obtaining a feasible solution $S^*_k$ in line 6, we use the $\tilde{\mathcal{R}}_1$ to compute the upper bound $\overline{f}(S^\circ_k)$ and use the $\tilde{\mathcal{R}}_2$ to compute the lower bound $\underline{f}(S^*_k)$. If the $\underline{f}(S^*_b)/\overline{f}(S^\circ_k)$ has satisfies a $(1-1/e-\varepsilon)$ approximation, we return the $S^*_k$ directly; Otherwise, we double the collection $\tilde{\mathcal{R}}_1$ and $\tilde{\mathcal{R}}_2$, and repeat the above process to re-select the remaining node set until satisfying the approximation ratio or reaching the maximum number of iteration $i_{max}$.

According to Remark \ref{rem1}, in the stage of remaining set selection, the average size of random G-RR sets can be significantly reduced and the computational process of coverage in the MC-Greedy algorithm can also be simplified because the RR-set that intersects the sentinel set $S^*_b$ has been discharged in advanced. Next, how many random G-RR sets are enough in the collection $\tilde{\mathcal{R}}_1$ to generate a remaining set $S^*_{k-b}$ with a good approximation guarantee? Similar to Lemma 7 in HIST \cite{guo2020influence}, we can give the following theorem.
\begin{thm}\label{thm6}
	Given any $\varepsilon'$, $\delta'$, and $S^*_b$ with $f(S^*_b)\geq(1-(1-1/k)^b-\varepsilon_1)\cdot f(S^\circ_k)$, if the size of $\tilde{\mathcal{R}}_1$ satisfies $|\tilde{\mathcal{R}}_1|\geq$
	\begin{equation}\label{eq28}
		\frac{2\cdot\left[	\sqrt{\ln\frac{3}{\delta'}}+\sqrt{\left(1-\frac{1}{e}\right)\left(\ln\binom{n-b}{k-b}+\ln\frac{3}{\delta'}\right)}\right]^2}{\varepsilon'^2\cdot f(S^\circ_k)},
	\end{equation}
	then the remaining set $S^*_{k-b}$ selected by the adapted MC-Greedy algorithm satisfies $f(S^*_b\cup S^*_{k-b})\geq(1-1/e-\varepsilon_1-\varepsilon')\cdot f(S^\circ_k)$ with at least $1-\delta'$ probability.
\end{thm}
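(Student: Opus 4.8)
The plan is to bound the probability of the failure event $f(S^*_b\cup S^*_{k-b}) < (1-1/e-\varepsilon_1-\varepsilon')\cdot f(S^\circ_k)$ by $\delta'$, combining two ingredients: the deterministic greedy guarantee on the empirical objective $\hat{f}(\cdot;\tilde{\mathcal{R}}_1)$, which is monotone and submodular by Theorem \ref{thm4}, and the concentration bounds of Lemma \ref{lem2} used to transfer between the empirical coverage under $\tilde{\mathcal{R}}_1$ and the true objective $f$. Since the adapted MC-Greedy of Remark \ref{rem1} maximizes $\Omega=\theta\cdot\hat{f}$, I would argue throughout with $g(\cdot):=\hat{f}(\cdot;\tilde{\mathcal{R}}_1)$ and invoke the true $f$ only at the few points where concentration is applied.

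First I would establish the empirical greedy guarantee for the adapted procedure, which sets $S_0=S^*_b$ and appends $k-b$ nodes one at a time. Comparing the greedy iterates $S_i$ against the size-$k$ set $S^\circ_k$, the standard submodular argument gives $g(S^\circ_k)-g(S_i)\le k\,[g(S_{i+1})-g(S_i)]$, because $|S^\circ_k\setminus S_i|\le k$ and the greedy marginal dominates every individual marginal; this yields the recursion $g(S^\circ_k)-g(S_{i+1})\le(1-1/k)(g(S^\circ_k)-g(S_i))$. Iterating it for $k-b$ steps produces
\begin{equation*}
g(S^*_k)\ge\left(1-\left(1-\tfrac1k\right)^{k-b}\right)g(S^\circ_k)+\left(1-\tfrac1k\right)^{k-b}g(S^*_b),
\end{equation*}
with $S^*_k=S^*_b\cup S^*_{k-b}$. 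The conceptual key is the telescoping identity
\begin{equation*}
\left(1-\left(1-\tfrac1k\right)^{k-b}\right)+\left(1-\tfrac1k\right)^{k-b}\left(1-\left(1-\tfrac1k\right)^{b}\right)=1-\left(1-\tfrac1k\right)^{k}\ge 1-\tfrac1e.
\end{equation*}
Feeding the stage-one hypothesis $f(S^*_b)\ge(1-(1-1/k)^b-\varepsilon_1)\cdot f(S^\circ_k)$ into the $g(S^*_b)$ term, the stage-two factor $(1-1/k)^{k-b}$ multiplies with the stage-one factor $1-(1-1/k)^b$ to recover exactly $1-(1-1/k)^k\ge 1-1/e$, while the stage-one slack supplies the additive $-\varepsilon_1$.

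Second I would make the empirical-to-true transfer rigorous. Because $S^\circ_k$ is fixed, a lower-tail application of Eqn (\ref{eq21}) controls the event that $\hat{f}(S^\circ_k;\tilde{\mathcal{R}}_1)$ underestimates $f(S^\circ_k)$; because $S^*_b$ is determined in stage one before $\tilde{\mathcal{R}}_1$ is drawn, a second lower-tail bound controls $\hat{f}(S^*_b;\tilde{\mathcal{R}}_1)$; and because the returned $S^*_k$ is random, I would upper-bound $\hat{f}(S^*_k;\tilde{\mathcal{R}}_1)$ against $f(S^*_k)$ via Eqn (\ref{eq22}) together with a union bound over the $\binom{n-b}{k-b}$ possible choices of $S^*_{k-b}\subseteq V\setminus S^*_b$, which is the source of the $\binom{n-b}{k-b}$ term. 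Splitting the budget $\delta'$ into three parts across these fixed-set and union-bounded events gives the $\ln(3/\delta')$ factor, and allocating the error $\varepsilon'$ between the lower-tail estimate of the optimum and the union-bounded upper-tail estimate of the greedy set so as to minimise the required sample count produces the $(\sqrt{\,\cdot\,}+\sqrt{\,\cdot\,})^2$ form of Eqn (\ref{eq28}); the $(1-1/e)$ weighting on the union-bound logarithm reflects that the greedy objective value tracks only a $(1-1/e)$ fraction of the optimum, so the deviation charged to the $\binom{n-b}{k-b}$ candidates is scaled accordingly. Chaining these bounds with the greedy inequality and the telescoping identity then delivers $f(S^*_k)\ge(1-1/e-\varepsilon_1-\varepsilon')\cdot f(S^\circ_k)$ with probability at least $1-\delta'$.

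The main obstacle I anticipate is not the greedy inequality or the telescoping identity, both of which are clean, but pinning down the exact constants of Eqn (\ref{eq28}): this requires choosing the split of $\varepsilon'$ and $\delta'$ among the fixed-set lower-tail estimates and the union-bounded upper-tail estimate, applying the asymmetric bounds of Eqn (\ref{eq21}) and Eqn (\ref{eq22}) in the correct directions, and verifying that the minimisation over the split reproduces the $(\sqrt{\,\cdot\,}+\sqrt{\,\cdot\,})^2$ expression. A secondary subtlety is that $S^*_b$ and $S^*_{k-b}$ are selected on independent sample collections, so the stage-one guarantee of Theorem \ref{thm5} must be used purely as a hypothesis while the concentration of $\hat{f}(S^*_b;\tilde{\mathcal{R}}_1)$ is treated as a separate fixed-set event under the fresh stage-two sample, rather than reusing the stage-one randomness.
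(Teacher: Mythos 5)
Your proposal follows essentially the same route as the paper's proof: the empirical greedy guarantee with sentinel initialization (the paper cites this as Lemma~11 of the HIST paper, where you re-derive it directly), the telescoping identity recovering $1-(1-1/k)^k$, two fixed-set lower-tail bounds (on $S^\circ_k$ and on $S^*_b$) via Eqn.~(\ref{eq21}), and one union-bounded upper-tail bound over the $\binom{n-b}{k-b}$ candidate remaining sets via Eqn.~(\ref{eq22}), with the same $\delta'/3$ split and the same $\varepsilon'=\varepsilon'_1+\varepsilon'_2$ allocation producing Eqn.~(\ref{eq28}). The argument and the constants match the paper's Lemmas~\ref{lem6}--\ref{lem9}, so the proposal is correct as it stands.
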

\noindent
Based on Theorem \ref{thm6}, by replacing $f(S^\circ_k)$ with $f_{min}$, and setting $\varepsilon'=\varepsilon_2$ and $\delta'=\delta_2/3$, the maximum number of random G-RR sets in the stage of remaining set selection is

\begin{small}
\begin{equation}\label{eq29}
	\theta_{max}=\frac{2\cdot\left[\sqrt{\ln\frac{9}{\delta_2}}+\sqrt{\left(1-\frac{1}{e}\right)\left(\ln\binom{n-b}{k-b }+\ln\frac{9}{\delta_2}\right)}\right]^2}{\varepsilon_2^2\cdot f_{min}}.
\end{equation}
\end{small}Thus, if the size of the collection $\tilde{\mathcal{R}}_1$ is larger than $\theta_{max}$, the node set $S^*_{k-b}$ selected based on $\tilde{\mathcal{R}}_1$ satisfies $(1-1/e-\varepsilon_1-\varepsilon_2)$ approximation with at least $1-\delta_2/3$ probability.

Shown as Algorithm \ref{a5}, if the $\underline{f}(S^*_b)/\overline{f}(S^\circ_k)$ is still unqualified in the last iteration, it will return $S^*_{k-b}$ as the final result with at most $\delta_2/3$ failure probability. By the union bound, the total failure probability of the first $i_{max}-1$ iterations is at most $2\delta_2/3$, then the remaining set returned by Algorithm \ref{a5} satisfies the desired approximation guarantee with at least $1-\delta_2$ probability.

\section{Theoretical Analysis}
In this section, we first prove the Theorem \ref{thm5} and Theorem \ref{thm6} proposed in last section, then show the analysis of time complexity and main theoretical result in this paper.
\subsection{Proof of Theorem \ref{thm5}}
We first give several lemmas as follows that are useful to prove the Theorem \ref{thm5}.
\begin{lem}\label{lem3}
	Let $S^*_b$ be the node set selected by Algorithm \ref{a1} on $\tilde{\mathcal{R}}$, then we have $\Omega(S^*_b; \tilde{\mathcal{R}})\geq (1-(1-1/k)^b)\cdot \Omega(S^\circ_k; \tilde{\mathcal{R}})$.
\end{lem}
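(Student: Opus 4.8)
The plan is to reduce this to the classical greedy guarantee for monotone submodular maximization under a cardinality constraint, but tracked after only $b$ iterations instead of the full $k$. First I would record the two structural facts I intend to lean on: by Theorem \ref{thm4} the generalized coverage $\Omega(\cdot; \tilde{\mathcal{R}})=\theta\cdot\hat{f}(\cdot; \tilde{\mathcal{R}})$ is monotone and submodular in its argument, and from the definition in Eqn. (\ref{eq17}) we have $\Omega(\emptyset; \tilde{\mathcal{R}})=0$ since $\mathbb{I}(\emptyset\cap R(v^i_{q,j},g^i))=0$ for every term. I would also introduce the true $\Omega$-maximizer $\hat{S}^\circ_k$, the size-$k$ set achieving the largest generalized coverage, which serves as the natural benchmark; crucially $\Omega(\hat{S}^\circ_k; \tilde{\mathcal{R}})\geq\Omega(S^\circ_k; \tilde{\mathcal{R}})$ because $|S^\circ_k|\leq k$ and $\hat{S}^\circ_k$ is optimal, so it suffices to prove the bound against $\hat{S}^\circ_k$ and then weaken.

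Next I would establish the per-iteration progress inequality. Let $S^*_a$ denote the greedy partial solution after $a$ iterations of Algorithm \ref{a1}. By monotonicity and submodularity, telescoping the marginal gains over the elements of $\hat{S}^\circ_k$ gives
\begin{equation*}
	\Omega(\hat{S}^\circ_k; \tilde{\mathcal{R}})-\Omega(S^*_a; \tilde{\mathcal{R}})\leq\sum_{o\in\hat{S}^\circ_k}\Delta_\Omega(o|S^*_a; \tilde{\mathcal{R}}).
\end{equation*}
Since MC-Greedy selects the node of maximum marginal coverage over all of $V\backslash S^*_a$, the chosen gain $\Delta_\Omega(v'_{a+1}|S^*_a; \tilde{\mathcal{R}})$ is at least the average of the $|\hat{S}^\circ_k|\leq k$ terms on the right, hence
\begin{equation*}
	\Omega(S^*_{a+1}; \tilde{\mathcal{R}})-\Omega(S^*_a; \tilde{\mathcal{R}})\geq\frac{1}{k}\left(\Omega(\hat{S}^\circ_k; \tilde{\mathcal{R}})-\Omega(S^*_a; \tilde{\mathcal{R}})\right).
\end{equation*}

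Finally I would close the recursion. Writing the gap $g_a=\Omega(\hat{S}^\circ_k; \tilde{\mathcal{R}})-\Omega(S^*_a; \tilde{\mathcal{R}})$, the progress inequality rearranges to $g_{a+1}\leq(1-1/k)g_a$, and since $g_0=\Omega(\hat{S}^\circ_k; \tilde{\mathcal{R}})$ (using $\Omega(\emptyset; \tilde{\mathcal{R}})=0$), a straightforward induction yields $g_b\leq(1-1/k)^b\cdot\Omega(\hat{S}^\circ_k; \tilde{\mathcal{R}})$. Unwinding the definition of $g_b$ gives $\Omega(S^*_b; \tilde{\mathcal{R}})\geq(1-(1-1/k)^b)\cdot\Omega(\hat{S}^\circ_k; \tilde{\mathcal{R}})$, and chaining with $\Omega(\hat{S}^\circ_k; \tilde{\mathcal{R}})\geq\Omega(S^\circ_k; \tilde{\mathcal{R}})$ delivers the claim. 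I do not expect a genuine obstacle here, as this is the standard generalized greedy argument; the only point requiring care is the substitution at the end, namely that the guarantee is first proved against the coverage-optimal set $\hat{S}^\circ_k$ and only afterwards transferred to $S^\circ_k$ (the optimizer of $f$, not of $\Omega$) via optimality, exactly as in the chain of Eqn. (\ref{eq20}).
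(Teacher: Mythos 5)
Your proposal is correct and takes essentially the same route as the paper: the paper's entire proof is the remark that the lemma ``is directly from the monotonicity and submodularity of $\Omega(S;\tilde{\mathcal{R}})$ shown in Theorem \ref{thm4}," i.e., it invokes the classical greedy guarantee of Nemhauser et al.\ that you write out in full (per-iteration gap contraction by $(1-1/k)$, induction, then chaining to $S^\circ_k$ as in Eqn.~(\ref{eq20})). Your expanded argument is sound; the only stylistic difference is that the benchmark can be taken to be $S^\circ_k$ directly, since the telescoping inequality holds against any set of size at most $k$, making the detour through $\hat{S}^\circ_k$ optional.
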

\noindent
This lemma is directly from the monotonicity and submodularity of  $\Omega(S; \tilde{\mathcal{R}})$ with respect to $S$ shown in Theorem \ref{thm4}. Denoting by $|\tilde{\mathcal{R}}|=\theta$, the $\Omega(S^\circ_k; \tilde{\mathcal{R}})/\theta$ is an unbiased estimation of $f(S^\circ_k)$. Thus, we have $\Omega(S^\circ_k; \tilde{\mathcal{R}})/\theta\approx f(S^\circ_k)$ when the $\theta$ is large enough.
\begin{lem}\label{lem4}
	Given $\delta'_1$, $\varepsilon'_1$, and 
	\begin{equation}\label{eq30}
		\theta_1=2\cdot\ln(1/\delta'_1)/({\varepsilon'_1}^2\cdot f(S^\circ_k)),
	\end{equation}
	if $\theta\geq\theta_1$, then we have $\Omega(S^\circ_k; \tilde{\mathcal{R}})/\theta> (1-\varepsilon'_1)\cdot f(S^\circ_k)$ with at least $1-\delta'_1$ probability.
\end{lem}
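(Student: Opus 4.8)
The plan is to derive this directly from the lower-tail concentration bound of Lemma \ref{lem2}. As noted just before the statement, $\Omega(S^\circ_k; \tilde{\mathcal{R}})/\theta$ is an unbiased estimator of $f(S^\circ_k)$, so I expect its downward deviation below the mean to decay exponentially in $\theta$; it then suffices to push that failure probability below $\delta'_1$ by choosing $\theta$ large enough.

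First I would instantiate Eqn. (\ref{eq21}) with the fixed set $S=S^\circ_k$ and the parameter $\xi=\varepsilon'_1$, which yields
\[
	\Pr\left[\Omega(S^\circ_k; \tilde{\mathcal{R}})\leq(1-\varepsilon'_1)\theta f(S^\circ_k)\right]\leq\exp\left(-\frac{{\varepsilon'_1}^2\theta f(S^\circ_k)}{2}\right).
\]
The event inside this probability is exactly the complement of the desired event $\Omega(S^\circ_k; \tilde{\mathcal{R}})/\theta>(1-\varepsilon'_1)f(S^\circ_k)$, so proving the lemma reduces to forcing the right-hand exponential to be at most $\delta'_1$.

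Next I would solve $\exp(-{\varepsilon'_1}^2\theta f(S^\circ_k)/2)\leq\delta'_1$ for $\theta$. Taking logarithms gives ${\varepsilon'_1}^2\theta f(S^\circ_k)/2\geq\ln(1/\delta'_1)$, i.e. $\theta\geq 2\ln(1/\delta'_1)/({\varepsilon'_1}^2 f(S^\circ_k))=\theta_1$. Since the exponential bound is monotonically decreasing in $\theta$, any $\theta\geq\theta_1$ keeps the failure probability at most $\delta'_1$, and taking complements then delivers the claim with probability at least $1-\delta'_1$.

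There is essentially no hard step here; it is a textbook Chernoff-style argument, and the only points requiring care are bookkeeping ones. I would check that the strict inequality in the conclusion is consistent with the non-strict inequality in the tail event (they are exact complements, so this is fine), and I would note that Lemma \ref{lem2} applies to the \emph{fixed} set $S^\circ_k$: since $S^\circ_k$ is the deterministic optimum and is independent of the random collection $\tilde{\mathcal{R}}$, no union bound over candidate sets is needed and the bound may be applied verbatim. The main conceptual content is really just recognizing that the prescribed $\theta_1$ is precisely the threshold at which the exponential tail equals $\delta'_1$.
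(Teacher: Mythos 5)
Your proposal is correct and follows exactly the paper's own argument: both apply the lower-tail bound Eqn.~(\ref{eq21}) of Lemma~\ref{lem2} to the fixed set $S^\circ_k$ with $\xi=\varepsilon'_1$, then observe that substituting $\theta\geq\theta_1$ from Eqn.~(\ref{eq30}) drives the exponential failure probability down to exactly $\delta'_1$. Your added remarks on complement events and the absence of a union bound are sound bookkeeping but do not change the substance.
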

\begin{proof}
	By applying Eqn. (\ref{eq21}) in Lemma \ref{lem2}, we have $\Pr[\Omega(S^\circ_k; \tilde{\mathcal{R}})/\theta\leq(1-\varepsilon'_1)\cdot f(S^\circ_k)]\leq\exp(-({\varepsilon'_1}^2/2)\cdot\theta f(S^\circ_k))\leq \exp(-({\varepsilon'_1}^2/2)\cdot\theta_1 f(S^\circ_k))=\delta'_1$ by substituting $\theta_1$ with the above Eqn. (\ref{eq30}).
\end{proof}
\noindent
Based on Lemma \ref{lem3} and Lemma \ref{lem4}, we have
\begin{equation}\label{eq31}
	\Omega(S^*_b; \tilde{\mathcal{R}})/\theta>(1-(1-1/k)^b)(1-\varepsilon'_1)\cdot f(S^\circ_k)
\end{equation}
with at least $1-\delta'_1$ probability. Next, we can connect the $f(S^*_b)$ with $f(S^\circ_k)$ as the following lemma.
\begin{lem}\label{lem5}
	Given $\delta'_2$, $\varepsilon'$ with $\varepsilon'>\varepsilon'_1$, and
	\begin{equation}
		\theta_2=\frac{2\cdot\left[1-(1-1/k)^b\right]\cdot\left(\ln\binom{n}{b}+\ln\frac{1}{\delta'_2}\right)}{[\varepsilon'-(1-(1-1/k)^b)\cdot\varepsilon'_1]^2\cdot f(S^\circ_k)},
	\end{equation}
	if Eqn. (\ref{eq31}) holds and $\theta\geq\theta_2$, then we have $f(S^*_b)>(1-(1-1/k)^b-\varepsilon')\cdot f(S^\circ_k)$ with at least $1-\delta'_2$ probability.
\end{lem}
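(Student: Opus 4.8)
The plan is to prove Lemma~\ref{lem5} by converting the high-probability guarantee on the \emph{estimated} coverage of $S^*_b$ from Eqn.~(\ref{eq31}) into a high-probability \emph{lower bound on the true value} $f(S^*_b)$. The obstacle is that $S^*_b$ is not a fixed set: it is the output of the greedy algorithm run on the same random collection $\tilde{\mathcal{R}}$ that we are using to estimate its coverage, so we cannot apply Lemma~\ref{lem2} to $S^*_b$ directly (the set and the sample are correlated). First I would fix this by a union bound over all candidate size-$b$ sets. The idea is: for any \emph{fixed} set $S$ of size $b$ whose true value is small, the concentration inequality Eqn.~(\ref{eq22}) makes it unlikely that its estimated coverage $\Omega(S;\tilde{\mathcal{R}})/\theta$ overshoots its true value $f(S)$ by too much; taking a union bound over all $\binom{n}{b}$ such sets controls the event that \emph{any} ``bad'' set has inflated estimated coverage. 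Since the greedy output $S^*_b$ is one of these $\binom{n}{b}$ sets, its estimated coverage cannot be much larger than $f(S^*_b)$ either.

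The key steps, in order, are as follows. I would define the ``failure'' set to be those $S$ with $|S|=b$ and $f(S)\le(1-(1-1/k)^b-\varepsilon')\cdot f(S^\circ_k)$, and aim to show that with probability at least $1-\delta'_2$, every such set $S$ satisfies $\Omega(S;\tilde{\mathcal{R}})/\theta\le(1-(1-1/k)^b)(1-\varepsilon'_1)\cdot f(S^\circ_k)$. Applying the upper-tail bound Eqn.~(\ref{eq22}) to a fixed bad $S$ with the deviation parameter $\xi$ chosen so that $(1+\xi)f(S)$ reaches the threshold $(1-(1-1/k)^b)(1-\varepsilon'_1)f(S^\circ_k)$, I would get a per-set failure probability; the gap driving $\xi$ is exactly $\varepsilon'-(1-(1-1/k)^b)\varepsilon'_1$, which is positive by the hypothesis $\varepsilon'>\varepsilon'_1\ge(1-(1-1/k)^b)\varepsilon'_1$. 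Multiplying by $\binom{n}{b}$ via the union bound and requiring this product to be at most $\delta'_2$ is what yields the stated threshold $\theta_2$, since the definition of $\theta_2$ has $\ln\binom{n}{b}+\ln(1/\delta'_2)$ in the numerator and $[\varepsilon'-(1-(1-1/k)^b)\varepsilon'_1]^2$ in the denominator. Finally, I would combine this with Eqn.~(\ref{eq31}): on the intersection of the two good events, $S^*_b$ has estimated coverage strictly exceeding $(1-(1-1/k)^b)(1-\varepsilon'_1)f(S^\circ_k)$, so it cannot be a bad set, which means $f(S^*_b)>(1-(1-1/k)^b-\varepsilon')\cdot f(S^\circ_k)$.

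The main obstacle is choosing $\xi$ correctly in the Chernoff-type bound Eqn.~(\ref{eq22}) and tracking the algebra so that the resulting sample-size requirement matches $\theta_2$ exactly. In particular, the denominator of Eqn.~(\ref{eq22}) contains the extra $\frac{2}{3}\xi$ term, so I would need to verify that either this term is absorbed into the constant, or that the relevant regime of $\xi$ makes the cruder bound $\exp(-\xi^2\theta f(S)/2)$ sufficient; the cleanest route is to note that because we only need the inequality for sets $S$ with $f(S)$ bounded above by the bad-set threshold, the worst-case deviation $\xi$ is bounded, letting the $2+\frac23\xi$ factor be treated as a constant close to $2$. A secondary subtlety is ensuring that the correlation between $S^*_b$ and $\tilde{\mathcal{R}}$ is genuinely handled by the union bound rather than assumed away: the argument must quantify over all size-$b$ sets \emph{before} revealing the sample, which is precisely why the $\ln\binom{n}{b}$ term appears and why the per-set bound must hold uniformly. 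Note that this lemma and its threshold $\theta_2$ are the building blocks that, after optimizing over the split of $\varepsilon'$ and $\delta'$, combine with Lemma~\ref{lem4} to produce the clean sample complexity Eqn.~(\ref{eq26}) of Theorem~\ref{thm5}.
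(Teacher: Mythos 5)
Your proposal is correct and follows essentially the same route as the paper's own proof: applying the upper-tail bound of Lemma~\ref{lem2} (Eqn.~(\ref{eq22})) to fixed size-$b$ sets with deviation gap $\varepsilon'_2=\varepsilon'-(1-(1-1/k)^b)\cdot\varepsilon'_1$, taking a union bound over the $\binom{n}{b}$ candidates to break the correlation between $S^*_b$ and $\tilde{\mathcal{R}}$, and then combining with Eqn.~(\ref{eq31}) to lift the coverage guarantee to a guarantee on $f(S^*_b)$. Your ``bad set'' contrapositive framing is only a cosmetic variant of the paper's uniform estimation-error bound, and your observation that the $\frac{2}{3}\xi$ term is controlled because bad sets have $f(S)$ bounded by the threshold is exactly the implicit step in the paper's chain of inequalities.
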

\begin{proof}
	Let $S_b$ be any size-$b$ subset of $V$. By applying Eqn. (\ref{eq22}) in Lemma \ref{lem2} and setting $\varepsilon'_2=\varepsilon'-(1-(1-1/k)^b)\cdot\varepsilon'_1$, we have
	\begin{align}
		\Pr&\left[\Omega(S_b; \tilde{\mathcal{R}})/\theta-f(S_b)\geq\varepsilon'_2\cdot f(S^\circ_k)\right]\nonumber\\
		&=\Pr\left[\Omega(S_b; \tilde{\mathcal{R}})-\theta f(S_b)\geq\frac{\varepsilon'_2\cdot f(S^\circ_k)}{f(S_b)}\cdot \theta f(S_b)\right]\nonumber\\
		&\leq\exp\left(-\frac{{\varepsilon'_2}^2\cdot f(S^\circ_k)^2}{2f(S_b)+\frac{2}{3}\varepsilon'_2\cdot f(S^\circ_k)}\cdot\theta\right)\nonumber\\
		&\leq\exp\left(-\frac{{\varepsilon'_2}^2\cdot f(S^\circ_k)}{2(1-(1-1/k)^b-\varepsilon')+\frac{2}{3}\varepsilon'_2}\cdot\theta\right)\nonumber\\
		&\leq\exp\left(-\frac{{\varepsilon'_2}^2\cdot f(S^\circ_k)}{2(1-(1-1/k)^b)}\cdot\theta_2\right)\nonumber\\
		&=\delta'_2/\binom{n}{b}.\nonumber
	\end{align}
	Thus, if $\theta\geq\theta_2$, the $S^*_b$ returned by Algorithm \ref{a1} satisfies $\Omega(S^*_b; \tilde{\mathcal{R}})/\theta-f(S^*_b)<\varepsilon'_2\cdot f(S^\circ_k)$ with at least $1-\delta'_2$ probability based on the union bound of at most $\binom{n}{b}$ size-$b$ node sets. Thus, if the Eqn. (\ref{eq31}) holds, we have
		\begin{align}
		f(S^*_b)&>\Omega(S^*_b; \tilde{\mathcal{R}})/\theta-\varepsilon'_2\cdot f(S^\circ_k)\nonumber\\
		&>[(1-(1-1/k)^b)(1-\varepsilon'_1)-\varepsilon'_2]\cdot f(S^\circ_k)\nonumber\\
		&=(1-(1-1/k)^b-\varepsilon')\cdot f(S^\circ_k)\nonumber
	\end{align}
	with at least $1-\delta'_2$ probability.
\end{proof}

Based on Lemma \ref{lem4} and Lemma \ref{lem5}, if $\theta\geq\max\{\theta_1,\theta_2\}$, we have $f(S^*_b)>(1-(1-1/k)^b-\varepsilon')\cdot f(S^\circ_k)$ with at least $1-\delta'_1-\delta'_2$ probability. By setting $\delta'_1=\delta'_2=\delta'/2$ and $\theta_1=\theta_2=\theta'$, we have $\theta'$ equals the Eqn. (\ref{eq26}) similar to the techniques in \cite{tang2015influence}. Theorem \ref{thm5} has been proven.

\subsection{Proof of Theorem \ref{thm6}}
We first give several lemmas as follows that are useful to prove the Theorem \ref{thm6}.
\begin{lem}\label{lem6}(\cite{guo2020influence})
	Given any size-$b$ sentinel set $S^*_b$, let $S^*_{k-b}$ be a size-$(k-b)$ remaining set selected by the adapted MC-Greedy algorithm from $V\backslash S^*_b$ on $\tilde{\mathcal{R}}$ like the process of Remark \ref{rem1}. We have $\Omega(S^*_b\cup S^*_{k-b}; \tilde{\mathcal{R}})\geq$
	\begin{equation}
		(1-(1-1/k)^{k-b})\cdot\Omega(S^\circ_k; \tilde{\mathcal{R}})+(1-1/k)^{k-b}\cdot\Omega(S^*_b; \tilde{\mathcal{R}}).
	\end{equation}
\end{lem}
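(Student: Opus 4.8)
The plan is to adapt the standard greedy-analysis argument for monotone submodular maximization, but with the twist that the greedy procedure starts from the nonempty partial solution $S^*_b$ rather than from $\emptyset$. I would work entirely with the coverage function $\Omega(\cdot; \tilde{\mathcal{R}})$, which by Theorem \ref{thm4} is monotone and submodular, so all the tools I need are already available. Write $S_0 = S^*_b$ and let $S_1, S_2, \dots, S_{k-b}$ denote the successive partial solutions produced by the adapted MC-Greedy algorithm of Remark \ref{rem1}, so that $S_{k-b} = S^*_b \cup S^*_{k-b}$. The target comparison set is the unknown optimizer $S^\circ_k$, which has size $k$.

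The key step is the per-iteration marginal-gain inequality. At iteration $a$ (for $1 \le a \le k-b$), submodularity together with the fact that $S^\circ_k$ has at most $k$ elements gives a lower bound on the best available marginal gain: the greedy choice $v'_a$ satisfies
\begin{equation*}
	\Delta_\Omega(v'_a \mid S_{a-1}; \tilde{\mathcal{R}}) \ge \frac{1}{k}\left[\Omega(S^\circ_k \cup S_{a-1}; \tilde{\mathcal{R}}) - \Omega(S_{a-1}; \tilde{\mathcal{R}})\right] \ge \frac{1}{k}\left[\Omega(S^\circ_k; \tilde{\mathcal{R}}) - \Omega(S_{a-1}; \tilde{\mathcal{R}})\right],
\end{equation*}
where the first inequality is the usual averaging argument (the total gain of adding all of $S^\circ_k$ to $S_{a-1}$ is at most the sum of individual marginal gains by submodularity, and this set has at most $k$ elements, so some element beats the average; the greedy node is at least as good), and the second uses monotonicity. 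Rewriting in terms of the "gap to optimum" $\Omega(S^\circ_k; \tilde{\mathcal{R}}) - \Omega(S_a; \tilde{\mathcal{R}})$, this yields the recursion
\begin{equation*}
	\Omega(S^\circ_k; \tilde{\mathcal{R}}) - \Omega(S_a; \tilde{\mathcal{R}}) \le \left(1 - \tfrac{1}{k}\right)\left[\Omega(S^\circ_k; \tilde{\mathcal{R}}) - \Omega(S_{a-1}; \tilde{\mathcal{R}})\right].
\end{equation*}

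From here I would simply iterate the recursion from $a = k-b$ down to $a = 0$. Unrolling $(k-b)$ times gives
\begin{equation*}
	\Omega(S^\circ_k; \tilde{\mathcal{R}}) - \Omega(S_{k-b}; \tilde{\mathcal{R}}) \le \left(1 - \tfrac{1}{k}\right)^{k-b}\left[\Omega(S^\circ_k; \tilde{\mathcal{R}}) - \Omega(S_0; \tilde{\mathcal{R}})\right],
\end{equation*}
and substituting $S_0 = S^*_b$ and $S_{k-b} = S^*_b \cup S^*_{k-b}$, then rearranging, produces exactly the claimed bound
\begin{equation*}
	\Omega(S^*_b \cup S^*_{k-b}; \tilde{\mathcal{R}}) \ge \left(1 - (1-\tfrac{1}{k})^{k-b}\right)\Omega(S^\circ_k; \tilde{\mathcal{R}}) + (1-\tfrac{1}{k})^{k-b}\,\Omega(S^*_b; \tilde{\mathcal{R}}).
\end{equation*}
The main subtlety — rather than a genuine obstacle — is making sure the averaging bound uses the correct denominator: the greedy step selects from $V \backslash S^*_b$ and runs for only $k-b$ rounds, yet the comparison is against the size-$k$ optimum $S^\circ_k$, so the marginal gain must still be divided by $k$ (not $k-b$) to account for all $k$ elements of $S^\circ_k \backslash S_{a-1}$ that could be covering the gap. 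Getting that bookkeeping right is precisely what makes the retained term $(1-1/k)^{k-b}\,\Omega(S^*_b; \tilde{\mathcal{R}})$ appear and is the only place where the "warm start" from $S^*_b$ differs from the textbook calculation.
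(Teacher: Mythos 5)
Your proof is correct. Note, however, that the paper itself does not prove this lemma at all: it simply imports it, stating that it ``is directly from Lemma 11 in \cite{guo2020influence}.'' Your warm-start greedy analysis is the standard argument underlying that cited result, and every step checks out: the averaging bound over $S^\circ_k \setminus S_{a-1}$ (at most $k$ elements, all of which lie in $V\setminus S_{a-1}$ and are therefore available to the adapted greedy), the use of monotonicity to pass from $\Omega(S^\circ_k\cup S_{a-1};\tilde{\mathcal{R}})$ to $\Omega(S^\circ_k;\tilde{\mathcal{R}})$, the observation that the per-step inequality holds trivially when the gap is negative (marginal gains are nonnegative), and the unrolling of the recursion with the positive factor $(1-1/k)$, which is exactly what produces the retained term $(1-1/k)^{k-b}\,\Omega(S^*_b;\tilde{\mathcal{R}})$. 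You also correctly flag the one real pitfall---dividing by $k$ rather than $k-b$. What your self-contained argument buys over the paper's citation is not negligible: the cited Lemma 11 was stated for plain RR-set coverage in the IM setting, whereas here $\Omega(\cdot\,;\tilde{\mathcal{R}})$ is the weighted coverage over G-RR sets, so the paper implicitly relies on the transfer carrying over; your proof makes that transfer explicit by using nothing beyond the monotonicity and submodularity of $\Omega$ established in Theorem \ref{thm4}, and thus justifies the lemma in the generalized setting without any appeal to the earlier paper.
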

\noindent
This lemma is directly from Lemma 11 in \cite{guo2020influence}. Shown as Lemma \ref{lem4}, when $\theta\geq\theta_1$, the $\Omega(S^\circ_k; \tilde{\mathcal{R}})/\theta$ should be very close to $f(S^\circ_k)$. Actually, it works for any $S^*_b$ in a similar way according to Lemma \ref{lem7}.
\begin{lem}\label{lem7}
	Given $\delta'_1$, $\varepsilon'_1$, and $\theta_1$ as Eqn. (\ref{eq30}), if $\theta\geq\theta_1$, then we have $\Omega(S^*_b; \tilde{\mathcal{R}})/\theta>f(S^*_b)-\varepsilon'_1\cdot f(S^\circ_k)$ with at least $1-\delta'_1$ probability.
\end{lem}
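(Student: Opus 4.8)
The plan is to read Lemma \ref{lem7} as the exact analogue of Lemma \ref{lem4}, but with the fixed optimal set $S^\circ_k$ replaced by the sentinel set $S^*_b$, while the additive slack $\varepsilon'_1 f(S^\circ_k)$ is still measured in units of $f(S^\circ_k)$ rather than $f(S^*_b)$. Since $\Omega(S^*_b;\tilde{\mathcal{R}})/\theta$ is an unbiased estimator of $f(S^*_b)$ and the target is a \emph{lower} bound on it, I would invoke only the lower-tail concentration inequality Eqn. (\ref{eq21}) of Lemma \ref{lem2}, applied to the single set $S^*_b$. The crucial structural point is that $S^*_b$ is fixed before $\tilde{\mathcal{R}}$ is drawn (it is supplied by the sentinel stage and is independent of the sampling used here), so Eqn. (\ref{eq21}) applies directly to this one set and no union bound over the $\binom{n}{b}$ candidate size-$b$ sets is needed; this is exactly why the claimed failure probability is just $\delta'_1$.

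First I would calibrate the deviation parameter to match the stated event. To put $\Omega(S^*_b;\tilde{\mathcal{R}})/\theta \le f(S^*_b) - \varepsilon'_1 f(S^\circ_k)$ into the shape $(1-\xi)\theta f(S^*_b)$ used in Eqn. (\ref{eq21}), I set $\xi = \varepsilon'_1 f(S^\circ_k)/f(S^*_b)$, so that $(1-\xi)\theta f(S^*_b) = \theta\bigl(f(S^*_b) - \varepsilon'_1 f(S^\circ_k)\bigr)$. Substituting this $\xi$ into Eqn. (\ref{eq21}) bounds the failure probability by $\exp(-\xi^2 \theta f(S^*_b)/2) = \exp\bigl(-{\varepsilon'_1}^2 f(S^\circ_k)^2 \theta/(2 f(S^*_b))\bigr)$.

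The one step that goes beyond the computation in Lemma \ref{lem4} is cleaning up this exponent so that the sample-size threshold coincides with $\theta_1$ in Eqn. (\ref{eq30}). Because $S^*_b$ is feasible with $b \le k$ and $S^\circ_k$ maximizes $f$ over all size-$\le k$ sets, monotonicity of $f$ gives $f(S^*_b) \le f(S^\circ_k)$, hence $f(S^\circ_k)^2/f(S^*_b) \ge f(S^\circ_k)$ and the exponent is at most $-{\varepsilon'_1}^2 f(S^\circ_k)\theta/2$. Plugging in $\theta \ge \theta_1 = 2\ln(1/\delta'_1)/({\varepsilon'_1}^2 f(S^\circ_k))$ from Eqn. (\ref{eq30}) then caps the failure probability at $\exp(-\ln(1/\delta'_1)) = \delta'_1$, yielding the claim with probability at least $1-\delta'_1$.

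I do not expect a genuine obstacle: the argument is a single Chernoff estimate, and its real content is the inequality $f(S^*_b) \le f(S^\circ_k)$ that lets the $f(S^*_b)$ in the exponent be absorbed into the clean $f(S^\circ_k)$-based threshold. The only subtlety worth flagging is the degenerate regime $\xi > 1$, i.e. $\varepsilon'_1 f(S^\circ_k) > f(S^*_b)$; there the target threshold $f(S^*_b) - \varepsilon'_1 f(S^\circ_k)$ is nonpositive while $\Omega(S^*_b;\tilde{\mathcal{R}})/\theta \ge 0$, so the asserted strict inequality holds deterministically and the probabilistic bound is vacuously satisfied, so this case needs no separate treatment.
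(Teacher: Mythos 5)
Your proposal is correct and follows essentially the same route as the paper's proof: a single application of the lower-tail bound Eqn.~(\ref{eq21}) to the fixed set $S^*_b$ with $\xi=\varepsilon'_1 f(S^\circ_k)/f(S^*_b)$, followed by absorbing $f(S^*_b)\leq f(S^\circ_k)$ into the exponent and substituting $\theta_1$ from Eqn.~(\ref{eq30}). In fact you make explicit two points the paper glosses over --- the monotonicity step $f(S^\circ_k)^2/f(S^*_b)\geq f(S^\circ_k)$ (which the paper's displayed chain obscures with an apparent typo, writing $\varepsilon'_1$ where ${\varepsilon'_1}^2$ is needed) and the vacuous regime $\xi>1$ --- but these are refinements of the same argument, not a different one.
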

\begin{proof}
	Similar to the proof of Lemma \ref{lem4}, by applying Eqn. (\ref{eq21}) in Lemma \ref{lem2}, we have $\Pr[\Omega(S^*_b; \tilde{\mathcal{R}})/\theta-f(S^*_b)\leq -\varepsilon'_1\cdot f(S^\circ_k)]=\Pr[\Omega(S^*_b; \tilde{\mathcal{R}})-\theta f(S^*_b)\leq(-\varepsilon'_1\cdot f(S^\circ_k)/f(S^*_b))\cdot\theta f(S^*_b)]\leq\exp(-{\varepsilon'_1}^2\cdot f(S^\circ_k)^2/(2f(S^*_b))\cdot\theta)\leq\exp(-\varepsilon'_1\cdot f(S^\circ_k)/2\cdot\theta_1)=\delta'_1$ by substituting $\theta_1$ with Eqn. (\ref{eq30}).
\end{proof}
\begin{lem}\label{lem8}
	Given $\delta'_1$ and $\varepsilon'_1$, if $f(S^*_b)\geq(1-(1-1/k)^b-\varepsilon_1)\cdot f(S^\circ_k)$, then we have
	\begin{equation}\label{eq33}
		\Omega(S^*_b\cup S^*_{k-b}; \tilde{\mathcal{R}})/\theta>(1-1/e-\varepsilon_1-\varepsilon'_1)\cdot f(S^\circ_k)
	\end{equation}
	with at least $1-2\delta'_1$ probability.
\end{lem}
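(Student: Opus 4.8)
The plan is to combine Lemma~\ref{lem6}, Lemma~\ref{lem7}, and the hypothesis on $S^*_b$, chaining together the deterministic coverage bound with the concentration estimate. First I would invoke Lemma~\ref{lem6} to get the deterministic lower bound
\begin{equation*}
	\Omega(S^*_b\cup S^*_{k-b}; \tilde{\mathcal{R}})/\theta \geq (1-(1-1/k)^{k-b})\cdot\Omega(S^\circ_k; \tilde{\mathcal{R}})/\theta + (1-1/k)^{k-b}\cdot\Omega(S^*_b; \tilde{\mathcal{R}})/\theta,
\end{equation*}
which holds with certainty given the greedy selection process of Remark~\ref{rem1}. This reduces the task to lower-bounding the two coverage ratios on the right.

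Next I would apply the concentration results to each ratio separately, since each costs $\delta'_1$ in failure probability and the statement allows a total of $2\delta'_1$. For the first term, Lemma~\ref{lem4} gives $\Omega(S^\circ_k; \tilde{\mathcal{R}})/\theta > (1-\varepsilon'_1)\cdot f(S^\circ_k)$ with probability at least $1-\delta'_1$. For the second term, Lemma~\ref{lem7} gives $\Omega(S^*_b; \tilde{\mathcal{R}})/\theta > f(S^*_b) - \varepsilon'_1\cdot f(S^\circ_k)$ with probability at least $1-\delta'_1$, and then I would substitute the hypothesis $f(S^*_b)\geq(1-(1-1/k)^b-\varepsilon_1)\cdot f(S^\circ_k)$. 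By a union bound both events hold simultaneously with probability at least $1-2\delta'_1$, matching the claimed confidence.

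The remaining work is purely algebraic: substitute the two bounds into the Lemma~\ref{lem6} inequality and simplify the convex combination in terms of $f(S^\circ_k)$. The key cancellation is that $(1-(1-1/k)^{k-b})\cdot 1 + (1-1/k)^{k-b}\cdot(1-(1-1/k)^b)$ collapses, using $(1-1/k)^{k-b}\cdot(1-1/k)^b = (1-1/k)^k$, to something bounded below by $1-(1-1/k)^k \geq 1-1/e$; the $\varepsilon'_1$ and $\varepsilon_1$ error terms accumulate linearly and are absorbed into the $-\varepsilon_1-\varepsilon'_1$ slack, since each weight is at most $1$. I expect the main (though modest) obstacle to be bookkeeping the coefficients carefully so that the cross term involving $\varepsilon'_1$ multiplied by the weights does not exceed $\varepsilon'_1$, which follows because the weights are convex and sum appropriately. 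Once the $(1-1/k)^k \geq 1/e$ bound is inserted, Eqn.~(\ref{eq33}) follows directly, completing the proof.
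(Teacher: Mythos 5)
Your proposal is correct and follows essentially the same route as the paper's proof: Lemma~\ref{lem6} for the deterministic convex-combination bound, Lemma~\ref{lem4} and Lemma~\ref{lem7} (plus the hypothesis on $f(S^*_b)$) each costing $\delta'_1$, a union bound, and the algebraic collapse $(1-1/k)^{k-b}(1-1/k)^b=(1-1/k)^k$ with the $\varepsilon'_1$ coefficients summing to exactly $1$ and the $\varepsilon_1$ coefficient being at most $1$. One small slip: in your last sentence the needed inequality is $(1-1/k)^k\leq 1/e$ (equivalently $1-(1-1/k)^k\geq 1-1/e$, as you correctly wrote earlier), not $(1-1/k)^k\geq 1/e$.
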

\begin{proof}
	Based on Lemma \ref{lem6}, we have $\Omega(S^*_b\cup S^*_{k-b}; \tilde{\mathcal{R}})/\theta\geq(1-(1-1/k)^{k-b})\cdot\Omega(S^\circ_k; \tilde{\mathcal{R}})/\theta+(1-1/k)^{k-b}\cdot\Omega(S^*_b; \tilde{\mathcal{R}})/\theta\geq(1-(1-1/k)^{k-b})(1-\varepsilon'_1)\cdot f(S^\circ_k)+(1-1/k)^{k-b}\cdot(f(S^*_b)-\varepsilon'_1\cdot f(S^\circ_k))\geq(1-(1-1/k)^{k-b})(1-\varepsilon'_1)\cdot f(S^\circ_k)+(1-1/k)^{k-b}\cdot[(1-(1-1/k)^b-\varepsilon_1)\cdot f(S^\circ_k)-\varepsilon'_1\cdot f(S^\circ_k)]=(1-(1-1/k)^k-\varepsilon'_1-(1-1/k)^{k-b}\varepsilon_1)\cdot f(S^\circ_k)\geq(1-(1-1/k)^k-\varepsilon_1-\varepsilon'_1)\cdot f(S^\circ_k)$. Here, the second inequality is from Lemma \ref{lem4} and Lemma \ref{lem7}, where the $\Omega(S^\circ_k; \tilde{\mathcal{R}})/\theta> (1-\varepsilon'_1)\cdot f(S^\circ_k)$ holds with at least $1-\delta'_1$ probability and $\Omega(S^*_b; \tilde{\mathcal{R}})/\theta>f(S^*_b)-\varepsilon'_1\cdot f(S^\circ_k)$ holds with at least $1-\delta'_1$ probability. Thus, the Eqn. (\ref{eq33}) holds with at least $1-2\delta'_1$ probability by the union bound. Then, this lemma can be proven.
\end{proof}
\begin{lem}\label{lem9}
	Given $\delta'_2$, $\varepsilon'$ with $\varepsilon>\varepsilon'_1$, and
	\begin{equation}
		\theta_2=\frac{2(1-1/e)\cdot\left(\ln\binom{n-b}{k-b}+\ln\frac{1}{\delta'_2}\right)}{(\varepsilon'-\varepsilon'_1)\cdot f(S^\circ_k)},
	\end{equation}
	if Eqn. (\ref{eq33}) hold and $\theta\geq\theta_2$, then we have $f(S^*_b\cup S^*_{k-b})>(1-1/e-\varepsilon_1-\varepsilon')\cdot f(S^\circ_k)$ with at least $1-\delta'_2$ probability.
\end{lem}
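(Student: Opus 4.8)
The plan is to mirror the argument for Lemma~\ref{lem5}, converting the empirical lower bound on the generalized coverage in Eqn.~(\ref{eq33}) into a lower bound on the true objective $f(S^*_b\cup S^*_{k-b})$. The only discrepancy between $\Omega(S^*_b\cup S^*_{k-b};\tilde{\mathcal{R}})/\theta$ and $f(S^*_b\cup S^*_{k-b})$ is sampling error, so the whole task reduces to an upper-tail concentration bound that limits how far the empirical estimate can overshoot the truth, applied uniformly over all candidate remaining sets. I would set the concentration slack to $\varepsilon'_2=\varepsilon'-\varepsilon'_1$, which is positive by the hypothesis $\varepsilon'>\varepsilon'_1$ and exactly matches the difference between the deficit $\varepsilon'_1$ appearing in Eqn.~(\ref{eq33}) and the target deficit $\varepsilon'$.

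The core step is a failure-event analysis. Suppose the greedy remaining set $S^*_{k-b}$ were ``bad,'' i.e. $f(S^*_b\cup S^*_{k-b})\leq(1-1/e-\varepsilon_1-\varepsilon')\cdot f(S^\circ_k)$; combined with Eqn.~(\ref{eq33}) this forces an overestimation $\Omega(S^*_b\cup S^*_{k-b};\tilde{\mathcal{R}})/\theta-f(S^*_b\cup S^*_{k-b})>\varepsilon'_2\cdot f(S^\circ_k)$. For any fixed size-$(k-b)$ set $S_{k-b}\subseteq V\backslash S^*_b$ I would apply Eqn.~(\ref{eq22}) of Lemma~\ref{lem2} with $\xi=\varepsilon'_2\cdot f(S^\circ_k)/f(S^*_b\cup S_{k-b})$, giving
\[
\Pr\!\left[\tfrac{\Omega(S^*_b\cup S_{k-b};\tilde{\mathcal{R}})}{\theta}-f(S^*_b\cup S_{k-b})\geq\varepsilon'_2 f(S^\circ_k)\right]\leq\exp\!\left(-\frac{{\varepsilon'_2}^2 f(S^\circ_k)^2}{2f(S^*_b\cup S_{k-b})+\tfrac{2}{3}\varepsilon'_2 f(S^\circ_k)}\,\theta\right),
\]
exactly as in the proof of Lemma~\ref{lem5}. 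Restricting attention to bad sets lets me bound $f(S^*_b\cup S_{k-b})\leq(1-1/e)\cdot f(S^\circ_k)$ in the denominator, which is precisely what produces the factor $2(1-1/e)$ in $\theta_2$ once the lower-order $\tfrac{2}{3}\varepsilon'_2$ term is absorbed. Substituting $\theta\geq\theta_2$ then drives the right-hand side down to $\delta'_2/\binom{n-b}{k-b}$, and a union bound over the at most $\binom{n-b}{k-b}$ size-$(k-b)$ subsets of $V\backslash S^*_b$---the family from which the greedy $S^*_{k-b}$ is drawn, and the source of the $\ln\binom{n-b}{k-b}$ term---rules out every bad set simultaneously with probability at least $1-\delta'_2$.

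On the complementary event, no candidate overshoots by more than $\varepsilon'_2 f(S^\circ_k)$, so in particular $f(S^*_b\cup S^*_{k-b})>\Omega(S^*_b\cup S^*_{k-b};\tilde{\mathcal{R}})/\theta-\varepsilon'_2 f(S^\circ_k)>(1-1/e-\varepsilon_1-\varepsilon'_1-\varepsilon'_2)\cdot f(S^\circ_k)=(1-1/e-\varepsilon_1-\varepsilon')\cdot f(S^\circ_k)$ by Eqn.~(\ref{eq33}), which is the claim. I expect the main obstacle to be the union-bound bookkeeping: one must recognize that the relevant candidate pool is the collection of size-$(k-b)$ extensions of the fixed sentinel set $S^*_b$ rather than all size-$k$ sets, and that the denominator bound $f\leq(1-1/e)f(S^\circ_k)$ is legitimate only because we restrict to bad sets. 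Coupling this bad-set restriction with the correct counting $\binom{n-b}{k-b}$ is what makes both the factor $2(1-1/e)$ and the logarithmic term in $\theta_2$ come out consistently.
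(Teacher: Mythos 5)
Your proposal matches the paper's proof essentially step for step: the same upper-tail bound (Eqn. (\ref{eq22})) applied with slack $\varepsilon'_2=\varepsilon'-\varepsilon'_1$ to each fixed size-$(k-b)$ subset of $V\backslash S^*_b$, the same union bound over the at most $\binom{n-b}{k-b}$ such subsets, and the same final chain combining the no-overestimation event with Eqn. (\ref{eq33}). Your explicit restriction to ``bad'' sets when bounding $f(S^*_b\cup S_{k-b})$ inside the exponent is precisely the substitution $f(S^*_b\cup S_{k-b})\leq(1-1/e-\varepsilon_1-\varepsilon')\cdot f(S^\circ_k)$ that the paper performs (implicitly) before relaxing the denominator to $2(1-1/e)$, so the two arguments coincide.
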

\begin{proof}
	Let $S_{k-b}$ be any size-$(k-b)$ subset of $V\backslash S^*_b$. By applying Eqn. (\ref{eq22}) in Lemma \ref{lem2} and setting $\varepsilon'_2=\varepsilon'-\varepsilon'_1$, we have
	\begin{align}
		\Pr&\left[\Omega(S^*_b\cup S_{k-b}; \tilde{\mathcal{R}})/\theta-f(S^*_b\cup S_{k-b})\geq\varepsilon'_2\cdot f(S^\circ_k)\right]\nonumber\\
		&=\Pr\left[\Omega(S^*_b\cup S^{k-b}; \tilde{\mathcal{R}})-\theta f(S^*_b\cup S_{k-b})\right.\nonumber\\
		&\left.\quad\geq\frac{\varepsilon'_2\cdot f(S^\circ_k)}{f(S^*_b\cup S_{k-b})}\cdot \theta f(S^*_b\cup S_{k-b})\right]\nonumber\\
		&\leq\exp\left(-\frac{{\varepsilon'_2}^2\cdot f(S^\circ_k)^2}{2f(S^*_b\cup S_{k-b})+\frac{2}{3}\varepsilon'_2\cdot f(S^\circ_k)}\cdot\theta\right)\nonumber\\
		&\leq\exp\left(-\frac{{\varepsilon'_2}^2\cdot f(S^\circ_k)}{2(1-1/e-\varepsilon_1-\varepsilon')+\frac{2}{3}\varepsilon'_2}\cdot\theta\right)\nonumber\\
		&\leq\exp\left(-\frac{{\varepsilon'_2}^2\cdot f(S^\circ_k)}{2(1-1/e)}\cdot\theta_2\right)\nonumber\\
		&=\delta'_2/\binom{n-b}{k-b}.\nonumber
	\end{align}
	Thus, if $\theta\geq\theta_2$, the $S^*_{k-b}$ returned by the adapted MC-Greedy algorithm on $\tilde{\mathcal{R}}$ given the $S^*_b$ satisfies $\Omega(S^*_b\cup S^*_{k-b}; \tilde{\mathcal{R}})/\theta-f(S^*_b\cup S^*_{k-b})<\varepsilon'_2\cdot f(S^\circ_k)$ with at least $1-\delta'_2$ probability based on the union bound of at most $\binom{n-b}{k-b}$ size-$(k-b)$ node sets. Thus, if the Eqn. (\ref{eq33}) holds, we have
	\begin{align}
		f(S^*_b\cup S^*_{k-b})&>\Omega(S^*_b\cup S^*_{k-b}; \tilde{\mathcal{R}})/\theta-\varepsilon'_2\cdot f(S^\circ_k)\nonumber\\
		&>(1-1/e-\varepsilon_1-\varepsilon'_1-\varepsilon'_2)\cdot f(S^\circ_k)\nonumber\\
		&=(1-1/e-\varepsilon_1-\varepsilon')\cdot f(S^\circ_k)\nonumber
	\end{align}
	with at least $1-\delta'_2$ probability.
\end{proof}

Based on Lemma \ref{lem8} and Lemma \ref{lem9}, if $\theta\geq\max\{\theta_1,\theta_2\}$, when the first stage returns a good sentinel set $S^*_b$ with $f(S^*_b)\geq(1-(1-1/k)^b-\varepsilon_1)\cdot f(S^\circ_k)$, we have $f(S^*_b\cup S^*_{k-b})\geq(1-1/e-\varepsilon_1-\varepsilon')\cdot f(S^\circ_k)$ with at least $1-2\delta'_1-\delta'_2$ probability. By setting $\delta'_1=\delta'_2=\delta'/3$ and $\theta_1=\theta_2=\theta'$, we have $\theta'$ equals the Eqn. (\ref{eq28}) similar to the techniques in \cite{tang2015influence}. Theorem \ref{thm6} has been proven.

\subsection{Theoretical Result and Complexity}
In summary, based on Theorem \ref{thm5}, the sentinel set $S^*_b$ selected at the first stage satisfies $f(S^*_b)\geq(1-(1-1/k)^b-\varepsilon_1)\cdot f(S^\circ_b)$ with at least $1-\delta_1$ probability. When it works, based on Theorem \ref{thm6}, the remaining set $S^*_{k-b}$ selected at the second stage satisfies $f(S^*_b\cup S^*_{k-b})\geq(1-1/e-\varepsilon_1-\varepsilon_2)\cdot f(S^\circ_k)$ with at least $1-\delta_2$ probability. Shown as Algorithm \ref{a2}, by setting $\varepsilon_1=\varepsilon_2=\varepsilon$ and $\delta_1=\delta_2=\delta/2$, we have $f(S^*_b\cup S^*_{k-b})\geq(1-1/e-\varepsilon)\cdot f(S^\circ_k)$ with at least $1-\delta$ probability by the union bound.

For our G-HIST algorithm, the analysis of time complexity is very hard because the number $b$ at the first stage cannot be estimated. Thus, we only consider an extreme case, where there is no sentinel set selection stage, namely $b=0$. Then, the Algorithm \ref{a5} will directly select a size-$k$ seed set, whose process is similar to OPIM-C \cite{tang2018online}. Thus, when $\delta<1/2$, it generates an expected number of $O(k\ln n+\ln(1/\delta)/(\varepsilon^2\cdot f(S_k^\circ)))$ random G-RR sets. To generate a random G-RR set, the worst running time is less than $O(\sum_{q\in Q}r_q\cdot m)$. Thus, the worst time of generating random G-RR sets should be
\begin{equation}
	O\left(\frac{\sum_{q\in Q}r_q\cdot mn(k\ln n + \ln(1/\delta))}{\varepsilon^2\cdot k}\right).
\end{equation}
As the search time of MC-Greedy algorithm is shorter than the generation time, the total time complexity remains unchanged. Now, we can draw the main conclusion of this paper.
\begin{thm}[Main Theorem]
	The G-HIST shown as Algorithm \ref{a2} can be guaranteed to return a $(1-1/e-\varepsilon)$ approximate solution for the CC-DIM problem with at least $1-\delta$ probability and run in the $O(\sum_{q\in Q}r_q\cdot mn(k\ln n + \ln(1/\delta))/(\varepsilon^2\cdot k))$ worst expected time.
\end{thm}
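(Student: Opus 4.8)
The plan is to establish the two assertions of the statement—the approximation guarantee and the worst-case expected running time—separately, since each draws on different machinery already developed. For correctness I would simply compose the two stage-wise guarantees. By Theorem \ref{thm5}, the sentinel stage returns a set $S^*_b$ satisfying $f(S^*_b)\geq(1-(1-1/k)^b-\varepsilon_1)\cdot f(S^\circ_k)$ with probability at least $1-\delta_1$; conditioned on this event, Theorem \ref{thm6} yields $f(S^*_b\cup S^*_{k-b})\geq(1-1/e-\varepsilon_1-\varepsilon_2)\cdot f(S^\circ_k)$ with probability at least $1-\delta_2$. Plugging in the parameter choices $\varepsilon_1=\varepsilon_2=\varepsilon/2$ and $\delta_1=\delta_2=\delta/2$ fixed in Algorithm \ref{a2}, the approximation factor collapses to exactly $1-1/e-\varepsilon$, and a union bound over the two failure events gives success probability at least $1-\delta_1-\delta_2=1-\delta$. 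This half of the argument needs no new estimates beyond the two theorems.

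For the running time I expect the genuine difficulty to lie in the fact that $b$ is determined adaptively inside the sentinel stage and admits no a priori bound, so a tight analysis over all realized values of $b$ is out of reach. The plan is to bypass this by analyzing the conservative extreme $b=0$, in which the sentinel stage is vacuous and Algorithm \ref{a5} degenerates into an OPIM-C–style procedure \cite{tang2018online} that builds the size-$k$ seed set directly, doubling the sample size until the stopping test of Eqn. (\ref{eq23}) passes. I would argue this is a legitimate worst-case surrogate because enabling the sentinel stage can only shrink the G-RR sets through the early-termination pruning of Algorithm \ref{a4}, never enlarge them. Importing the OPIM-C sample-complexity bound then gives an expected sample count of $O((k\ln n+\ln(1/\delta))/(\varepsilon^2\cdot f(S^\circ_k)))$ whenever $\delta<1/2$.

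The remaining steps are routine bookkeeping. First I would bound the cost of generating one random G-RR set: it is a union of $\sum_{q\in Q}r_q$ reverse-reachable searches over a shared realization, each a reverse BFS touching at most $O(m)$ edges, so the per-sample cost is $O(\sum_{q\in Q}r_q\cdot m)$. Multiplying by the expected sample count yields a total generation time of $O(\sum_{q\in Q}r_q\cdot m(k\ln n+\ln(1/\delta))/(\varepsilon^2\cdot f(S^\circ_k)))$. To eliminate the unknown $f(S^\circ_k)$ from the denominator I would invoke the lower bound $f(S^\circ_k)=\Omega(k/n)$, which holds because the influence component alone forces $\sigma(S^\circ_k)\geq k$ (the seeds are themselves active) and hence $(1-\lambda)\sigma(S^\circ_k)/\sigma(V)\geq(1-\lambda)k/n$, consistent with the lower bound $f_{min}$ used for $\theta_{max}$. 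Substituting this produces the claimed $O(\sum_{q\in Q}r_q\cdot mn(k\ln n+\ln(1/\delta))/(\varepsilon^2\cdot k))$ bound, and since the MC-Greedy search cost is dominated by the generation cost (as noted just before the statement), the total complexity is unchanged. The main obstacle is therefore conceptual rather than computational: justifying that the $b=0$ reduction genuinely upper-bounds the running time for every adaptively chosen $b$.
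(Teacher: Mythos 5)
Your proposal matches the paper's own proof essentially step for step: the approximation guarantee is obtained by composing Theorem \ref{thm5} and Theorem \ref{thm6} with $\varepsilon_1=\varepsilon_2=\varepsilon/2$, $\delta_1=\delta_2=\delta/2$ and a union bound over the two stages' failure events, and the running time is analyzed via the $b=0$ extreme case, importing the OPIM-C expected sample count, the $O(\sum_{q\in Q}r_q\cdot m)$ per-sample generation cost, the lower bound $f(S^\circ_k)=\Omega(k/n)$, and the observation that the greedy search cost is dominated by generation. If anything, you are more explicit than the paper on the two points it glosses over, namely the substitution of $f(S^\circ_k)$ by $f_{min}=\Omega(k/n)$ and the argument that $b=0$ is a legitimate worst-case surrogate for the adaptively chosen $b$.
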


\section{Experiments}
In this section, we conduct several experiments on different datasets to validate the effectiveness and efficiency of our G-HIST algorithm for the CC-DIM problem. All of our experiments are programmed by Python and run on a Mac machine. There are four datasets used in the experiments as follows. (1) NetScience \cite{nr}: A coauthorship network among scientists to publish papers about network science; (2) Wiki \cite{nr}: A who-votes-on-whom network coming from the collection of Wikipedia voting; (3) HetHEPT \cite{snapnets}: An academic collaboration relationship on high-energy physics area; and (4) Epinions \cite{snapnets}: A who-trust-whom online social network on Epinions.com, which is a general consumer review site. The statistics of these four datasets are shown in Table \ref{table1}. For the undirected graph, we replace each undirected edge with two reversed directed edges.

\begin{table}[h]
	\renewcommand{\arraystretch}{1.3}
	\caption{The datasets statistics $(K=10^3)$}
	\label{table1}
	\centering
	\begin{tabular}{|c|c|c|c|c|}
		\hline
		\bfseries Dataset & \bfseries n & \bfseries m & \bfseries Type & \bfseries Avg.Degree\\
		\hline
		NetScience & 0.4K & 1.01K & undirected & 5.00\\
		\hline
		Wiki & 1.0K & 3.15K & directed & 6.20\\
		\hline
		HetHEPT & 12.0K & 118.5K & undirected & 19.8\\
		\hline
		Epinions & 75.9K & 508.8K & directed & 13.4\\
		\hline
	\end{tabular}
\end{table}

\subsection{Experimental Settings}
For the IC model, we use the Weighted Cascade (WC) \cite{guo2019targeted} \cite{guo2020multi} \cite{guo2021continuous} to set the diffusion probability of each edge. The probability $p_{uv}$ for each edge $(u,v)\in E$ is $1/|N^-(v)$. As for the parameters in G-HIST algorithm, we set $\varepsilon=0.1$ and $\delta=0.1$. We conduct $1000$ Monte Carlo simulations to estimate the objective function given a seed set. Each point in our result is the average over 3 times running.

Because our CC-DIM is a composite community-aware problem, there are multiple community structures in a shared social network. In the objective function shown as Eqn. (\ref{of}), we give the $\lambda=0.7$. Then, we consider three cases of different number of community structures as follows. (1) Case 1: One community structure, denoted by $Q_1=\{q_1\}$; (2) Case 2: Two community structure, denoted by $Q_2=\{q_1,q_2\}$; and (3) Case 3: Three community structure, denoted by $Q_3=\{q_1,q_2,q_3\}$. Here, we have $r_{q_1}=3$, $r_{q_2}=4$, and $r_{q_3}=5$, where the graph will be partitioned into three communities under the metric $q_1$, four communities under the metric $q_2$, and five communities under the metric $q_2$. Besides, for each of the above cases, we will give two different settings as follows. Under the parameter setting 1, we have $\{w_{q_1}=1\}$ for $Q_1$, $\{w_{q_1}=0.4,w_{q_2}=0.6\}$ for $Q_2$, and $\{w_{q_1}=0.3,w_{q_2}=0.3,w_{q_3}=0.4\}$ for $Q_3$; $\{a_{q_1,1}=0.4,a_{q_1,2}=1,a_{q_1,3}=1.6\}$, $\{a_{q_2,1}=0.4,a_{q_2,2}=0.8,a_{q_2,3}=1.2,a_{q_2,4}=1.6\}$, and $\{a_{q_3,1}=0.2,a_{q_3,2}=0.6,a_{q_3,3}=1,a_{q_3,4}=1.4,a_{q_3,5}=1.8\}$. Under the parameter setting 2, we have $\{w_{q_1}=1\}$ for $Q_1$, $\{w_{q_1}=0.1,w_{q_2}=0.9\}$ for $Q_2$, and $\{w_{q_1}=0.1,w_{q_2}=0.1,w_{q_3}=0.8\}$ for $Q_3$; $\{a_{q_1,1}=0.1,a_{q_1,2}=0.1,a_{q_1,3}=2.8\}$, $\{a_{q_2,1}=0.1,a_{q_2,2}=0.1,a_{q_2,3}=0.8,a_{q_2,4}=3\}$, and $\{a_{q_3,1}=0.1,a_{q_3,2}=0.1,a_{q_3,3}=0.1,a_{q_3,4}=1.7,a_{q_3,5}=3\}$. We will explain them in the later analysis.

How to get a community partition is flexible. We can not only divide the community according to user's attributes given by datasets, but also we can use existing algorithms \cite{girvan2002community} \cite{chen2014community} \cite{karrer2011stochastic} \cite{shi2000normalized} to divide the community.

Next, we introduce some typical baselines, which will be used for comparison with our G-HIST algorithm.
\begin{itemize}
	\item G-HIST: It is given by Algorithm \ref{a2}.
	\item G-HIST-no-Sentinel: It directly selects a size-$k$ seed set without the stage of sentinel set selection by invoking Algorithm \ref{a5} like RemainingSet $(G,Q,k,\emptyset,\varepsilon, \varepsilon, \delta)$.
	\item G-IMM: It uses the IMM \cite{tang2015influence} to maximize our objective function by setting $\varepsilon=0.1$ and $\delta=0.1$.
	\item Greedy: It adopts greedy hill-climbing algorithm to select the node with maximum marginal gain in each iteration through Monte Carlo simulations.
	\item Greedy-IM: It uses greedy hill-climbing algorithm to solve the IM problem through Monte Carlo simulations.
	\item IMM \cite{tang2015influence}: A classic sampling based method of the IM problem by setting $\varepsilon=0.1$ and $\delta=0.1$.
	\item MaxDegree: It selects the node with maximum out degree in each iteration.
	\item Random: It randomly select a size-$k$ seed set.
\end{itemize}

\begin{figure}[!t]
	\centering
	\subfigure[NetScence, $Q_1$, Parameter 1]{
		\includegraphics[width=0.48\linewidth]{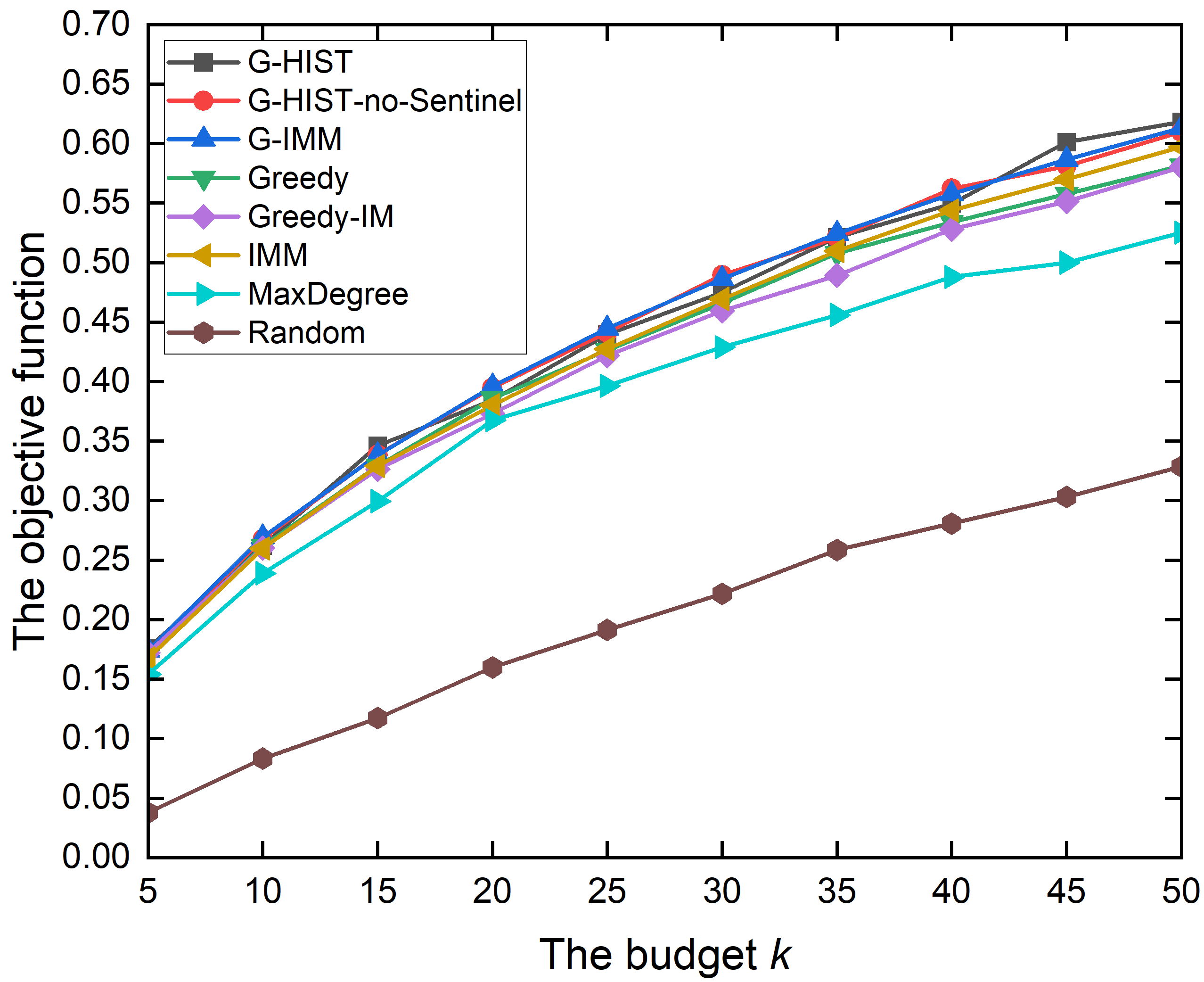}
		%\caption{fig1}
	}%
	\subfigure[NetScence, $Q_2$, Parameter 1]{
		\includegraphics[width=0.48\linewidth]{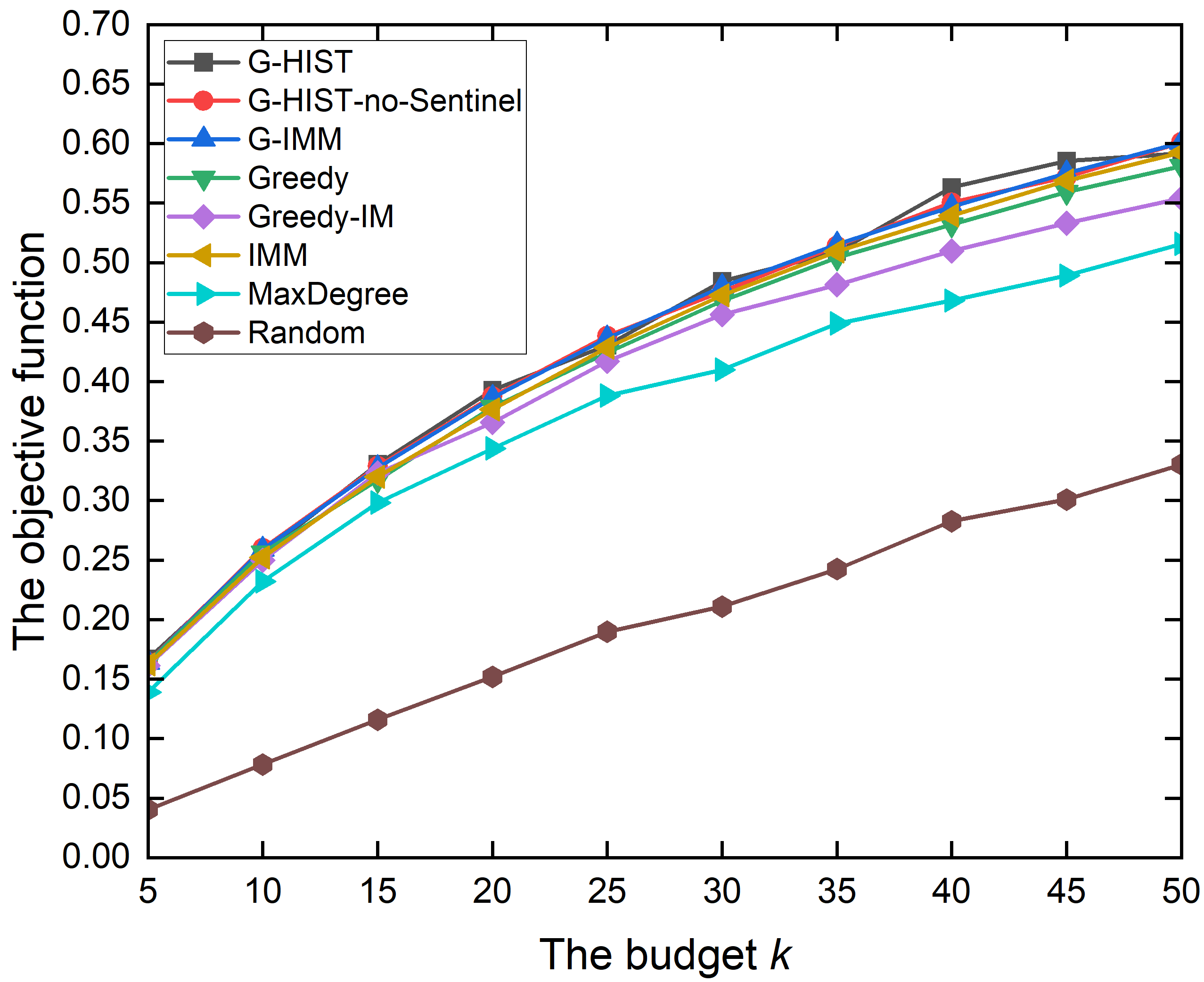}
		%\caption{fig1}
	}%
	\centering
	
	\subfigure[NetScience, $Q_3$, Parameter 1]{
		\includegraphics[width=0.48\linewidth]{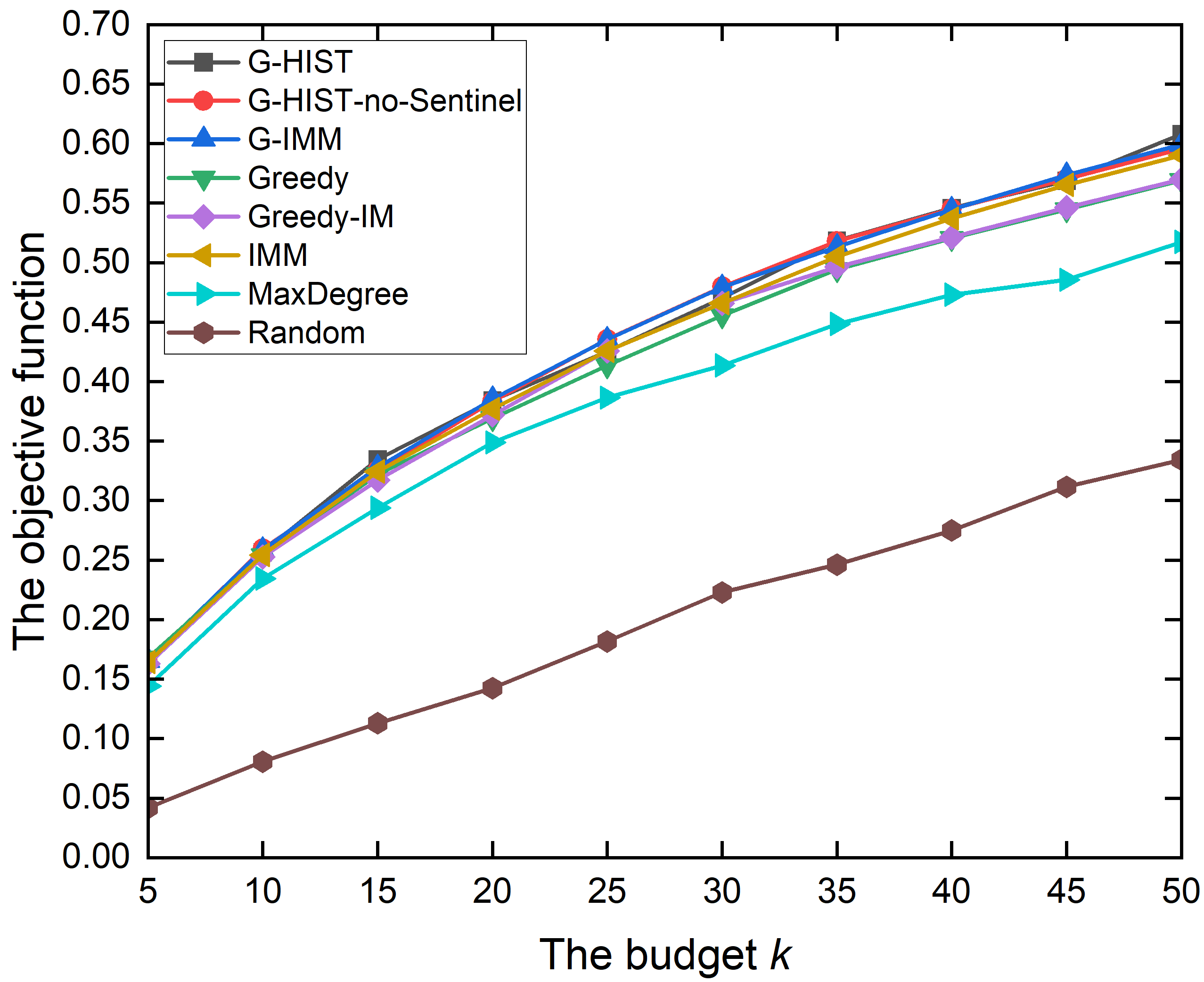}
		%\caption{fig1}
	}%
	\subfigure[NetScience, $Q_1$, Parameter 2]{
		\includegraphics[width=0.48\linewidth]{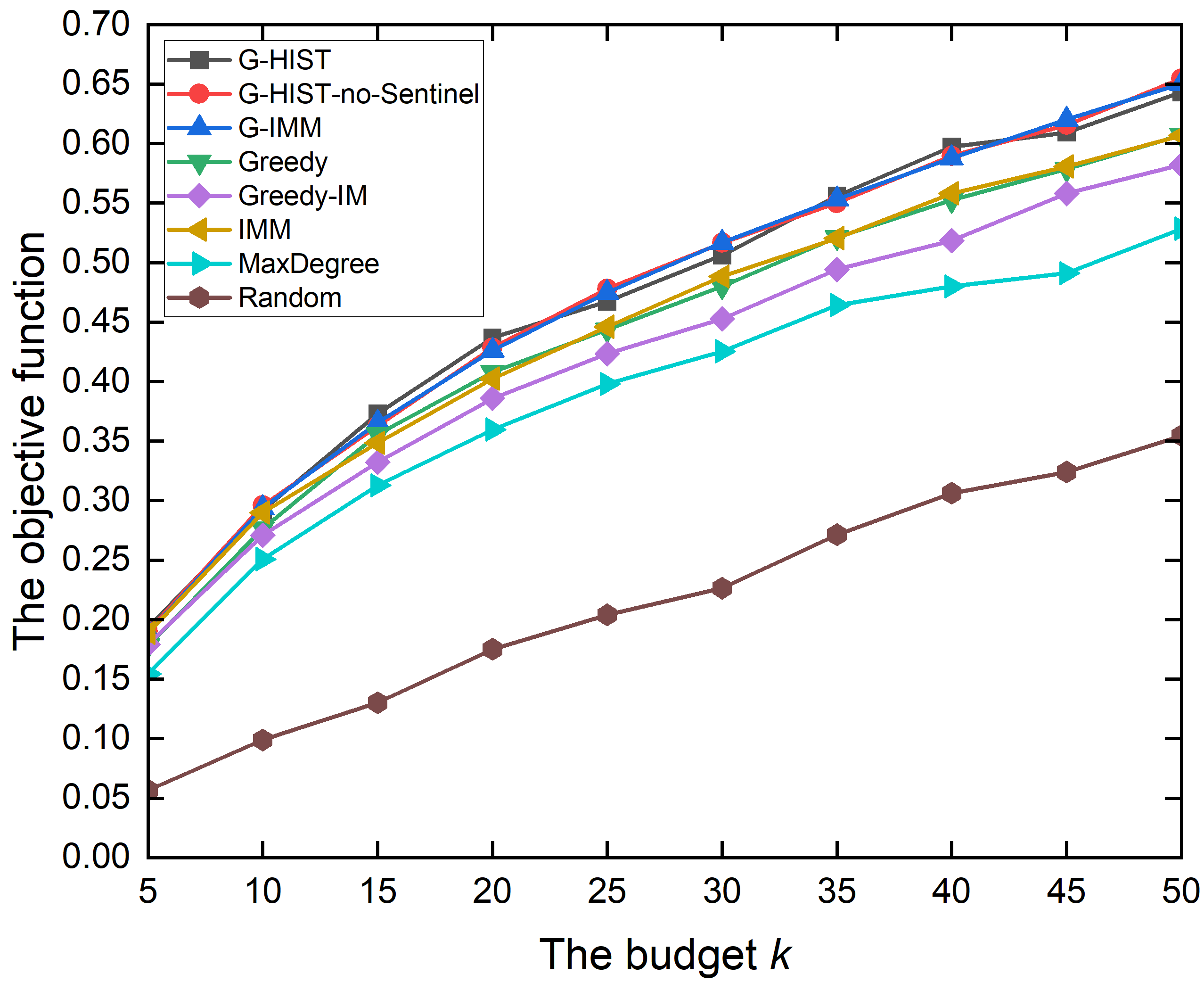}
		%\caption{fig1}
	}%
	\centering
	
	\subfigure[NetScence, $Q_2$, Parameter 2]{
		\includegraphics[width=0.48\linewidth]{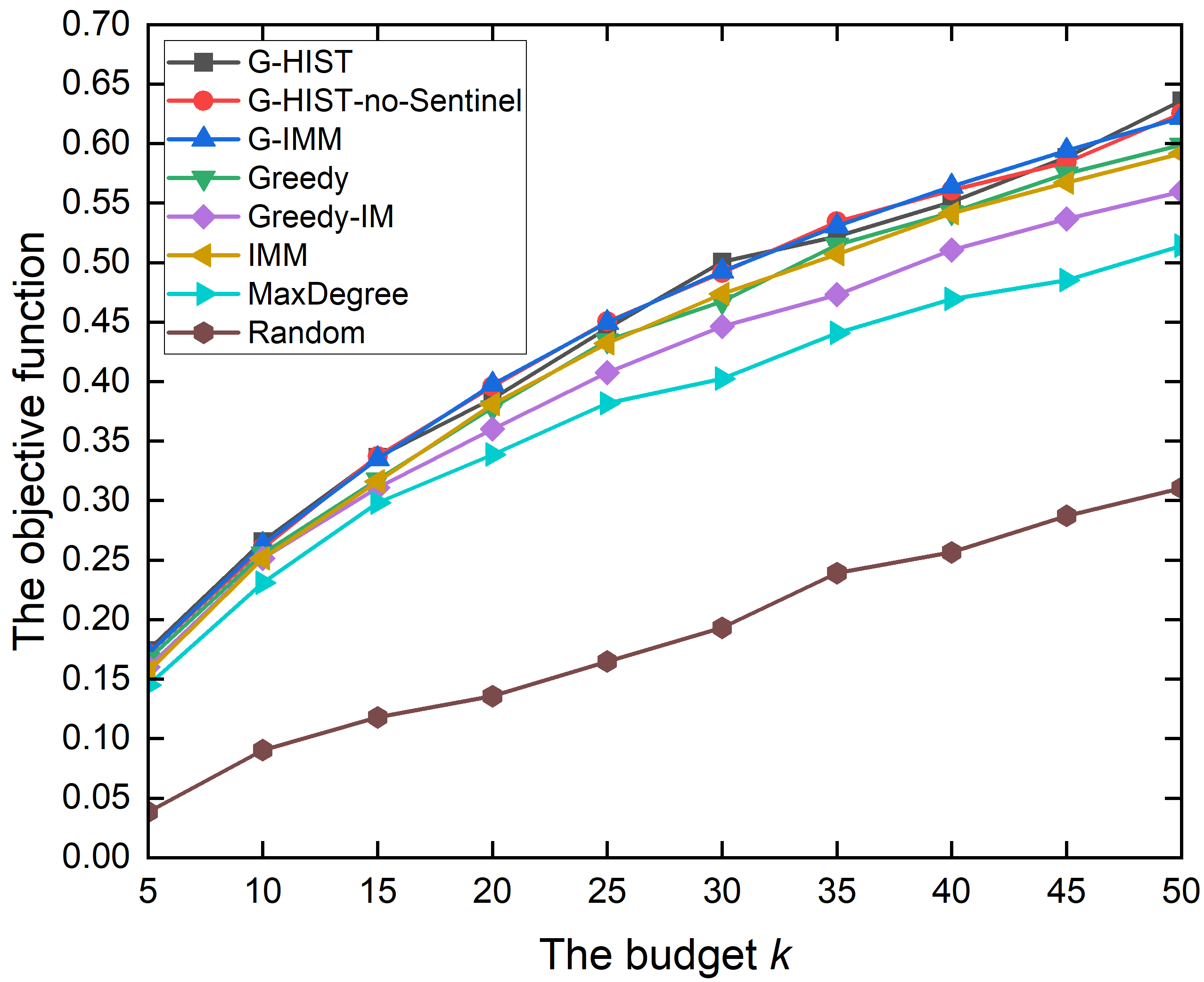}
		%\caption{fig1}
	}%
	\subfigure[NetScence, $Q_3$, Parameter 2]{
		\includegraphics[width=0.48\linewidth]{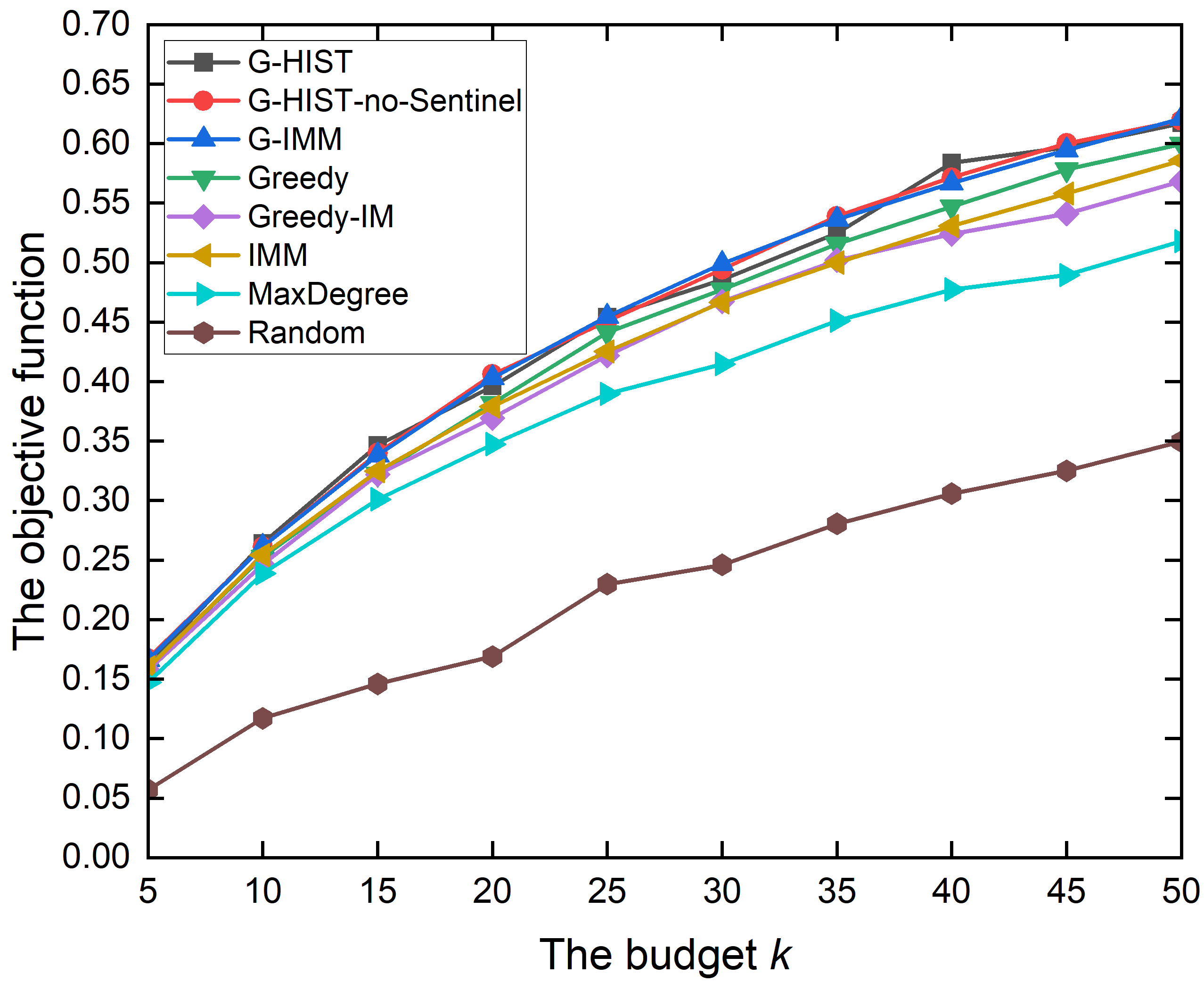}
		%\caption{fig1}
	}%
	\centering
	\caption{The performance comparison achieved by the NetScience dataset under three community structures and two parameter settings.}
	\label{fig1}
\end{figure}

\begin{figure}[!t]
	\centering
	\subfigure[Wiki, $Q_1$, Parameter 1]{
		\includegraphics[width=0.48\linewidth]{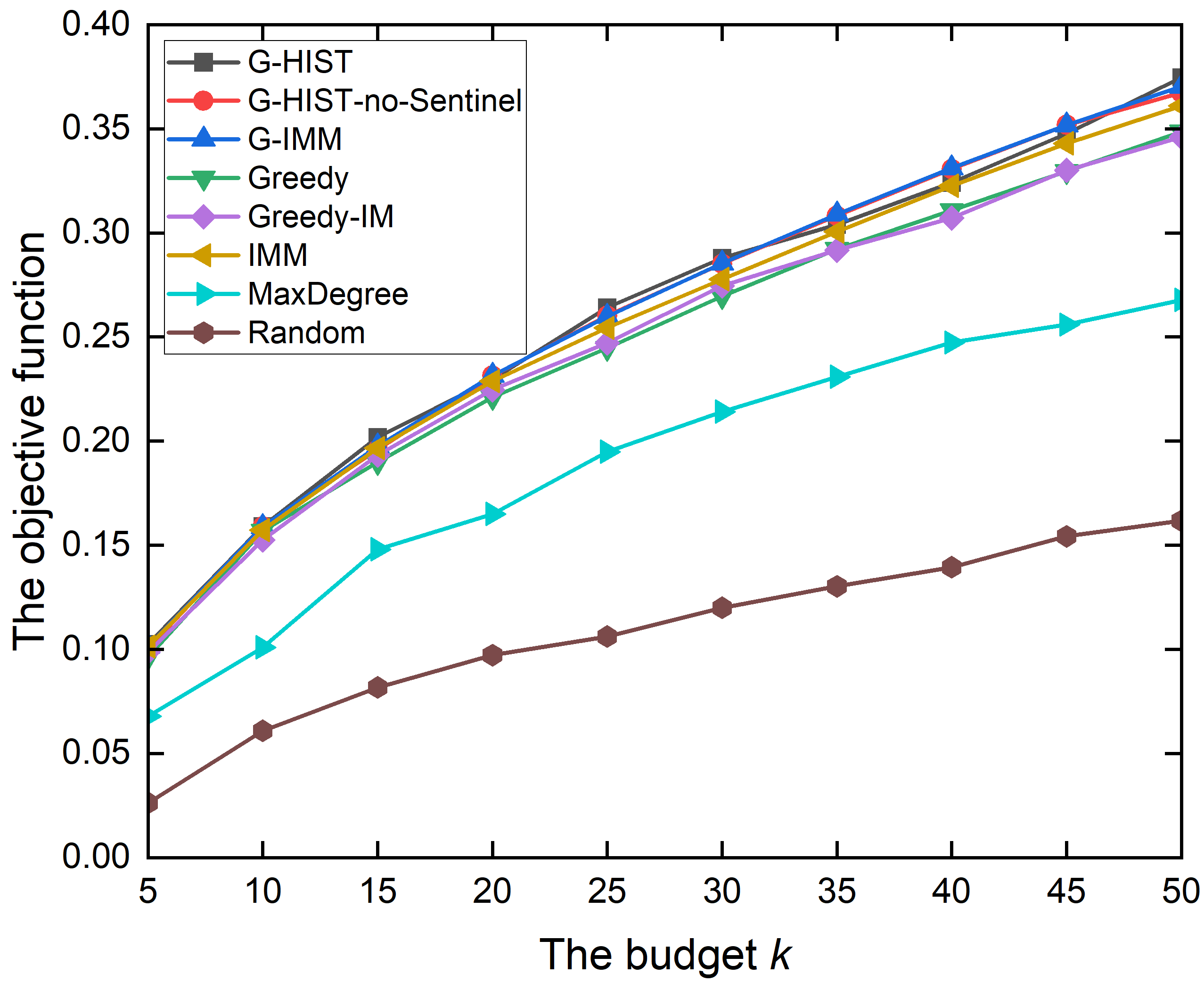}
		%\caption{fig1}
	}%
	\subfigure[Wiki, $Q_2$, Parameter 1]{
		\includegraphics[width=0.48\linewidth]{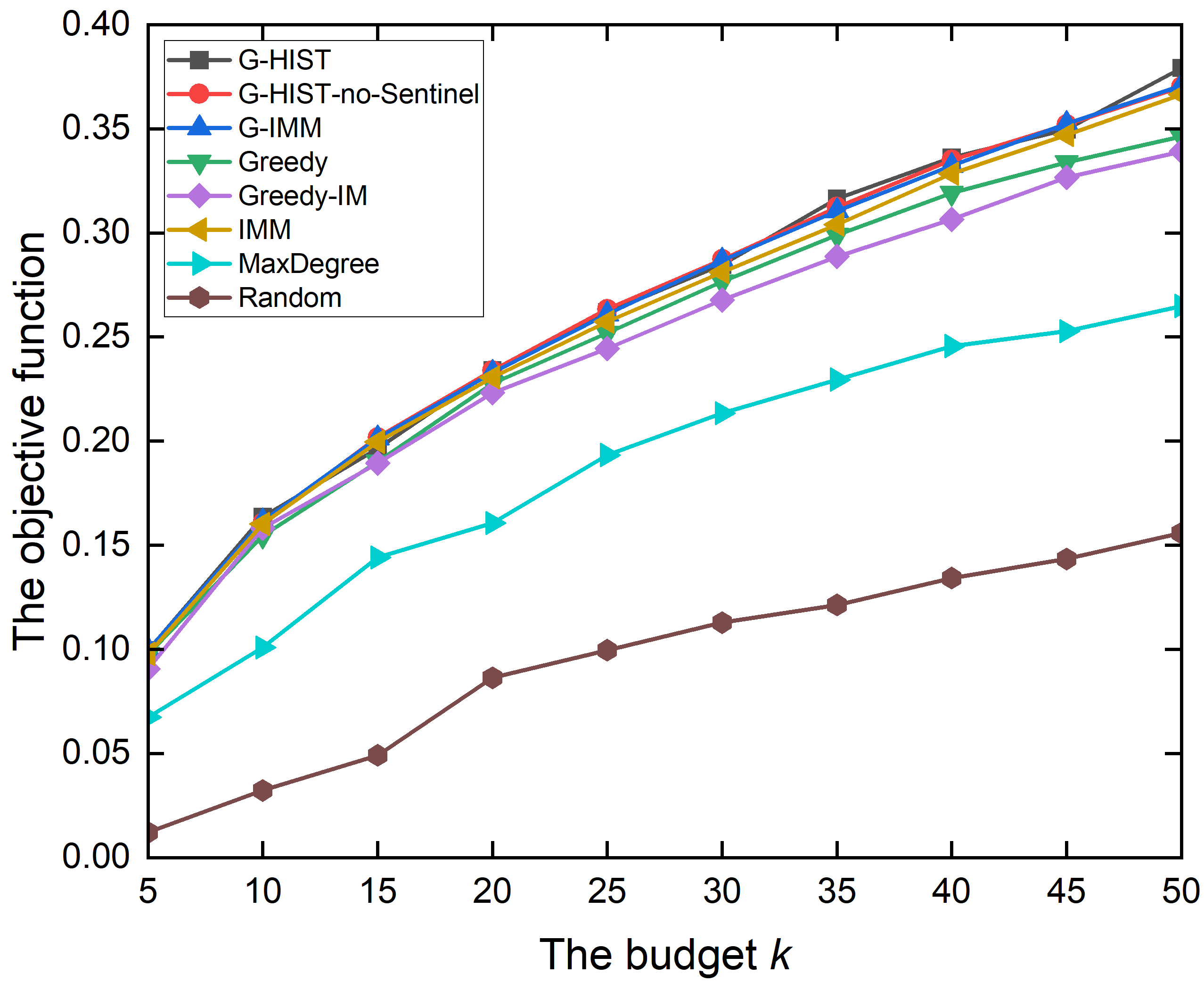}
		%\caption{fig1}
	}%
	\centering
	
	\subfigure[Wiki, $Q_3$, Parameter 1]{
		\includegraphics[width=0.48\linewidth]{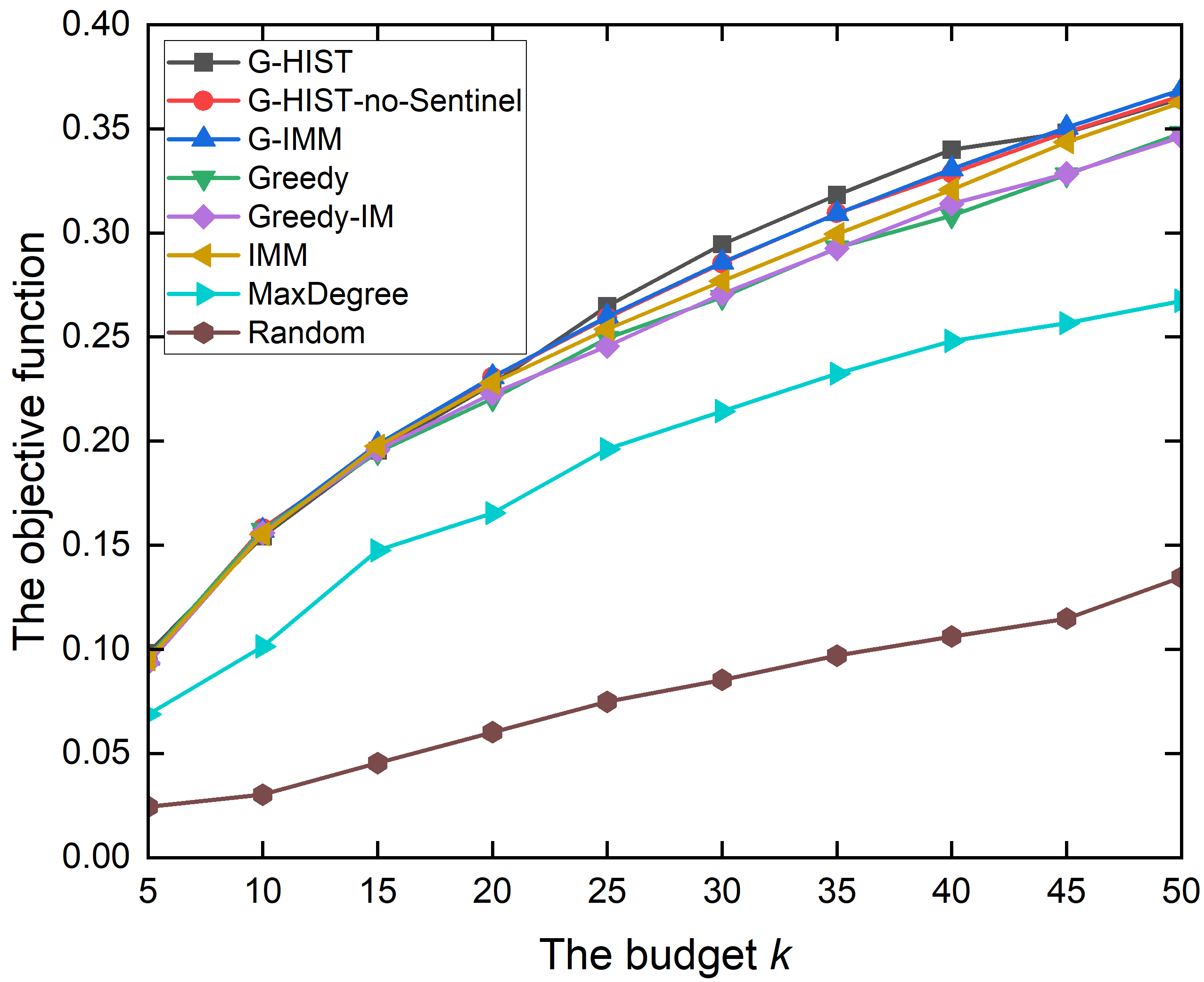}
		%\caption{fig1}
	}%
	\subfigure[Wiki, $Q_1$, Parameter 2]{
		\includegraphics[width=0.48\linewidth]{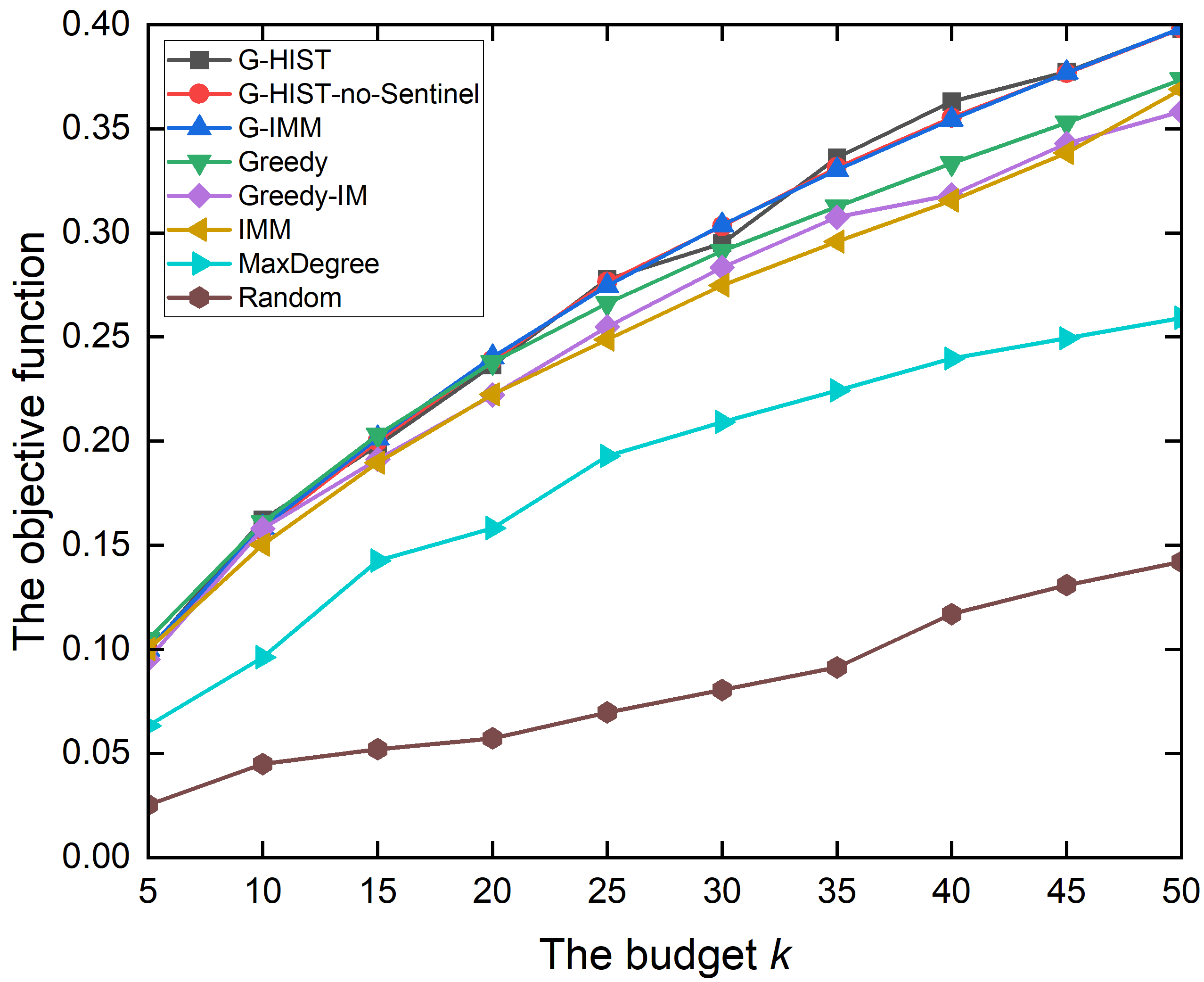}
		%\caption{fig1}
	}%
	\centering
	
	\subfigure[Wiki, $Q_2$, Parameter 2]{
		\includegraphics[width=0.48\linewidth]{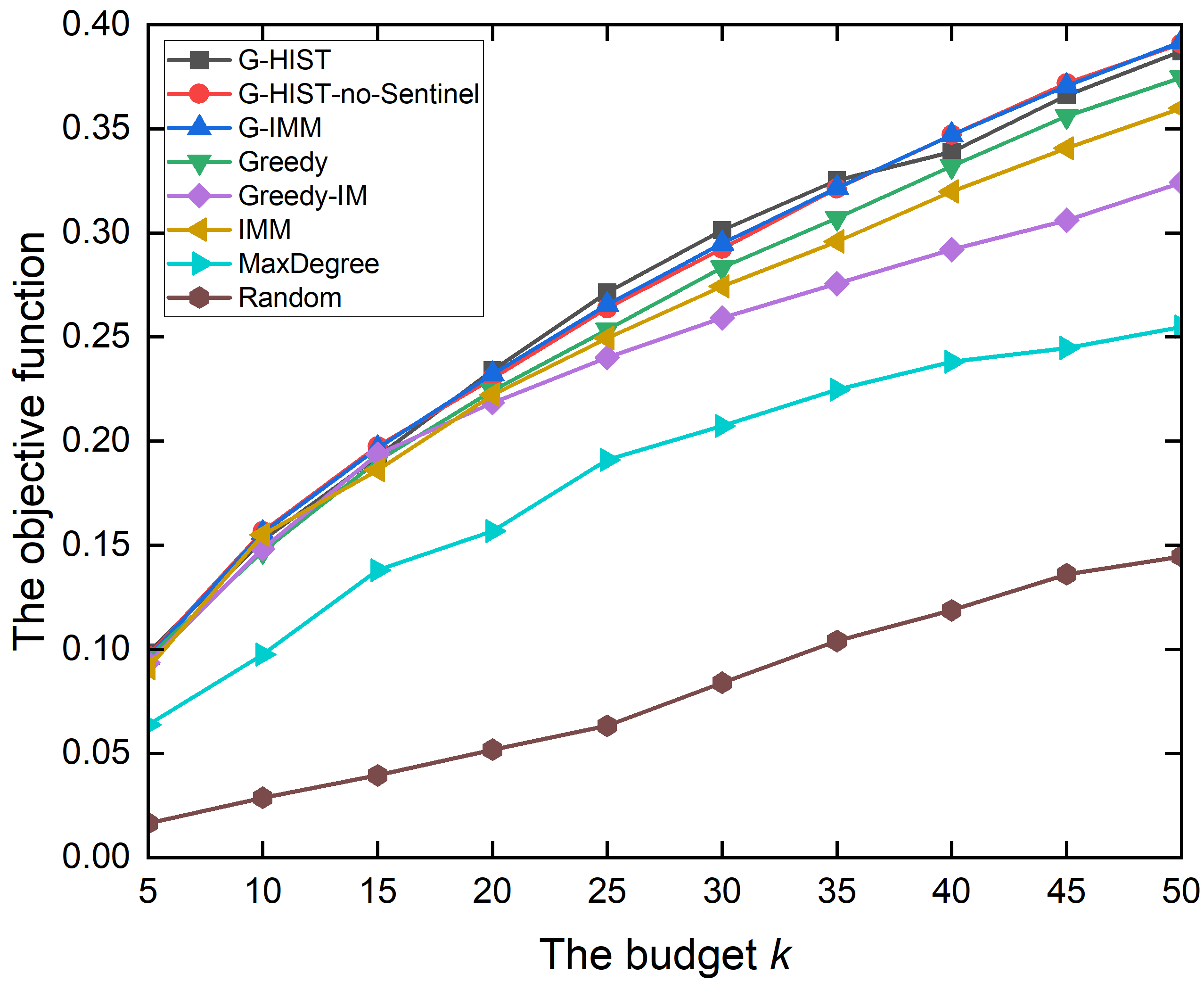}
		%\caption{fig1}
	}%
	\subfigure[Wiki, $Q_3$, Parameter 2]{
		\includegraphics[width=0.48\linewidth]{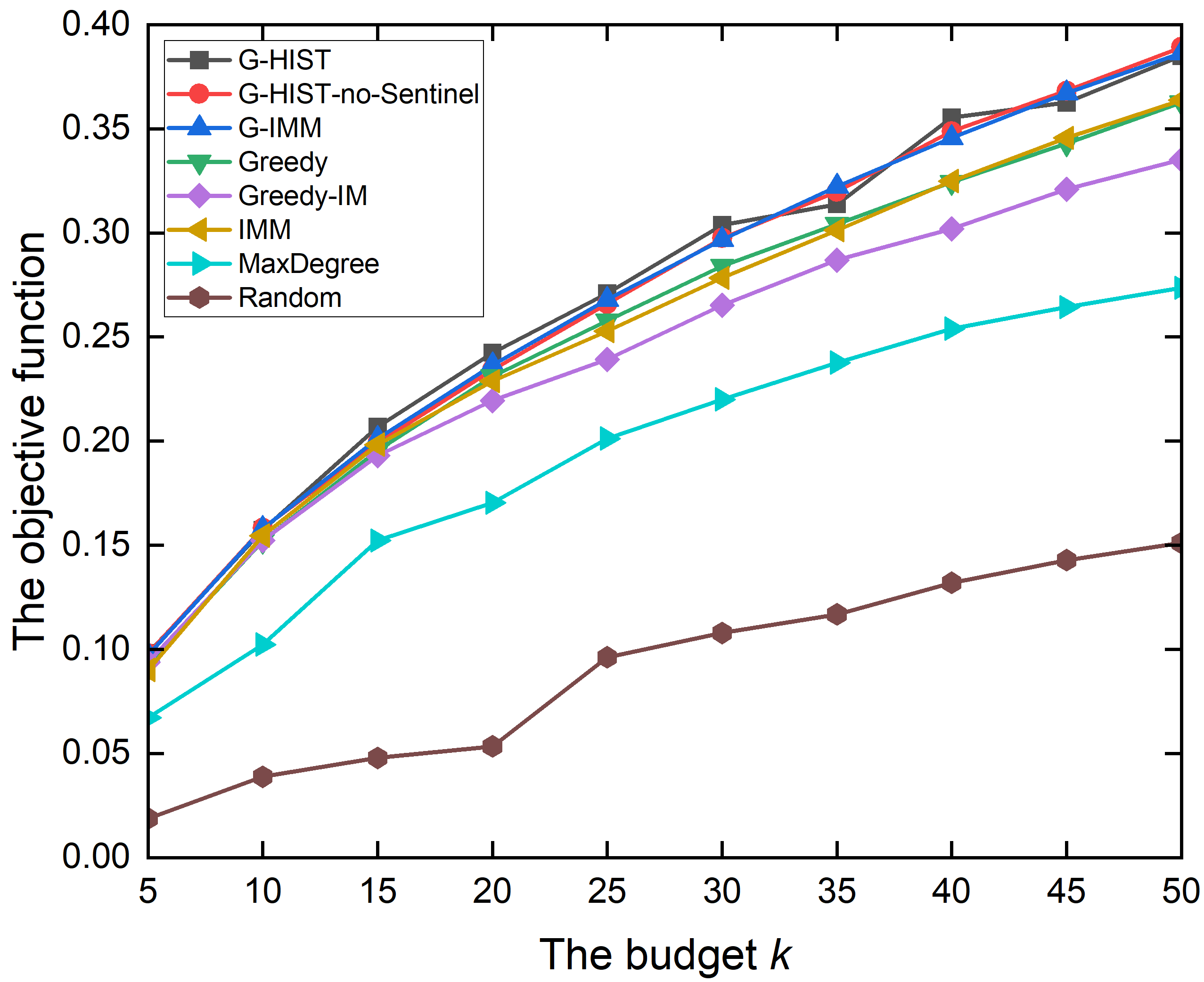}
		%\caption{fig1}
	}%
	\centering
	\caption{The performance comparison achieved by the Wiki dataset under three community structures and two parameter settings.}
	\label{fig2}
\end{figure}

\begin{figure}[!t]
	\centering
	\subfigure[HetHEPT, $Q_1$, Parameter 2]{
		\includegraphics[width=0.48\linewidth]{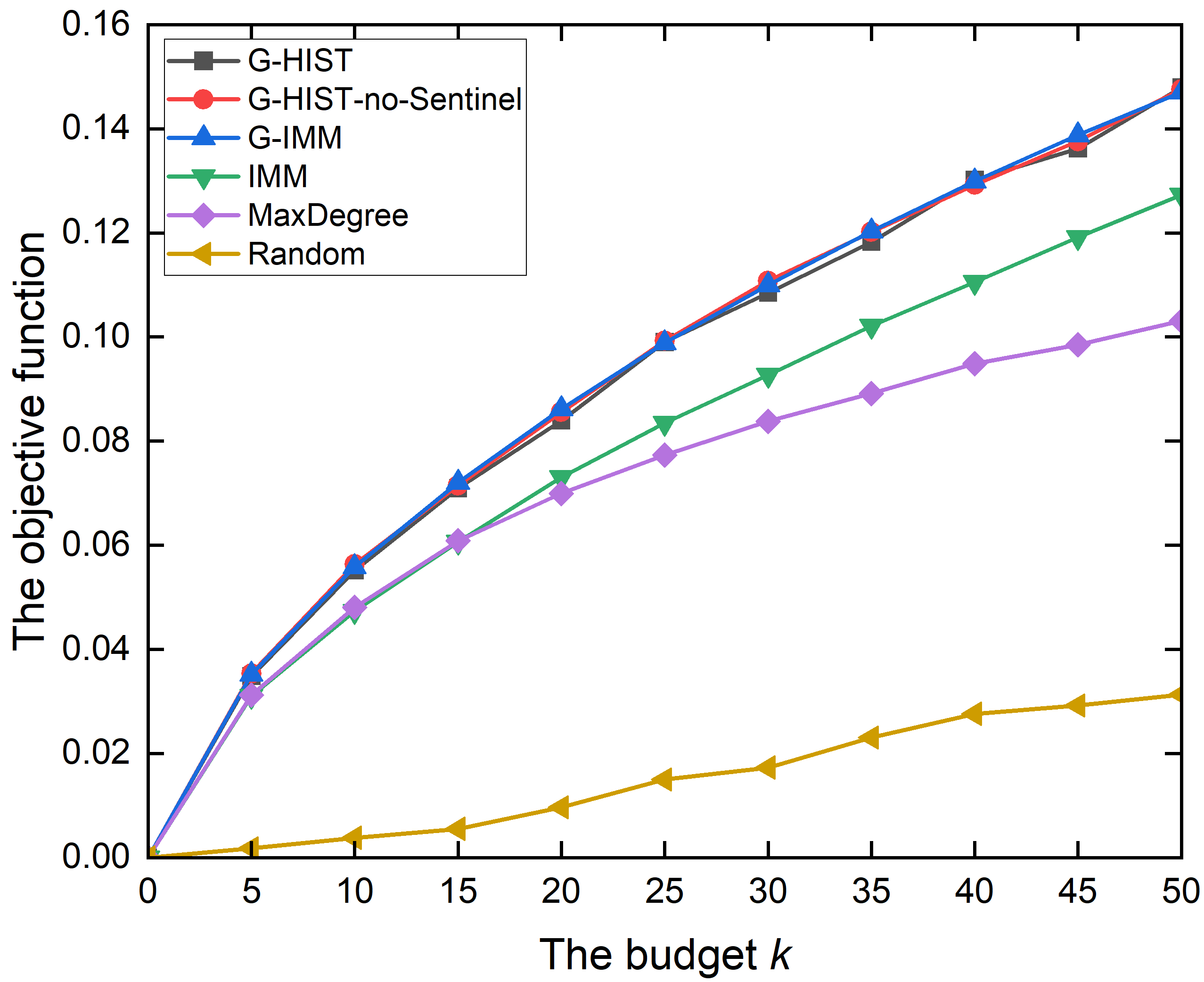}
		%\caption{fig1}
	}%
	\subfigure[HetHEPT, $Q_2$, Parameter 2]{
		\includegraphics[width=0.48\linewidth]{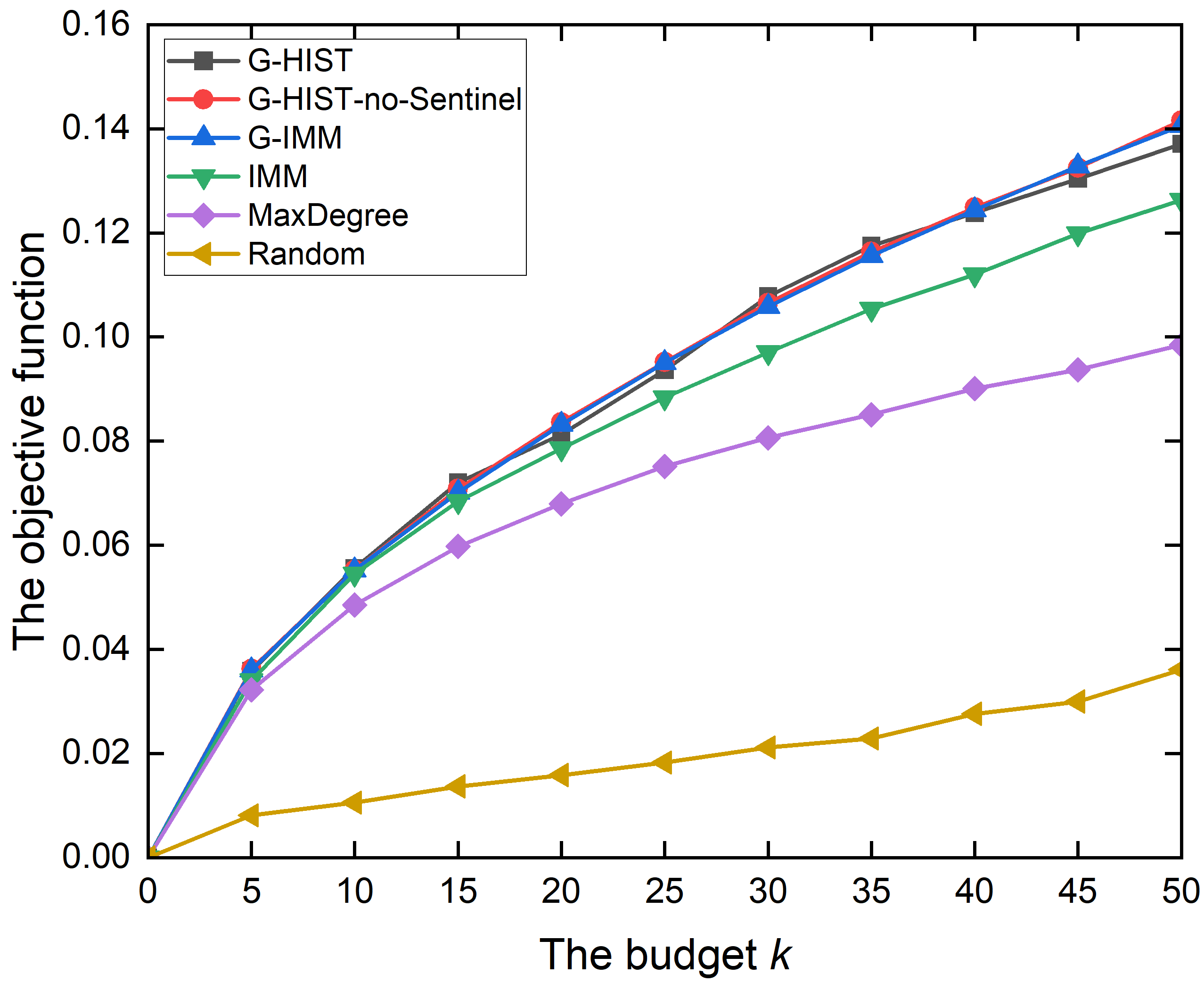}
		%\caption{fig1}
	}%
	\centering
	
	\subfigure[HetHEPT, $Q_3$, Parameter 2]{
		\includegraphics[width=0.48\linewidth]{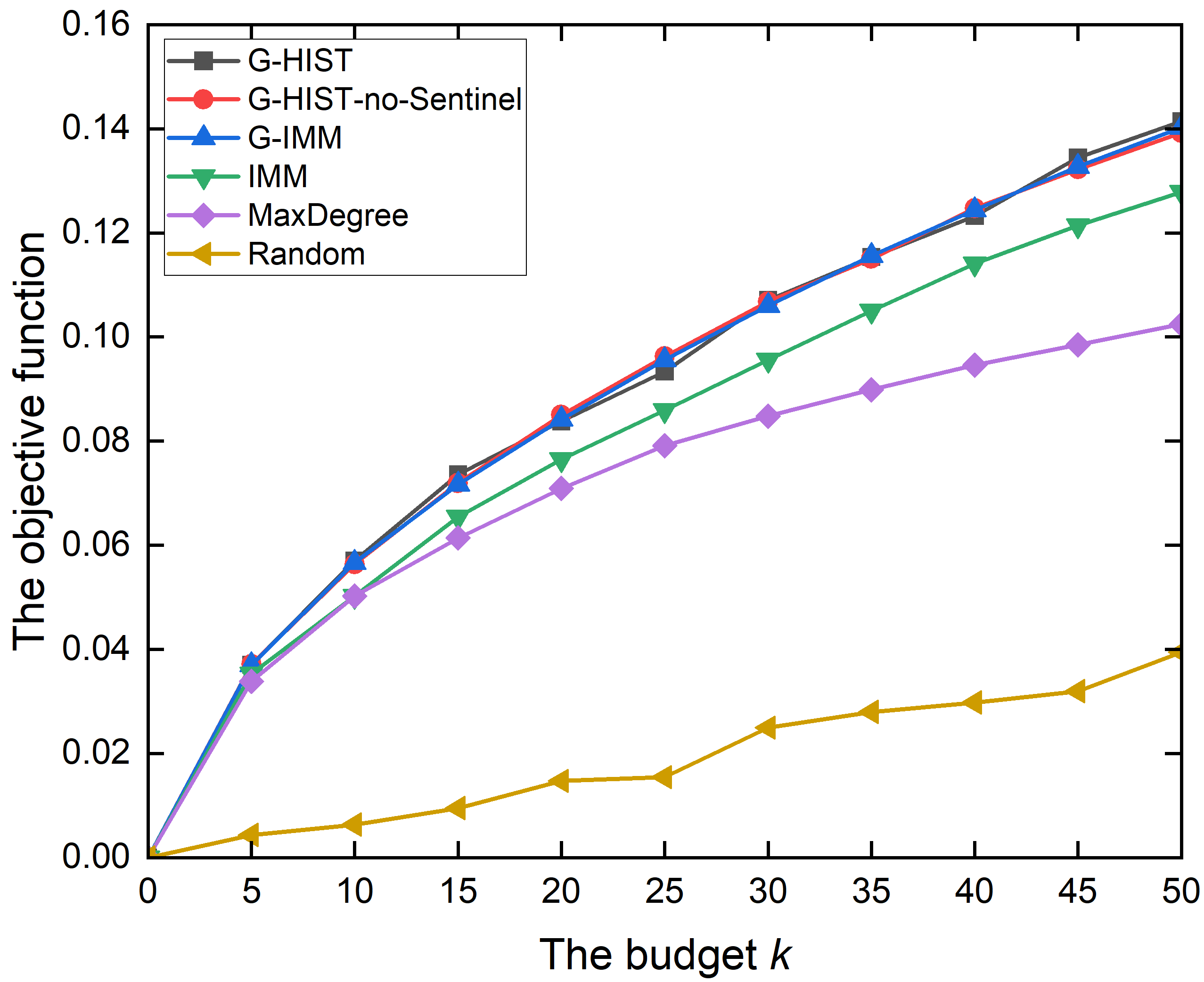}
		%\caption{fig1}
	}%
	\subfigure[Epinions, $Q_1$, Parameter 2]{
		\includegraphics[width=0.48\linewidth]{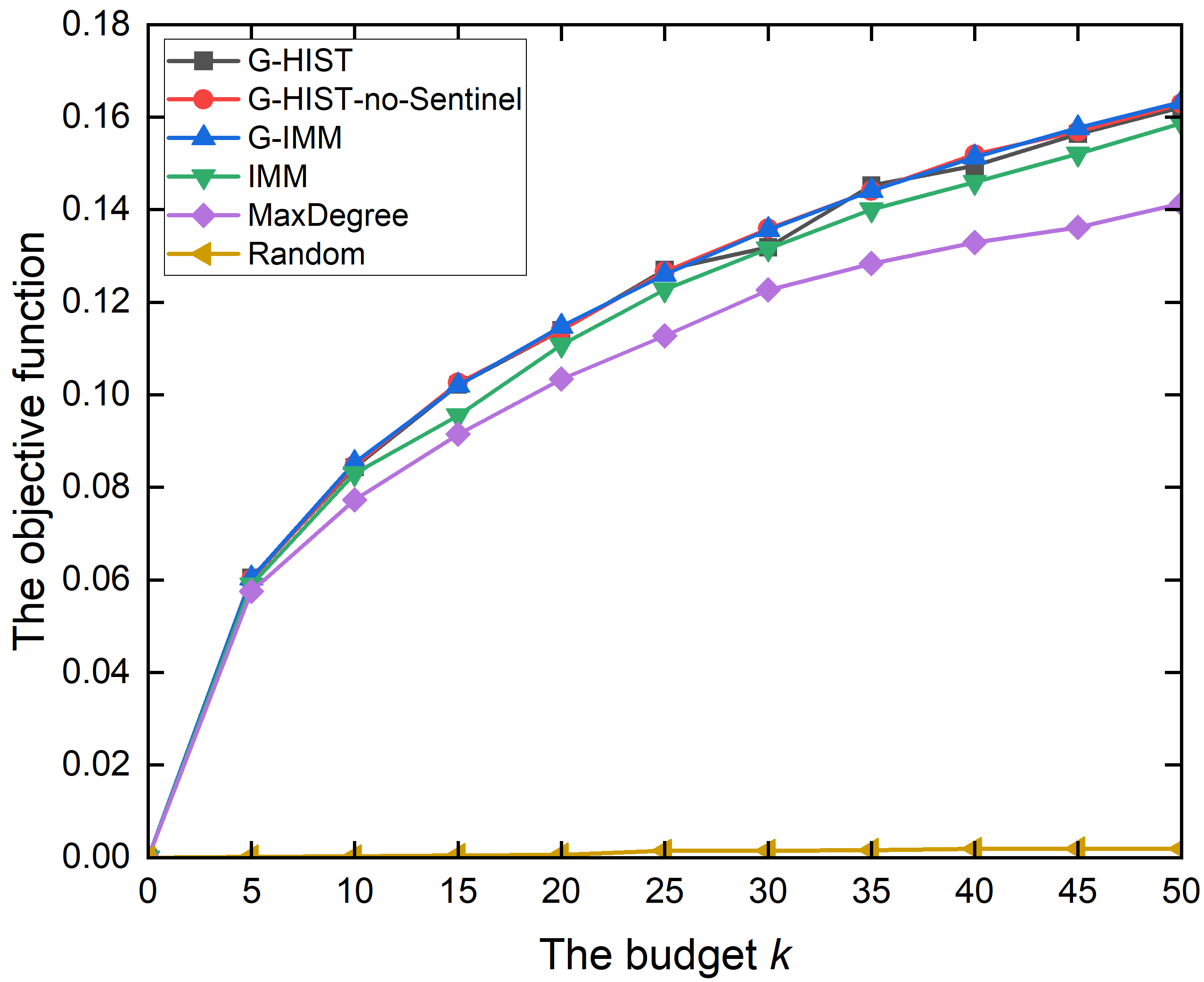}
		%\caption{fig1}
	}%
	\centering
	
	\subfigure[Epinions, $Q_2$, Parameter 2]{
		\includegraphics[width=0.48\linewidth]{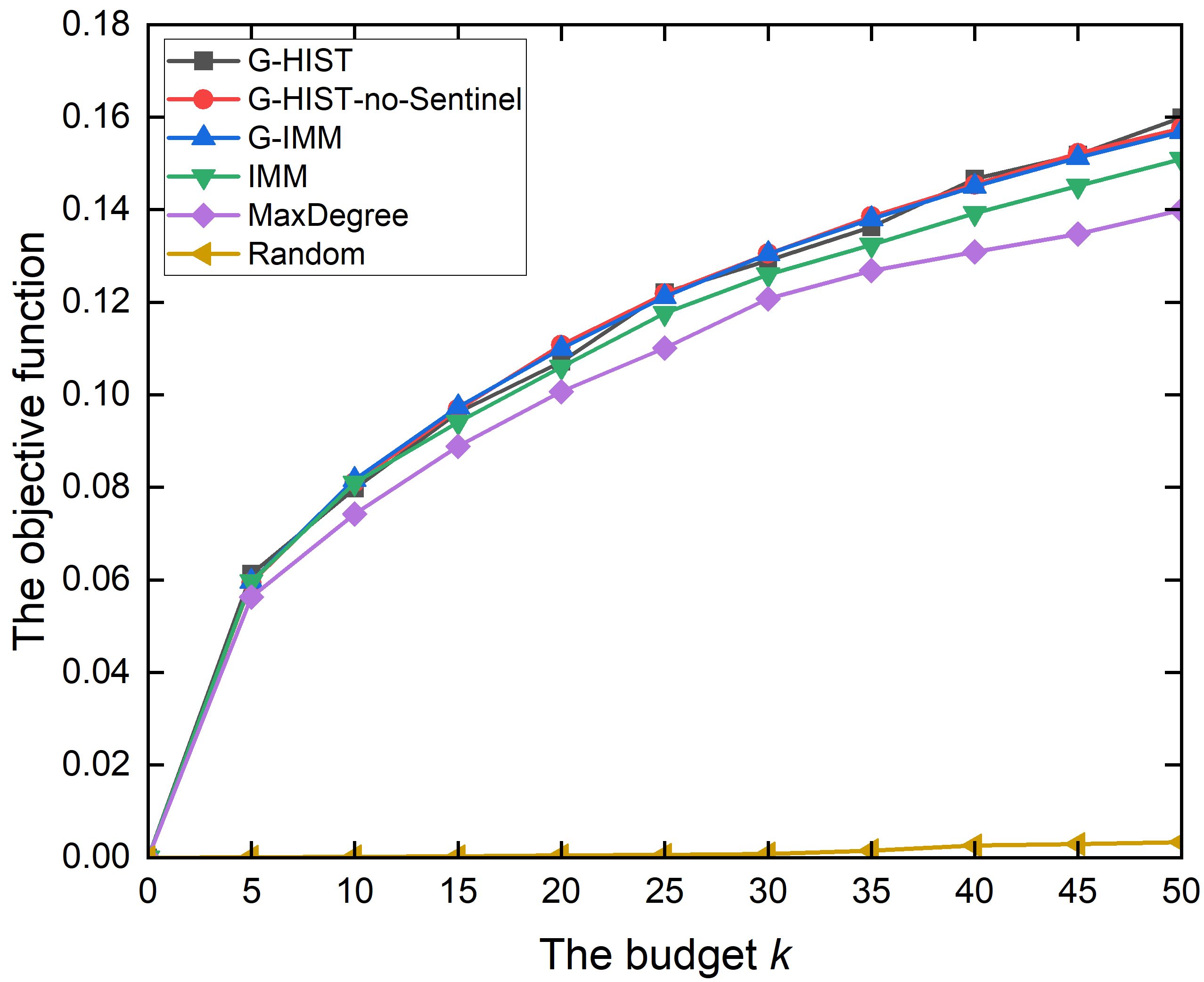}
		%\caption{fig1}
	}%
	\subfigure[Epinions, $Q_3$, Parameter 2]{
		\includegraphics[width=0.48\linewidth]{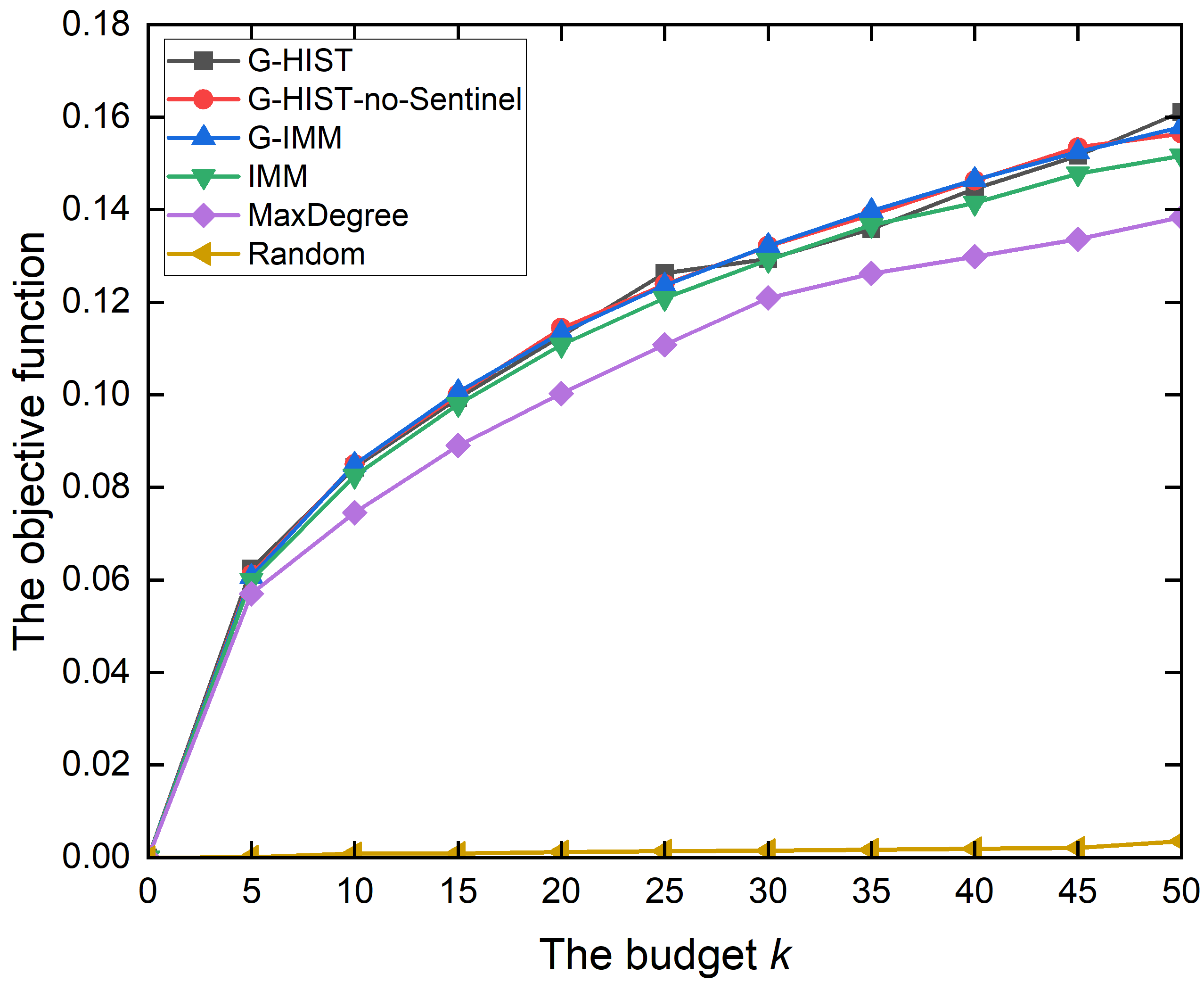}
		%\caption{fig1}
	}%
	\centering
	\caption{The performance comparison achieved by the HetHEPT and Epinions dataset under three community structures and parameter setting 2.}
	\label{fig3}
\end{figure}

\subsection{Experimental Results}
\subsubsection{Performance} Fig. \ref{fig1} and Fig. \ref{fig2} draw the performance comparison achieved by all kinds of algorithms with different settings under the NetScience dataset and Wiki dataset. In this part, we only consider these two smaller datasets since the Greedy and Greedy-IM algorithm are implemented by Monte Carlo simulations, which cannot be carried out in an acceptable running time for a large network. Here, we make the following observations. First, no matter which case and parameter setting we use, the performances obtained by G-HIST, G-HIST-no-Sentinel, and G-IMM algorithm are very close, and they are obviously better than other baselines. This means that our G-HIST algorithm and pre-selected sentinel set will not significantly reduce performance, even if the performance fluctuates slightly more than other baselines. Thus, we think that the G-HIST, G-HIST-no-Sentinel, and G-IMM are consistent in performance. Second, through comparing G-HIST with Greedy and comparing IMM with Greedy-IM, we can see that the performance of G-HIST (resp. IMM) is slightly better than that of Greedy (resp. Greedy-IM), which means that the methods based on sampling are slightly better than the corresponding methods based on Monte Carlo simulations. In fact, they should be roughly equal. This may be due to the insufficient number of simulations, which leads to inaccurate estimation of the objective function.

Third, through comparing G-HIST (resp. Greedy) with IMM (resp. Greedy-IM), we find that their performance is roughly similar under the parameter setting 1, where the performance of G-HIST is just a little better than the IMM. However, under the parameter setting 2, the performance of G-HIST is significantly superior to the IMM, and the performance gap between G-HIST and IMM will increase as the budget $k$ increases. This is because the weight and coefficient distributions are more uneven under the parameter setting 2, which leads to obvious bias. Here, the IMM (Greedy-IM) algorithm that ignores the requirement of diversity will result in obvious reduction of the objective function, but this effect is not significant under the parameter setting 1 where the weight and coefficient distributions are uniform. Fourth, the performance of G-HIST has large advantage over the MaxDegree and Random algorithm, and the performance gap between G-HIST and MaxDegree will increase as the budget $k$ increases.

\subsubsection{Running time}First, the computational cost of simulation-based methods, Greedy and Greedy-IM, is much higher than the corresponding sampling-based methods, G-IMM and IMM. Thus, we will not adopt simulation-based methods in the later experiment. Second, similar to OPIM-C \cite{tang2018online}, we find a similar trend between G-HIST-no-Sentinel and G-IMM. Under the equivalent setting, the running time of G-HIST-no-Sentinel is about 20\% to 30\% of G-IMM. Third, through comparing G-HIST with G-HIST-no-Sentinel, the average size of random G-RR sets at the second stage of G-HIST can be reduced by nearly one order of magnitude, and the running time can also be significantly improved.

\subsubsection{Scalability}Fig. \ref{fig3} draws the performance comparison achieved by sampling-based algorithms under the HetHEPT and Epinions dataset. In the large networks, our G-HIST has the same advantages in performance and running time as before, which further verifies its effectiveness.

\section{Conclusion}
To tackle the multiplicity of diversity in real social applications, in this paper, we first propose the Composite Community-aware Diversified IM (CC-DIM) problem, which is totally different from the traditional IM problem and Diversified IM problem. Even though its objective function is monotone and submodular, it is extremely hard to compute. Thus, we create a novel sampling method based on Generalized Reverse Reachable (G-RR) set to effectively estimate the objective function, and design a two-stage G-HIST algorithm to further improve the memory consumption and time efficiency by significantly reducing the average size of random G-RR sets. According to our theoretical analysis, the G-HIST returns a $(1-1/e-\varepsilon)$ approximate solution with at least $(1-\delta)$ probability in an acceptable running time. Finally, our experimental results verify our theories and demonstrate the effectiveness and correctness of our proposed algorithm over other the-state-of-art baselines.

However, in order to ensure the rigor of theoretical guarantee, our sampling and algorithm design are conservative. There is still a lot of room for improvement in the future.

\section*{Acknowledgment}

This work was supported in part by the CCF-Huawei Populus Grove Fund under Grant No. CCF-HuaweiLK2022004, the National Natural Science Foundation of China (NSFC) under Grant No. 62202055, the Start-up Fund from Beijing Normal University under Grant No. 310432104, the Start-up Fund from BNU-HKBU United International College under Grant No. UICR0700018-22, and the Project of Young Innovative Talents of Guangdong Education Department under Grant No. 2022KQNCX102.

\ifCLASSOPTIONcaptionsoff
  \newpage
\fi

\bibliographystyle{IEEEtran}
\bibliography{references}

\begin{IEEEbiography}[{\includegraphics[width=1in,height=1.25in,clip,keepaspectratio]{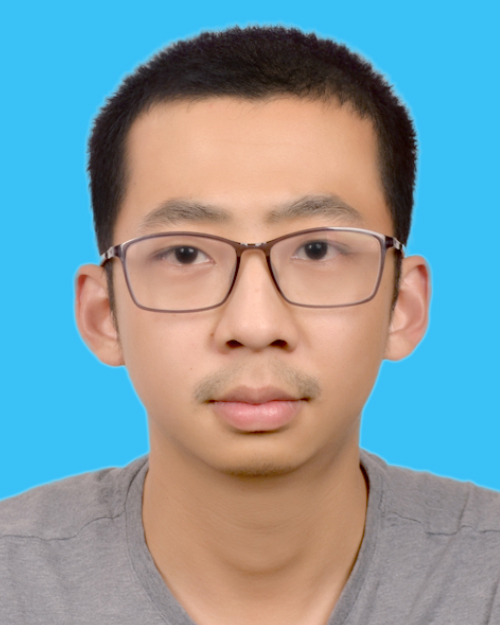}}]{Jianxiong Guo}
	received his Ph.D. degree from the Department of Computer Science, University of Texas at Dallas, Richardson, TX, USA, in 2021, and his B.E. degree from the School of Chemistry and Chemical Engineering, South China University of Technology, Guangzhou, China, in 2015. He is currently an Assistant Professor with the Advanced Institute of Natural Sciences, Beijing Normal University, and also with the Guangdong Key Lab of AI and Multi-Modal Data Processing, BNU-HKBU United International College, Zhuhai, China. He is a member of IEEE/ACM/CCF. He has published more than 40 peer-reviewed papers and been the reviewer for many famous international journals/conferences. His research interests include social networks, wireless sensor networks, combinatorial optimization, and machine learning.
\end{IEEEbiography} 

\begin{IEEEbiography}[{\includegraphics[width=1in,height=1.25in,clip,keepaspectratio]{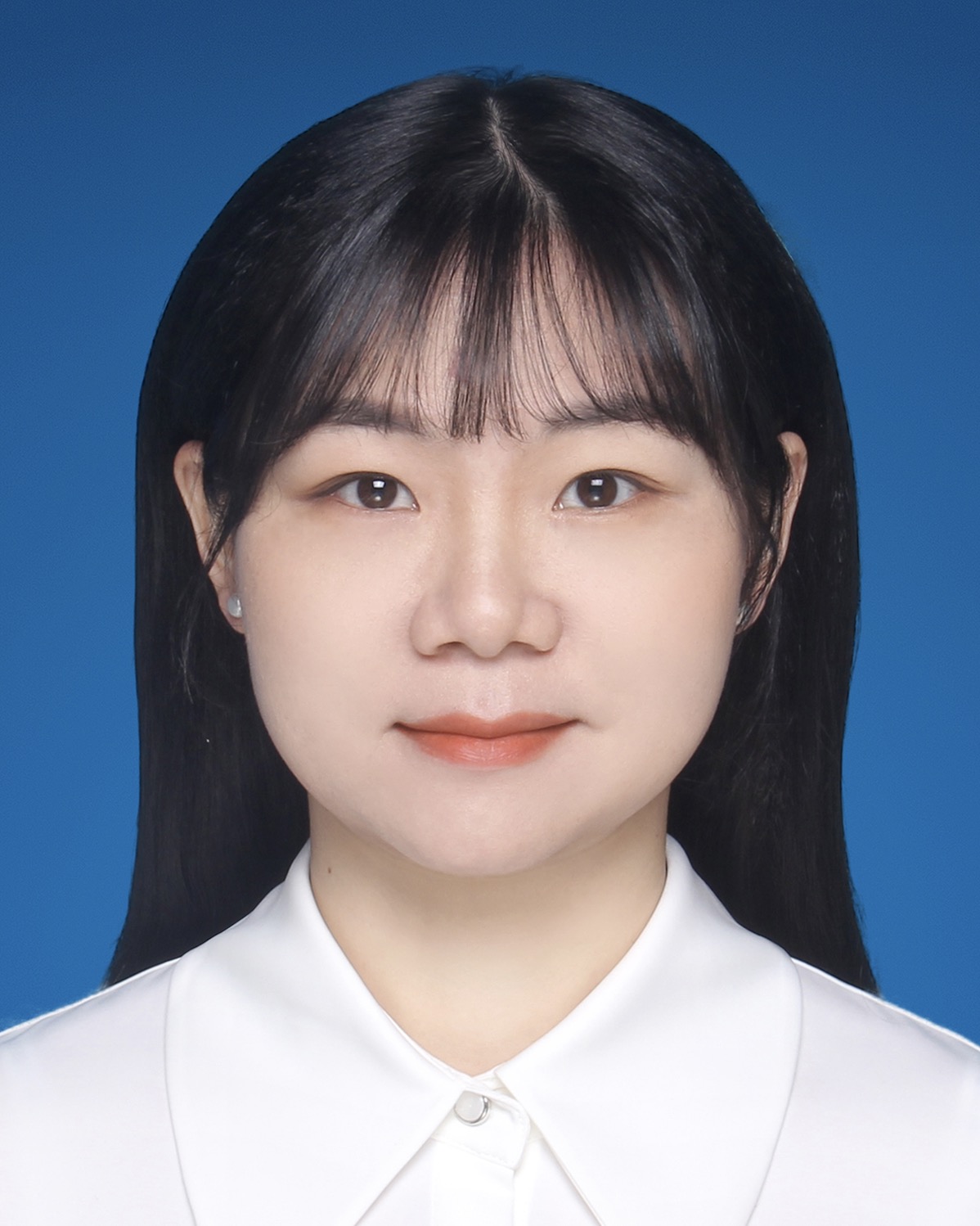}}]{Qiufen Ni}
is currently an Assistant Professor with the School of Computers, Guangdong University of Technology, Guangzhou, Guangdong 510006, China. She received the Ph.D. degree from School of Computers, Wuhan University in Dec. 2020. During Oct. 2018 to Oct. 2020, she was also a joint Ph.D. student in the Department of Computer Science, University of Texas at Dallas, Richardson, TX, USA. Her research interests include social networks, theoretical approximation algorithm design and analysis, and optimization problems in wireless networks. She serves in China Computer Federation as technical committee member in the Theoretical Computer Science branch committee.
\end{IEEEbiography} 

\begin{IEEEbiography}[{\includegraphics[width=1in,height=1.25in,clip,keepaspectratio]{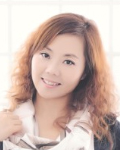}}]{Weili Wu}
	received the Ph.D. and M.S. degrees from the Department of Computer Science, University of Minnesota, Minneapolis, MN, USA, in 2002 and 1998, respectively. She is currently a Full Professor with the Department of Computer Science, The University of Texas at Dallas, Richardson, TX, USA. Her research mainly deals with the general research area of data communication and data management. Her research focuses on the design and analysis of algorithms for optimization problems that occur in wireless networking environments and various database systems.
\end{IEEEbiography}

\begin{IEEEbiography}[{\includegraphics[width=1in,height=1.25in,clip,keepaspectratio]{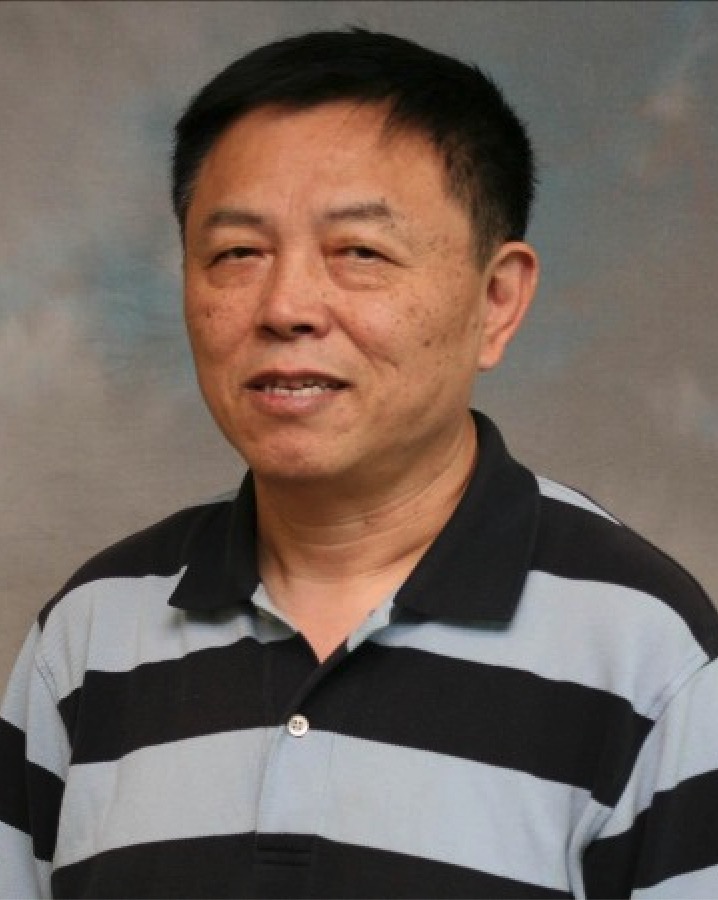}}]{Ding-Zhu Du}
	received the M.S. degree from the Chinese Academy of Sciences, Beijing, China, in 1982, and the Ph.D. degree from the University of California at Santa Barbara, Santa Barbara, CA, USA, in 1985, under the supervision of Prof. R. V. Book. Before settling at The University of Texas at Dallas, Richardson, TX, USA, he was a Professor with the Department of Computer Science and Engineering, University of Minnesota, Minneapolis, MN, USA. He was with the Mathematical Sciences Research Institute, Berkeley, CA, USA, for one year, with the Department of Mathematics, Massachusetts Institute of Technology, Cambridge, MA, USA, for one year, and with the Department of Computer Science, Princeton University, Princeton, NJ, USA, for one and a half years. Dr. Du is the Editor-in-Chief of the Journal of Combinatorial Optimization and is also on the editorial boards for several other journals.
\end{IEEEbiography}
\end{document}